\newcommand{\ignore}[1]{}
\newcommand{\dmax}{{d_{\max}}}
\newcommand{\Wmax}{{W_{\max}}}
\newcommand{\Rmax}{{R_{\max}}}
\newcommand{\bpi}{\boldsymbol{\pi}}
\newcommand{\findcandidatestem}{\textsc{FindCandidateStem}}
\newcommand{\genterm}{\textsc{GenerateCandidateStem}}
\newcommand{\genstem}{\textsc{generate-stem}}
\newcommand{\MQ}{\mathrm{MQ}}
\newcommand{\EQ}{\mathrm{EQ}}
\newcommand{\ALLPOS}{\textit{ALLPOS}}
\newcommand{\FCSM}{FCS_{\max}}
\newcommand{\FindRelevantVariable}{\textsc{FindRelevantVariable}}
\newcommand{\LearnDNF}{\textsc{LearnDNF}}
\newcommand{\Features}{\mathrm{Features}}
\title{
Faster exact learning of $k$-term DNFs\\
with membership and equivalence queries
}
\author{Josh Alman\thanks{Email: \texttt{josh@cs.columbia.edu}}
\\ \textsl{Columbia University} \and Shivam Nadimpalli \thanks{Email: \texttt{shivamn@mit.edu}} \\ \textsl{MIT} \and 
Shyamal Patel\thanks{Email: \texttt{shyamalpatelb@gmail.com}} \\ \textsl{Columbia University} \and Rocco A. Servedio\thanks{Email: \texttt{ras2105@columbia.edu}}\\ \textsl{Columbia University}}
\date{}
\begin{document}

\pagenumbering{gobble}
\maketitle

\begin{abstract}

In 1992 Blum and Rudich \cite{BlumRudich92} gave an algorithm that uses membership and equivalence queries to learn $k$-term DNF formulas over $\zo^n$ in time $\poly(n,2^k)$, improving on the naive $O(n^k)$ running time that can be achieved without membership queries \cite{Valiant:84}.
Since then, many alternative algorithms \cite{Bshouty:95,Kushilevitz:97,Bshouty:97,BBB+:00} have been given which also achieve runtime $\poly(n,2^k)$.

We give an algorithm that uses membership and equivalence queries to learn $k$-term DNF formulas in time $\poly(n) \cdot 2^{\tilde{O}(\sqrt{k})}$.
This is the first improvement for this problem since the original work of Blum and Rudich \cite{BlumRudich92}.

Our approach employs the Winnow2 algorithm for learning linear threshold functions over an enhanced feature space which is adaptively constructed using membership queries.
It combines a strengthened version of 
a technique that effectively reduces the length of DNF terms 
from the original work of  \cite{BlumRudich92} with a range of additional algorithmic tools (attribute-efficient learning algorithms for low-weight linear threshold functions and techniques for finding relevant variables from junta testing) and analytic ingredients (extremal polynomials and noise operators) that are novel in the context of query-based DNF learning.
\end{abstract}

\newpage

\pagenumbering{arabic}


\section{Introduction} \label{sec:intro}

\subsection{Background}

One of the most central and intensively studied problems in computational learning theory is that of learning an unknown \emph{Disjunctive Normal Form}, or \emph{DNF}, formula, i.e.~an OR of ANDs of Boolean literals.  The DNF learning problem was proposed in the seminal work of Valiant \cite{Valiant:84} that ushered in the modern field of computational learning theory; the problem subsequently garnered widespread interest due to the ubiquitous role of DNF as both a natural form of knowledge representation and a fundamental type of logical formula.  

Over the forty years that have passed since \cite{Valiant:84}, DNF learning has emerged as one of the most notorious problems in computational learning theory, as witnessed by the intensive research effort that has aimed at developing efficient DNF learning algorithms across a wide range of different learning models (see \Cref{sec:related-work} for a survey of some of this work). In particular, one important strand of research on learning DNF formulas has focused on the model of \emph{exact learning DNF with membership and equivalence queries}, which is the subject of the current paper.  
We give a detailed definition of this learning model in \Cref{sec:preliminaries}, but briefly, a \emph{membership query} is simply a black-box query for the value of $f$ at an input $x \in \zo^n$ that is specified by the learning algorithm.  In an \emph{equivalence query}, the learning algorithm specifies a hypothesis function $h: \zo^n \to \zo$ and the learning process ends (with the learner having succeeded) if $h \equiv f$, otherwise the learner is given an arbitrary \emph{counterexample} (a point $x \in \zo^n$ such that $h(x) \neq f(x)$).

While DNF learning has been intensively studied, the problem of learning an unknown and arbitrary $k$-term DNF formula turns out to be very challenging even for fairly small values of $k$, and progress on it has been quite limited.
In the original paper of Valiant \cite{Valiant:84},  de Morgan's law (which implies that any $k$-term DNF can be expressed as a $k$-CNF), feature expansion, and a simple reduction to learning conjunctions were used to give an $O(n^k)$-time algorithm that learns $k$-term DNF formulas in the exact learning model using equivalence queries only, without membership queries.  Improving on this result, Blum and Rudich \cite{BlumRudich92} (journal version in \cite{BlumRudich:95}) showed that by using both membership and equivalence queries, it is possible to achieve a $\poly(n,2^k)$ running time for learning $k$-term DNF formulas.  Note that with this running time, it is possible to learn $k=O(\log n)$-term DNF formulas in $\poly(n)$ time (as opposed to $k=O(1)$-term DNF formulas with the result of \cite{Valiant:84}).  Soon after \cite{BlumRudich92} quite a number of alternative algorithms achieving the same running time were given \cite{Bshouty:95,Kushilevitz:97,Bshouty:97,BBB+:00}, but none of those algorithms achieved a faster running time.\footnote{The above discussion focuses on the regime where $k$ is relatively small; for $k$ larger than roughly $n^{1/3}$, faster DNF learning algorithms are known, as described in \Cref{sec:related-work}.}

\subsection{Our result}

Our main result gives a more efficient learning algorithm for $k$-term DNF formulas, strictly improving on the results of \cite{BlumRudich92,Bshouty:95,Kushilevitz:97,Bshouty:97,BBB+:00}:

\begin{theorem} \label{thm:main}
There is an algorithm (\Cref{alg:Glorious-Learn-DNF}) that uses membership and equivalence queries to learn $k$-term DNF formulas over $\zo^n$ in time $\poly(n,2^{\tilde{O}(\sqrt{k})})$ in the exact learning model.
\end{theorem}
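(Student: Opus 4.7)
The plan is to recast the learning problem as one of learning a linear threshold function (LTF) over an enhanced feature space, and to run an attribute-efficient LTF learner (Winnow2) on this space. Concretely, I would introduce features indexed by conjunctions of literals (``stems'') of length at most $d = \tilde O(\sqrt k)$, of which there are $N \le (2n)^{d} = n^{\tilde O(\sqrt k)}$ in total. I would then show that the target $k$-term DNF $f$ can be expressed as an LTF over these features with total integer weight $W = 2^{\tilde O(\sqrt k)}$, so that Winnow2 converges after at most $O(W^2 \log N) = 2^{\tilde O(\sqrt k)} \cdot \log n$ mistakes. Coupled with a $\poly(n) \cdot 2^{\tilde O(\sqrt k)}$-time implementation of each round of Winnow2, this yields the claimed overall running time.

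\textbf{Structural representation.} The key representational ingredient is to establish that every $k$-term DNF admits an LTF representation over stems of length $\tilde O(\sqrt k)$ whose total absolute weight is $2^{\tilde O(\sqrt k)}$. To achieve this I plan to use a strengthened version of the Blum--Rudich term-shortening technique: for each term $T_i$ of the target DNF, find via membership queries a collection of ``kernel'' subterms of length $\tilde O(\sqrt k)$ whose combination agrees with $T_i$ on the support of $f$. The $\sqrt k$-savings (as opposed to the $\log k$-shortening already present in Blum--Rudich) comes from combining the kernel-finding procedure with an extremal (Chebyshev-style) polynomial of degree $\tilde O(\sqrt k)$ that expresses an OR of $k$ ANDs as a linear combination of stems with integer coefficients summing to $2^{\tilde O(\sqrt k)}$; together with noise-operator estimates to control how kernels interact across terms, this realizes the ``extremal polynomials and noise operators'' ingredient advertised in the abstract.

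\textbf{Algorithmic implementation.} A naive implementation of Winnow2 over all $n^{\tilde O(\sqrt k)}$ features would already be too slow, so the algorithm must maintain and update only those features with nonzero current weight. For each equivalence-query counterexample $x$, I would invoke a \findcomp-style subroutine built from (i) the Blum--Rudich idea of iteratively flipping free coordinates and performing membership queries to discover a satisfied term, followed by (ii) a relevant-variable finder (borrowed from the junta-testing literature) that extracts, from within such a term, a stem of length $\tilde O(\sqrt k)$ that is influential for $f$. These newly discovered stems are added to the active feature pool, and Winnow2's multiplicative update is performed only over the active features. Because the mistake bound is $2^{\tilde O(\sqrt k)}$, the active feature set never grows beyond $2^{\tilde O(\sqrt k)}$ stems, keeping each round cheap.

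\textbf{Main obstacles.} The hardest part will be the structural result: arguing that stems of length only $\tilde O(\sqrt k)$ (rather than $\log k$ or $k$) suffice, with weights as small as $2^{\tilde O(\sqrt k)}$, so that Winnow2's guarantees kick in at the correct scale. Standard extremal-polynomial approximations to an OR of ANDs yield the right \emph{degree} but typically produce coefficients of magnitude $2^{\Omega(k)}$ in the monomial basis; pushing the weights down to $2^{\tilde O(\sqrt k)}$ requires a careful choice of polynomial together with a rounding analysis and is where the interaction with the MQ-based term-shortening procedure is most delicate. A secondary obstacle is designing the stem-extraction subroutine so that it is query-efficient \emph{and} provably returns stems that correspond to the features of the structural LTF, despite the interference of the other $k-1$ terms -- this is where noise-operator bounds on the influence of subterms should be leveraged.
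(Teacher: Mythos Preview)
Your proposal has a genuine gap at its core: the structural claim that every $k$-term DNF is expressible as an LTF over the \emph{fixed} feature space of all conjunctions of length $\tilde O(\sqrt{k})$ is simply false. This is exactly the degree-$\tilde O(\sqrt{k})$ PTF representation, and the Minsky--Papert tribes function (a $k$-term DNF with terms of length $k^2$) already has PTF degree $\Omega(k)$. The paper explicitly flags this as the ``First challenge,'' and further cites Razborov--Sherstov to rule out \emph{any} fixed feature set of size $2^{o(k)}$; so no amount of Chebyshev cleverness can rescue the representation over a feature space that is oblivious to the target. Your description of the MQ-based ``kernel subterms'' does not escape this barrier, because you have the role of stems backwards: in the paper a stem is a \emph{long} conjunction $T'_i\subseteq T_i$ missing at most $2k$ literals, adaptively discovered from a positive example, and each augmented monomial is $T'_i$ times a degree-$\tilde O(\sqrt{k})$ monomial over a small set $R_{T'_i}$ of auxiliary variables. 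The Chebyshev polynomial is applied only to the short \emph{remainder} $T_i\setminus T'_i$ of length $\le 2k$, which is what yields degree and weight $2^{\tilde O(\sqrt{k})}$ over the \emph{target-specific} feature space.

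There is a second, independent gap in the running-time analysis. Even granting your representation, Winnow2's per-round cost is linear in the number of features (all features equal to $1$ on the counterexample get a multiplicative update, and there can be $n^{\tilde O(\sqrt{k})}$ of them); ``maintaining only active features'' does not circumvent this, because initially all weights are equal and positive. Your approach therefore has running time $n^{\tilde O(\sqrt{k})}$, which the paper explicitly discusses and rejects (it is worse than $\poly(n,2^k)$ for $k\ll \log^2 n$). The paper's fix is to keep the feature space itself small---of size $\poly(n)\cdot 2^{\tilde O(\sqrt{k})}$---by (i) generating only $n\cdot 2^{\tilde O(\sqrt{k})}$ candidate stems via an improved Blum--Rudich sweep with success probability $2^{-\tilde O(\sqrt{k})}$, and (ii) for each stem $T'$, using the noise operator and junta-style binary search to grow a set $R_{T'}$ of only $O(k^2\log k)$ auxiliary variables, so that the number of augmented monomials per stem is $\binom{O(k^2\log k)}{\tilde O(\sqrt{k})}=2^{\tilde O(\sqrt{k})}$ rather than $\binom{n}{\tilde O(\sqrt{k})}$. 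The attribute-efficiency of Winnow2 is then used not to tolerate a huge feature space, but to bound the number of restarts (and hence the number of candidate stems generated) as the feature space is grown adaptively.
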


Achieving a faster running time than $\poly(n,2^k)$ (or equivalently, learning $k$-term DNF formulas in $\poly(n)$ time for $k=\omega(\log n)$) is a well-known open problem that has been mentioned in a number of previous works, see e.g.~\cite{BlumRudich:95,Kushilevitz:97,BBB+:00,BBBEncyclopedia}. \Cref{thm:main} gives the first progress on this problem since the original 1992 paper of Blum and Rudich \cite{BlumRudich:95}.  
Using \Cref{thm:main}, it is possible to learn $k=\tilde{O}((\log n)^2)$-term DNF formulas in $\poly(n)$ time.  

Our approach combines a wide range of different conceptual and algorithmic ingredients, including

\begin{itemize}

\item the attribute-efficient Winnow2 algorithm \cite{lit87} for learning low-weight, low-degree polynomial threshold functions; 
\item an adaptive feature expansion computed by an improved variant of an algorithmic technique from \cite{BlumRudich92} for effectively reducing the length of terms; 

\item the use of extremal (Chebyshev) polynomials to achieve a square-root degree savings in polynomial threshold function representations of DNF formulas; and 

\item the use of noise operators and algorithmic ideas from junta testing to identify small sets of ``useful'' variables.

\end{itemize}

While we will delve into more detail of how we use these techniques in \Cref{sec:techniques}, we briefly note that many of these steps are novel in the context of exact learning. In particular, while a number of prior works learn functions as a polynomial threshold function over some set of features, we are unaware of any earlier learning algorithms (in any model of learning) that \emph{adaptively} compute a feature expansion and learn a polynomial threshold function over the computed features. Also, to the best of our knowledge, ours is the first paper on membership query based learning that uses constructions based on extremal polynomials such as Chebyshev polynomials.  
We further remark that we are unaware of prior work that uses the Bonami-Beckner noise operator to facilitate finding ``useful'' variables for distribution-free or exact learning (although the  work of \cite{BLQT22}, on agnostically learning decision trees, does something similar in a uniform-distribution context, and related ideas have been also used in the context of (uniform-distribution) tolerant property testing of juntas \cite{DMN19,ITW21,nadimpalli2024optimal}).

We give a detailed overview of our approach in \Cref{sec:techniques}.

\subsection{Related work} \label{sec:related-work}

The prior works that are closest to ours, in terms of the learning problem (learning an arbitrary $k$-term DNF formula for $k$ as large as possible) and learning model (exact learning using membership and equivalence queries) being considered, are the aforementioned papers of \cite{BlumRudich92,Bshouty:95,Kushilevitz:97,Bshouty:97,BBB+:00}, all of which gave $\poly(n,2^{k})$-time algorithms.  These works used a broad range of different technical approaches including Bshouty's ``monotone theory'' \cite{Bshouty:95}, divide-and-conquer approaches \cite{Bshouty:97}, reductions to learning finite automata \cite{Kushilevitz:97}, and reductions to learning multiplicity automata \cite{BBB+:00}.   We remark that all of the approaches taken in \cite{BlumRudich92,Bshouty:95,Kushilevitz:97,Bshouty:97,BBB+:00} are combinatorial in nature; one of our innovations, as described in \Cref{sec:techniques}, is to combine combinatorial reasoning (in particular, an extension of the ideas in the original \cite{BlumRudich92} paper) with analytic ingredients, including arguments about extremal (Chebyshev) polynomials which previously proved useful in a different line of work on DNF learning as described below.

As was alluded to earlier, if $k$ is quite large then faster-than-$\poly(n,2^k)$-time algorithms have long been known, even without using membership queries.
Bshouty \cite{bsh96} used techniques for learning decision lists to exact-learn $k$-term DNF in time $2^{O(\sqrt{n \log k} (\log n)^{3/2})}$ with equivalence queries only, and (in the PAC model) Tarui and Tsukiji \cite{TaruiTsukiji:99} gave a boosting-based algorithm which achieves a similar runtime of $2^{O(\sqrt{n} \log n \log k)}$. Most recently,  Klivans and Servedio \cite{KlivansServedio:04jcss} combined a construction based on Chebyshev polynomials with Bshouty's decision list approach to give a $2^{O(n^{1/3} \log k \log n)}$-time exact learning algorithm using equivalence queries only.  We remark that even if $k=O(1)$, all of these approaches run in time at least $2^{n^{1/3} \log n}$, and hence for small $k$ they are much slower than the $\poly(n,2^k)$ runtime of the \cite{BlumRudich92,Bshouty:95,Kushilevitz:97,Bshouty:97,BBB+:00} approaches (and of our algorithm).

Returning to the membership and equivalence query model, we remark that a number of other prior works have considered both algorithms and hardness results for \emph{proper} exact learning (meaning that each hypothesis must itself be a DNF formula) of $k$-term DNF formulas with membership and equivalence queries for various different values of $k$; see, for example, \cite{Angluin92,Berggren93,BGHM96,PR96,HellersteinRaghavan:05}.

Finally, we mention that a wide range of other algorithms and hardness results have been given for many other variants of the DNF learning problem. A non-exhaustive list of the problem variants that have been considered would include learning under the uniform distribution, product distributions, and related distributions \cite{ver90,jac97,BJT:99,KlivansServedio:03ml,Gavinsky:03,Feldman07,Feldman12,de2014learning,ELR15,LLR19}; learning from uniform quantum superpositions of labeled examples \cite{BshoutyJackson:99,JTY:02}; and learning from labeled examples generated according to a  random walk over the Boolean hypercube $\{0,1\}^n$ \cite{BFH02,BMO+:05}.  
Many papers (including \cite{Valiant:84,PagalloHaussler89,Hancock:91,HancockMansour:91,AizensteinPitt91,AHP92,KushilevitzRoth:93,KMP:94,Khardon:94,AizensteinPitt:95,PR95,ABKKPR98,FeigelsonHellerstein98,Verbeurgt:98,dommispit99,SakaiMaruoka:00,BBB+:00,Servedio:04iandc,JacksonServedio:06long, JLSW11:dam, Sellie:08,Sellie-2009,HKSS12,ANPS25}) have also considered algorithms for learning a wide range of different restricted subclasses of DNF formulas, including  monotone DNF; read-once, read-twice, and read-$k$ DNF; disjoint DNF; satisfy-$j$ DNF; random DNF; and DNF in which all terms have the same length, among others.   These problems have been studied in a number of different models, both with and without membership queries.

\section{Our techniques} \label{sec:techniques}

At a very high level, our approach to learning an unknown $k$-term DNF $f$ is based on low-degree \emph{polynomial threshold functions}, referred to subsequently as PTFs. We view $f$ as a PTF of degree $\tilde{O}(\sqrt{k})$ which has integer coefficients that are not too large, of  total magnitude at most $W = 2^{\tilde{O}(\sqrt{k})}$; we refer to this total magnitude as the \emph{weight} of the PTF.  We use the well-known Winnow2 algorithm for learning low-weight linear threshold functions \cite{lit87} to learn based on this representation, via the simple observation that a low-weight low-degree polynomial threshold function can be viewed as a low-weight linear threshold function over an expanded feature set consisting of all low-degree monomials.  Recall that Winnow2 can learn a linear threshold function with total integer weight $W$ over $N$ features in time $\poly(N,W)$, using at most $\poly(W) \cdot \log N$ equivalence queries; see \Cref{thm:Winnow2} for a detailed statement.  (We remark that the $\log N $ dependence of Winnow2 in the number of equivalence queries it makes is often referred to as \emph{attribute-efficiency}; as we will see near the end of this section, Winnow2's attribute-efficiency plays a crucial role in the success of our overall approach.)  

However, there are several significant hurdles that need to be addressed in order to get such a strategy to work. 

\medskip
\noindent
{\bf First challenge: 
 High PTF degree.}
 The first and most obvious hurdle is that the PTF degree of a $k$-term DNF may be as large as $\Omega(k)$. To see this, we recall that for any $k \leq n^{1/3},$ the read-once ``tribes'' DNF which has $k$ ``tribes'' (terms) each of length $k^2$ is well known \cite{MinskyPapert:68} to have PTF degree $\Omega(k)$. How, then, can we view $f$ as a PTF of degree only $\tilde{O}(\sqrt{k})$?
 
To get around this barrier, a natural idea is to work with an expanded set of base features (going beyond the $n$ Boolean variables $x_1,\dots,x_n$), with the hope that $f$ can be expressed as a lower-degree PTF over the expanded feature space.
However, another obstacle to implementing such an idea quickly arises; this is the result of Razborov and Sherstov (Corollary~1.3 of \cite{RazborovSherstov10}), whose proof implies that for any $k \leq n^{1/3}$ and any fixed set of real-valued functions/features (not just low-degree monomials), if all $k$-term DNF formulas over $\zo^n$ can be expressed as linear threshold functions over those features, then the set must have at least $2^{\Omega(k)}$ elements.
This tells us that it will not be possible to ``obliviously'' give a feature expansion that will provide a learning algorithm with the desired $\poly(n,2^{\tilde{O}(\sqrt{k})})$ running time bound. Instead, a crucial aspect of our approach is to use membership queries to \emph{adaptively} construct a suitable set of features.

\medskip
\noindent
{\bf The solution:  Adaptively finding ``stems'' using membership queries.}
With the goal of achieving PTF degree $\tilde{O}(\sqrt{k})$, we take the following approach: For each term $T_i$, we say that a \emph{valid stem} for $T_i$ is a term $T'_i$ such that (a) $T'_i \subseteq T_i$ (i.e.~every literal in $T'_i$ also belongs to $T_i$), and (b) $|T_i \setminus T'_i| \leq 2k$, i.e.~$T'_i$ is only missing at most $2k$ literals from $T_i$. (Ideally we would like this parameter controlling the number of missing literals to be as small as possible; in \Cref{sec:findstem} we will see that our techniques allow us to take this parameter to be as small as $2k.$)
Note that if the length $|T_i|$ of a term $T_i$ is at most $2k$, then the empty term is a valid stem for $T_i$.
Our plan, at a very high level, is to find a valid stem $T'_i$ for each term $T_i$.  

To see why this would be helpful, consider first the special case in which every term in the DNF has length at most $2k$ (so the empty stem is all that we need; or equivalently, we do not need to use stems at all).  Using constructions of extremal polynomials which were first applied to DNF learning implicitly in the work of Tarui and Tsukiji \cite{TaruiTsukiji:99} and then explicitly by Klivans and Servedio \cite{KlivansServedio:04jcss}, it is not hard to show that any  $k$-term  DNF in which each term has length $2k$ can be expressed as a PTF of degree $O(\sqrt{k} \cdot \log k)$. In the general case, though, some or all terms of the DNF may have length greater than $2k$.  For each such term $T_i$, we attempt to discover a valid stem $T'_i$ of $T_i$; it is not too difficult to show (see \Cref{sec:structural})
that if we indeed had a valid stem for each term $T_i$, then a simple extension of the \cite{KlivansServedio:04jcss} PTF construction gives an ``augmented'' PTF of degree $\tilde{O}(\sqrt{k})$, where each ``augmented'' monomial in the augmented PTF consists of a suitable valid stem together with at most $\tilde{O}(\sqrt{k})$ of the original variables $x_1,\dots,x_n$.
Moreover, the weight of this augmented PTF is at most $2^{\tilde{O}(\sqrt{k})}$, as desired.

To find the desired valid stems, we build on an algorithmic idea from \cite{BlumRudich92}. Roughly speaking, they gave a membership query based procedure which, given as input a string $y$ that satisfies a \emph{unique} term $T_i$, finds a valid stem for $T_i$.  Of course, in order for this to be useful, one needs a way to find an assignment $y$ that satisfies a unique term; \cite{BlumRudich92} gave a randomized sub-procedure that succeeds in finding such an assignment with probability $2^{-O(k)}$, so by running this sub-procedure $2^{O(k)}$ times, it is possible to find a valid stem with high probability.  (During these repetitions, the \cite{BlumRudich92} approach also finds $n \cdot 2^{O(k)}$ other candidate stems, most of which are presumably not valid stems for any term of $f$.)  It is clear that  this $2^{O(k)}$ runtime is too slow for our purposes.  Instead, we use a slightly different algorithmic approach (see \Cref{alg:find-stem}).
Via a delicate analysis which carefully exploits a randomized ordering of variables used by our algorithm (see the proof of \Cref{lem:find-stem} in \Cref{sec:findstem}), we are able to show that our procedure has a non-trivial chance of significantly decreasing the number of terms satisfied by our initial input string $y$.  While the procedure may not ultimately reach an assignment satisfying a \emph{unique} term, intuitively we are able to show that with non-trivial probability, we reach a point at which all satisfied terms are ``sufficiently similar'' that it is possible to output a valid stem.  This improved stem-finding sub-procedure (this is essentially step~2 of \Cref{alg:find-stem}) succeeds in finding a valid stem with probability $2^{-\tilde{O}(\sqrt{k})},$ rather than $2^{-O(k)}$ as in \cite{BlumRudich92}.  So by repeating only roughly $2^{\tilde{O}(\sqrt{k})}$ times, our overall algorithm \Cref{alg:find-stem} is able to find a valid stem with extremely high probability.  (Like \cite{BlumRudich92}, our approach also finds many --- in our case, roughly $n \cdot 2^{\tilde{O}(\sqrt{k})}$ many --- other candidate stems, most of which are presumably not valid stems.)

So at this point, let us suppose that we have identified a set of $n \cdot 2^{\tilde{O}(\sqrt{k})}$ candidate stems, among which are valid stems for each of the $k$ terms of $f$.  Another challenge arises:

\medskip
\noindent
{\bf Second challenge:  Too many possible augmented monomials.}
Recall that the augmented PTFs mentioned earlier are composed of augmented monomials, each of which contains a single candidate stem together with at most $\tilde{O}(\sqrt{k})$ of the original variables $x_1,\dots,x_n$.  For a given candidate stem there are ${n \choose \tilde{O}(\sqrt{k})} \approx n^{\tilde{O}(\sqrt{k})}$ possible augmented monomials of this sort, so if we were to run the Winnow2 algorithm over a feature space consisting of all such possible augmented monomials, the running time of this ham-fisted approach would be at least $n^{\tilde{O}(\sqrt{k})}$ (since the running time of Winnow2 is at least linear in the dimension of its feature space).  While $n^{\tilde{O}(\sqrt{k})}$ is a faster running time than $\poly(n,2^k)$ for $k$ larger than roughly $(\log n)^2$, such a running time would be worse than the $\poly(n,2^k)$ runtime of \cite{BlumRudich92,Bshouty:95,Kushilevitz:97,Bshouty:97,BBB+:00} for smaller values of $k$; in particular, this would not give a $\poly(n)$ time algorithm for any values of $k$ that are $\omega_n(1)$, which would be a significant drawback.

So the second challenge we face, then, is that there are too many possible augmented monomials.  How can this be overcome?

\medskip
\noindent
{\bf The solution:  Using the noise operator and membership queries to identify a small set of auxiliary monomials for each stem.}
To deal with this challenge, let us first return to the same special case considered earlier, in which every term in the DNF has length at most $2k$.  In this case, since there are only $k$ terms and each term has length at most $2k$, there are at most $O(k^2)$ relevant variables in the entire DNF --- in other words, the DNF is an $O(k^2)$-junta.  Thus, it is natural to look to the junta testing literature for tools that may be useful; and indeed, in this situation it is possible to find all $\poly(k)$ many relevant variables by using membership queries to perform a simple binary search, as is frequently done in junta testing (see e.g. \cite{FKRSS03,Blaisstoc09,liu2018distribution, bshouty2019almost}).\footnote{We remark that in the context of $k$-junta testing, it is of paramount importance to have no dependence on $n$ at all in the query complexity. In our learning setting, it is fine to have even a $\poly(n)$ dependence on $n$ in the number of queries we make, so we can sidestep some of the difficulties that arise in junta testing with our membership query based search.}

Of course, in the real setting of an arbitrary $k$-term DNF, the function $f$ may not be an $O(k^2)$ junta (it may even depend on all $n$ variables); again, we must deal with long terms.  We do this by returning to the idea of stems: the idea of our approach is that for each candidate stem, we maintain a set $R_{T'}$ of auxiliary variables that are associated with that candidate stem, and 
\begin{quote}
($\star$) we only use, as features for Winnow2, augmented monomials which consist of some candidate stem $T'$ together with at most $\tilde{O}(\sqrt{k})$ variables \emph{all of which must belong to its associated set $R_{T'}$}. 
\end{quote}
(Upon first creating each candidate stem $T'$, the set $R_{T'}$ of auxiliary variables for it is initialized to the empty set.) 

The crucial properties that we must ensure are that no set $R_{T'}$ ever grows too large, and that eventually we get ``the right variables'' in the sets $R_{T'}$ for which $T'$ is an actual stem.  More precisely, our desiderata are that (a) every set $R_{T'}$ always has size at most $\poly(k)$, and (b) for each term $T_i$ of $f$, there is a candidate stem $T'_i$ that is a valid stem for $T_i$ for which the associated set $R_{T'_i}$ grows to eventually contain all the variables in $T_i \setminus T'_i$ (recall that if $T'_i$ is a valid stem of $T_i$, then $|T_i \setminus T'_i| \leq 2k$).  Given these desiderata, for each stem $T'$ the total number of augmented monomials of type ($\star$) described above is at most 
\[
\overbrace{n \cdot 2^{\tilde{O}(\sqrt{k})}}^{\small{\text{\# of candidate stems}}} \cdot \overbrace{{\poly(k) \choose \tilde{O}(\sqrt{k})}}^{\small{\text{\# of augmented monomials for a given candidate stem}}}=n \cdot 2^{\tilde{O}(\sqrt{k})};
\]
this is the number of features for Winnow2, so Winnow2 at least has a chance of running in $\poly(n,2^{\tilde{O}(\sqrt{k})})$ time as desired.

Item (a) above is easy to ensure simply by never growing a set $R_{T'}$ once it reaches a certain $\poly(k)$ size.
For item (b), in \Cref{sec:find-relevant-variables} we give an algorithm (\Cref{alg:find-relevant-vars}) which, very roughly speaking, given as input a pair $(T',R_{T'})$ where $T'$ is a stem of some term $T$ in $f$ and $R_{T'}$ is a set of auxiliary variables that does not yet contain all of the missing variables in $T \setminus T'$, updates $R_{T'}$ by adding to it a new variable from $T \setminus T'$.  (This is something of an oversimplification; the algorithm actually also requires a positive and a negative input for $f$ that both satisfy the stem $T'$ and also have some other property; see \Cref{lemma:findessentialvariables} for a precise statement.)  
While we defer a detailed description of the algorithm to \Cref{sec:find-relevant-variables}, we make the following remarks:  the algorithm crucially  

\begin{itemize}

\item Uses a \emph{restriction} of $f$, which we denote $f_{T'}$, that satisfies the stem $T'$ --- by working with such a restriction, we effectively convert $f$ into a function for which the term $T$ is now a ``short term'' (of length at most $2k$); and 

\item Applies \emph{Bonami-Beckner noise} (see \Cref{def:noise}) to the restricted function $f_{T'}$. Intuitively, the utility of Bonami-Beckner noise for us in this context is the following:  As suggested by the special case above, \Cref{alg:find-relevant-vars} finds the new variable using a binary search type procedure, which requires that we start with both a satisfying assignment (call this $y$) of some short term of $f_{T'}$ and an unsatisfying assignment (call this $z$) of $f_{T'}$. If all the terms of $f_{T'}$ are short, this will find us a new relevant variable appearing in some short term, as desired. Unfortunately, in the general case, $y$ and $z$ could be chosen adversarially in such a way that the binary search procedure would give us a variable from a long term. In turn, this is problematic as we could potentially pick up $\Omega(n)$ variables in this way. To circumvent this, we work with a ``noised'' version of $f_{T'}$ and perform binary search to find a variable which, when flipped, changes the value of the noised function significantly. We can then show that flipping any variable that is only involved in long terms can only change the noised function negligibly (cf. \Cref{claim:L14}).

\end{itemize}

\medskip
\noindent
{\bf Third challenge:  Putting the pieces together.}
At this point, we have an algorithm (our \Cref{alg:find-stem}) that given a positive example $y$ for the DNF outputs a candidate list of stems; and an algorithm for ``growing'' suitable sets of auxiliary variables for our candidate stems (our \Cref{alg:find-relevant-vars}). Moreover, we know that if we indeed find a valid stem for every term in $f$ and correctly identify the auxiliary variables for each valid stem, then $f$ can be written as an augmented PTF of degree $\wt{O}(\sqrt{k})$ and weight $2^{\wt{O}(\sqrt{k})}$. As such our final challenge is to piece these together with an appropriate algorithm for learning halfspaces in the exact learning model. Perhaps, the most natural approach to do this would be through the classic Perceptron algorithm.\footnote{We earlier alluded to the fact that we will use Winnow2, but we encourage the reader to humor us here to better understand why we made this choice.} Recall that the Perceptron algorithm makes at most $\poly(W,|\calF|)$ mistakes where $W$ is the weight of the halfspace and $|\calF|$ denotes the number of features. Promisingly, over a feature set arising from a single valid stem for every term of $f$ and a set of $\poly(k)$ auxiliary variables for each of these stems, Perceptron would then make at most $2^{\wt{O}(\sqrt{k})}$ mistakes, which is within our target limit. 

Now, the main difficulty we will face is that we are not given the relevant stems and sets of auxiliary variables up front; we will compute many invalid candidate stems throughout our algorithm. This is easiest to see in the simpler setting where we only wish to show that we can learn in time $n^{\wt{O}(\sqrt{k})}$. In this case, we can simply take the set of auxiliary variables for each stem to be $[n]$. Now whenever we run Perceptron and it fails to learn after making $\poly(W,|\calF|)$ many mistakes, it follows there was some example that satisfied a term in $f$ for which we have not yet computed a stem. However, because we do not know which example this is, we must run our procedure for finding stems on \emph{every} positive example that we saw in this run of perceptron. This, in turn, results in as many as $\poly(W,|\calF|) \cdot 2^{\wt{O}(\sqrt{k})} n$ features in the next iteration. Since we may have to repeat this $k$ times to find a stem for every term, we could end up with as many as 
    \[ \left( 2^{\wt{O}(\sqrt{k})} n \right)^k = 2^{\wt{O}(k^{3/2})} n^k\]
features, which is far too many for us to handle. 

\medskip
\noindent 
\textbf{The Solution: Winnow2 and Attribute Efficient Learning.} To circumvent these problems, we instead use the Winnow2 learning algorithm, which only makes at most $\poly(W, \log(|\calF|))$ many mistakes in our context. This logarithmic dependence on the number of features is precisely what allows us to circumvent the blow up in the number of features witnessed above. In particular, as in the discussion for Perceptron above, we will not run our stem-finding procedure (\Cref{alg:find-stem}) only once; rather, we run it \emph{once for each positive example that has been received so far in the most recent run of Winnow2.} This again creates many invalid candidate stems and the number of candidate stems that are generated in the $t$-th execution of 2(ii) depends on the  number of positive examples that were received so far, and hence on the number of equivalence queries (i.e.~mistakes) that were made by the Winnow2 algorithm so far. The attribute-efficiency of Winnow2 turns out to be essential in giving a bound on this number of equivalence queries (mistakes) made which is strong enough for the overall approach to work. 

To give more details, our final algorithm (\Cref{alg:Glorious-Learn-DNF}) roughly works by alternating between:

\begin{itemize}

\item [(1)] running Winnow2 over the current set of features (corresponding to augmented monomials over the current candidate stems and their associated sets of auxiliary variables), and

\item [(2)] expanding the current feature set and restarting Winnow2 with the new expanded feature set.  Expanding the feature set is done in two different ways.  The first (i) (corresponding to step~3(b) of \Cref{alg:Glorious-Learn-DNF}) is by growing the set $R_{T'}$ of auxiliary variables for some current candidate stem $T'$ using \Cref{alg:find-relevant-vars} (we attempt to do this each time Winnow2 receives a negative example). The second (ii) (corresponding to step~3(c) of \Cref{alg:Glorious-Learn-DNF}) is by adding new candidate stems via \Cref{alg:find-stem} (we do this each time the current run of Winnow2 has made ``too many equivalence queries'' using its current feature set).

\end{itemize}

Throughout \Cref{alg:Glorious-Learn-DNF}, we enforce a mistake bound of $\poly(n) 2^{\wt{O}(\sqrt{k})}$ on Winnow2, i.e.~we halt the run of Winnow2 if it makes more than this many mistakes (equivalence queries). This will turn out to always be a large enough bound such that Winnow2 would successfully learn if the set of features was adequate to express the target DNF as an augmented PTF. It then remains to show that we do not need to restart Winnow2 on a new feature set too many times.  The idea for why this holds is roughly as follows. Our analysis shows that each time we do 2(ii), with extremely high probability we find a valid stem for a new term of $f$ for which we did not previously have a valid stem, so 2(ii) cannot be carried out more than $k$ times.  

Given this, \Cref{alg:Glorious-Learn-DNF} cannot construct too many candidate stems, and given a bound on the total number of candidate stems, it can be shown that 2(i)  (growing a set $R_{T'}$) also cannot happen too often, since there are not too many candidate stems $T'$ and each $R_{T'}$ never grows too large. This lets us upper bound the total number of times that either (2)(i) or 2(ii) can be carried out; in turn, this means that (1) (a new execution of Winnow) does not need to be started too many times, as desired.


\section{Preliminaries} \label{sec:preliminaries}

\noindent {\bf Exact learning with queries.}
We first recall the model of \emph{exact learning from equivalance queries only}, which was introduced and studied in the foundational work of Angluin \cite{Angluin:87}.  (As is discussed in \cite{lit87} and is well known, this model is completely equivalent to the \emph{on-line mistake bound (OLMB)} learning model that was introduced by Littlestone in another seminal paper \cite{lit87}; we will use this connection soon.)  In this model, when learning a concept class ${\cal C}$ (i.e.~a class of Boolean functions mapping $\{0,1\}^n$ to $\{0,1\}$), learning proceeds in a sequence of \emph{rounds}.  At the start of each round, the learning algorithm has its \emph{current hypothesis} which is some function $h: \{0,1\}^n \to \{0,1\}$ ($h$ need not belong to ${\cal C}$). 
In each round, the learning algorithm makes an \emph{equivalence query}, which amounts to asking an \emph{equivalence oracle} the question ``is $h$ logically equivalent to the target function $f$ (in the sense that $h(x)=f(x)$ for all $x \in \{0,1\}^n$)?''  If the answer is yes then the learner has succeeded and the learning process ends; otherwise the equivalence oracle returns an arbitrary \emph{counterexample}, i.e.~a string $x \in \zo^n$ such that $h(x) \neq f(x)$, and the learning algorithm can update its current hypothesis before the next round begins.

It is known that learning from equivalence queries only is at least as hard as learning in the distribution-free PAC learning model \cite{Valiant:84}.  Most research in the exact learning model focuses on an augmented variant of the model, called \emph{exact learning from membership and equivalence queries}, which was introduced and studied in \cite{Angluin:87,Angluin:88}.  In this model, at each round the learning algorithm may either ask an equivalence query, which works as described above, or else it may opt to ask a \emph{membership query} (black-box query) on some point $x \in \{0,1\}^n$ of its choosing. If the learning algorithm makes a membership query on $x$, it receives the value value $f(x)$, and can update its current hypothesis in light of this new information.  The complexity of a learning algorithm in this framework is measured by the total number of rounds (membership queries plus equivalence queries) that it takes to exactly learn the target function $f$, as well as the total running time of the algorithm across all rounds.

\medskip

\noindent {\bf Attribute-efficient learning.}
A learning algorithm for a concept class ${\cal C}$ of Boolean functions over $\zo^n$ is said to be \emph{attribute-efficient} if, roughly speaking, when it it learns a function in ${\cal C}$ that depends only on $k \ll n$ relevant variables, the number of queries that it makes scales as $o(n)$ (see \cite{lit87,BHL:95,bshhel98,Kivinen2008}).
One of the earliest and best-known attribute-efficient learning algorithms, for the class of low-weight linear threshold functions, is Littlestone's Winnow2 algorithm \cite{lit87} which learns from equivalence queries only.  In particular, as an easy consequence of Theorem~9 of \cite{lit87}, we have the following well-known result:

\begin{theorem} [Winnow2 performance guarantee] \label{thm:Winnow2}
The Winnow2 algorithm is an algorithm for learning from equivalence queries which has the following performance guarantee: For any target function $f$ which is a linear threshold function $f(x)=\mathbf{1}[w \cdot x  \geq \theta]$ over $\zo^n$, where $w \in \Z^m$ has $\sum_{i=1}^m |w_i| \leq W$, the Winnow2 algorithm successfully performs exact learning of the target function in at most $\poly(W) \cdot \log n$ rounds, and its running time per round is $O(n)$.
\end{theorem}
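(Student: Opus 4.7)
The proof is essentially a black-box invocation of Theorem~9 of \cite{lit87} together with a standard reduction that handles signed integer weights. First I would recall the Winnow2 algorithm itself: it maintains nonnegative multiplicative weights $\tilde w_1, \dots, \tilde w_N$, initialized to $1$, and on input $x$ predicts $\mathbf{1}[\sum_i \tilde w_i x_i \geq \theta]$ for a fixed threshold $\theta$. After a false-negative mistake (a \emph{promotion}) each $\tilde w_i$ with $x_i = 1$ is multiplied by a parameter $\alpha > 1$; after a false-positive (a \emph{demotion}) each such $\tilde w_i$ is divided by $\alpha$. Littlestone's Theorem~9 shows that if the target is an LTF whose \emph{nonnegative} integer coefficients sum to at most $W$ (over $N$ Boolean features), Winnow2 makes at most $\poly(W) \cdot \log N$ mistakes in the online mistake-bound model, which by the standard equivalence is the same as $\poly(W)\cdot\log N$ equivalence queries.

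To extend this to the signed-integer case in the statement, I would apply the feature-doubling trick. For each coordinate $i$ introduce the complementary feature $\bar x_i = 1 - x_i$, and for every $i$ with $w_i < 0$ rewrite $w_i x_i = -|w_i| + |w_i|\,\bar x_i$, absorbing the constant $-|w_i|$ into the threshold. This produces an equivalent LTF over $N = 2n$ Boolean features whose coefficients are all nonnegative integers summing to $\sum_i |w_i| \leq W$. Applying Littlestone's bound to this instance then yields a mistake count of $\poly(W) \cdot \log(2n) = \poly(W) \cdot \log n$ on the original problem. The $O(n)$ per-round running time is immediate: each round requires only computing the hypothesis value $\sum_i \tilde w_i x_i$ against $\theta$ and, after a mistake, multiplicatively updating at most $n$ weights.

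Since Littlestone's theorem is being invoked as a black box, the only genuine content on our end is the elementary feature-doubling reduction, so I do not anticipate a substantive obstacle. For completeness I would note that the argument in \cite{lit87} proceeds via the potential $\Phi_t = \sum_i w_i^* \ln \tilde w_i^{(t)}$, where $w^*$ denotes the (nonnegative) target weight vector: $\alpha$ and $\theta$ can be tuned so that each demotion decreases $\Phi_t$ by an amount bounded away from zero while each promotion increases it by only a controlled amount, and a simple invariant caps the maximum possible value of any single $\tilde w_i$. Combining these upper and lower bounds on $\Phi_t$ yields the $\poly(W)\cdot \log N$ mistake count, completing the derivation.
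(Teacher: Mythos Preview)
Your proposal is correct and matches the paper's treatment: the paper does not give a proof at all, simply stating the theorem as ``an easy consequence of Theorem~9 of \cite{lit87}'' and calling it a ``well-known result.'' Your black-box invocation of Littlestone's Theorem~9, together with the standard feature-doubling reduction to handle signed weights, is exactly the intended derivation, and in fact supplies more detail than the paper itself.
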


(Using the aforementioned well-known equivalence between exact learning from equivalence queries only and online mistake-bound learning, we will often refer to the number of equivalence queries that Winnow2 uses as the number of ``mistakes'' that it makes.)

\medskip
\noindent
{\bf Notation and terminology regarding DNFs.}
We will frequently view a \emph{term} (conjunction) as being a set of literals from $\{x_1,\dots,x_n,\overline{x_1},\dots,\overline{x_n}\}$, and a DNF formula $f$ as being a set of terms. For example, we write ``$T \in f$'' to indicate that $T$ is a term in the DNF $f$, and we write ``$T_1 \subseteq T_2$'' to indicate that every literal in $T_1$ appears in $T_2$. Throughout this paper the Boolean variables $x_1,\dots,x_n$ take values in $\{0,1\}$, and likewise terms $T$ and DNF formulas $f$ take values in $\{0,1\}$ (where 1 corresponds to True).

We will use the following parameter setting throughout the paper:
\[
\tau:=1000k.
\]
Intuitively, $\tau$ is the threshold length for a term to be considered ``short'' in our analysis.

Given a term $T$, the \emph{length} of $T$, denoted $|T|$, is simply the size of the set of literals it contains (in the notation of the previous paragraph, it is $|T|$).  Given a DNF formula $g=T_1 \vee \cdots \vee T_s$ and an integer $L$, as described earlier we write $g_{\leq L}$ to denote the sub-DNF of $g$ that contains only the terms of width at most $L$, and similarly we write $g_{>L}$ to denote the sub-DNF of $g$ that contains only the terms of width greater than $L$, so we have 
\[
g ~=~ g_{\leq L} ~ \vee ~ g_{>L}.
\]

Additionally, for a $y \in \zo^n$, we will let $\calT_g(y)$ denote the set of terms in $g$ that are satisfied by $y$.

\section{Finding valid stems} \label{sec:findstem}
We begin by defining the ``stem of a term.''

\begin{definition}[Stem of a Term]
	Given terms $T', T$, we say that $T'$ is a \emph{valid stem} of $T$ if $T' \subseteq T$ and $|T \setminus T'| \leq 2k$. It will also be convenient to call an arbitrary term $T'$ a \emph{candidate stem}, but this has no semantic meaning beyond $T'$ being a term.
\end{definition}

Looking ahead, valid stems will be useful because they effectively allow us to reduce the size of long terms to be at most $O(k)$. In particular, we will show in \Cref{sec:structural} that using valid stems we can express any $k$-term DNF as a (feature expanded) PTF with degree $\wt{O}(\sqrt{k})$ and integer weights of total magnitude $2^{\tilde{O}(\sqrt{k})}$. This  matches what we would expect for a DNF whose terms all have length at most $O(k)$.

We now turn to show that we can find valid stems of long terms. At a high level, our approach is builds on Blum and Rudich's algorithm and analysis in \cite{BlumRudich92}. We start with the following simple procedure:

\bigskip
\begin{algorithm}[H]
\addtolength\linewidth{-2em}

\vspace{0.5em}

\textbf{Input:} Query access to $f: \zo^n \rightarrow \zo$, $y$ such that $f(y) = 1$\\[0.25em]
\textbf{Output:} A pair consisting of a term and an empty set

\

\genterm($f, y$):
\begin{enumerate}
	\item Let $I \subseteq [n]$ denote the indices $i$ such that $f(y^{\oplus i}) = 0$ 
	\item Let $T'$ denote the term with literals $\{x_i: i \in I, y_i = 1\} \cup \{\overline{x}_i: i \in I, y_i = 0\}$.
        \item Return $(T',\emptyset)$
\end{enumerate}

\caption{A simple procedure to find a subset of the literals appearing in a term}
\label{alg:find-stem}
\end{algorithm}

\bigskip

Throughout this section, we will ignore the $\emptyset$ appended to the term in the output of \genterm{} and simply pretend that the procedure outputs $T'$. (The $\emptyset$ is the initialization of the set $R_{T'}$ of auxiliary variables that we will need later in the proof, as mentioned in \Cref{sec:techniques}.) 

The \genterm{} procedure is also used by Blum and Rudich in their analysis. In particular, they crucially note that if $y$ satisfies a unique term $T$ of $f$, then \genterm$(f,y)$ outputs a valid stem for $T$. Indeed, a simple rephrasing of their proof of this follows by considering the following definition that will be useful for us later:

\begin{definition}[Protected Set]
\label{def:protected}
	Given a string $y \in \zo^n$ and a DNF $f$, we say that the \emph{protected set for $y$}, denoted $P(y)$, is formed as follows: For each term $T \in f$ that $y$ does not satisfy let $\ell_i$ be the literal in $T$ with smallest index $i \in [n]$ that $y$ does not satisfy. The protected set $S \subseteq [n]$ is the set of all these indices.
\end{definition}

Note that for any index $j \not \in P(y)$ with a corresponding literal appearing in $T$ (the unique term $y$ satisfies), we must have that $f(y^{\oplus j}) = 0$; moreover, every index in $I$ is easily seen to correspond to a literal that belongs to $T$. So the term $T'$ that $\genterm(f,y)$ outputs will consist only of literals of $T$, and it will be missing $|P(y)| \leq s-1$ literals of $T$.

Blum and Rudich's algorithm then goes on to give a randomized procedure to find a point that uniquely satisfies $T$ with probability $2^{-O(k)}$; by repeating this procedure $2^{O(k)}$ times, they are thus able to find a valid stem (as well as many other candidate stems).  Of course, this will be too slow for our purposes. Instead, we will show that outputting valid stems can be done much more quickly by the following \findcandidatestem~algorithm (\Cref{alg:find-stem}):

\bigskip

\begin{algorithm}[H]
\addtolength\linewidth{-2em}

\vspace{0.5em}

\textbf{Input:} Query access to $f: \zo^n \rightarrow \zo$, $y$ such that $f(y) = 1$\\[0.25em]
\textbf{Output:} A list $\calL_y$ of pairs each consisting of a candidate stems and an empty set

\findcandidatestem($f, y$):
\begin{enumerate}
	\item $\calL_y \gets \emptyset$
	\item Repeat $2^{\wt{O}(\sqrt{k})} \log(n)$ times:
		\begin{enumerate}
			\item $\bz_0 \gets y$
			\item Draw a random permutation $\bpi \in S_n$ 
			\item For $i = 0, 1, \dots, n$:
			\begin{enumerate}
				\item  
                Let $(T',\emptyset) = \genterm(f,\bz_i).$
                If $T'(y)=1$ then add $(T',\emptyset)$ to $\calL_y$.
				\item If $f(\bz_i^{\oplus \bpi(i)}) = 1$, then set $\bz_{i+1} \gets \bz_i^{\oplus \bpi(i)}$, otherwise set $\bz_{i+1} \leftarrow \bz_i$. (We refer to setting $\bz_{i+1} \leftarrow \bz_i^{\oplus \bpi(i)}$ as \emph{flipping} index $\bpi(i)$.)
		\end{enumerate}
				
		\end{enumerate}

	\item Return $\calL_y$
\end{enumerate}

\caption{A procedure for finding stems}
\label{alg:find-stem}
\end{algorithm}

\bigskip

We remark that this algorithm also bears a resemblance to Blum and Rudich's ``Random-sweep'' procedure. There are some significant differences, though: unlike Random-sweep we consider flipping the bits in $y$ one at a time in a \emph{random order} (this is very important for our analysis, as will become clear in \Cref{sec:main-lemma}), and we always flip an index $j$ so long as the resulting assignment continues to satisfy $f$. 

The main guarantee that we will need from \findcandidatestem{}, which is the main result of this section, is the following:

\begin{lemma}
\label{lem:find-stem}
    Let $f$ be a $k$-term DNF and let $y \in \zo^n$ satisfy $f_{> \tau}$ and not $f_{\leq \tau}$. 
    The algorithm $\findcandidatestem(f,y)$ runs in time $\poly(n) \cdot 2^{\tilde{O}(\sqrt{k})}$ and outputs a set of at most 
    \begin{equation}
    \FCSM:= n \log(n) \cdot 2^{\tilde{O}(\sqrt{k})}
    \end{equation}
candidate stems satisfied by $y$ such that with probability $1 - n^{-\Omega(k)},$ at least one of them is a valid stem for a term $T \in f$ that $y$ satisfies.
\end{lemma}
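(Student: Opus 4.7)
The running-time and output-size bounds are routine counts: the outer loop runs $2^{\tilde O(\sqrt k)}\log n$ times, each inner step invokes $\genterm$ (which costs $O(n)$ membership queries) plus $O(n)$ extra work to decide the flip, giving total time $\poly(n)\cdot 2^{\tilde O(\sqrt k)}$; and since at most one candidate stem is appended per inner step, $|\calL_y|\le (n+1)\cdot 2^{\tilde O(\sqrt k)}\log n=\FCSM$. So the substance of the proof is a single-iteration success bound: one pass of the inner loop should produce a valid stem for some $T^{*}\in\calT_f(y)$ with probability $\ge 2^{-\tilde O(\sqrt k)}$. Independence across the $2^{\tilde O(\sqrt k)}\log n$ outer iterations then drives failure probability down to $n^{-\Omega(k)}$, absorbing an extra $\Theta(k)$ factor into the $\tilde O(\sqrt k)$ exponent.

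\textbf{Structural reduction via cores and the protected set.} For a walk state $z$, let $C(z):=\bigcap_{T\in\calT_f(z)}T$ be the literal-intersection (``core'') of the currently satisfied terms. A direct extension of the Blum--Rudich-style analysis of $\genterm$ already given earlier in this section shows that whenever $T^{*}\in\calT_f(z)$ the returned term $T'$ obeys $T'\subseteq C(z)\subseteq T^{*}$ and $|T'|\ge |C(z)|-|P(z)|$, since the only indices $j$ that can fail to lie in $I$ are those whose $z$-satisfied literal falls outside $C(z)$ or that belong to the protected set $P(z)$ of \Cref{def:protected}. Consequently
\[
|T^{*}\setminus T'|\;\le\;\underbrace{|T^{*}\setminus C(z)|}_{=:\alpha(z)}+|P(z)|\;\le\;\alpha(z)+k-|\calT_f(z)|,
\]
so $T'$ is a valid stem for $T^{*}$ whenever $\alpha(z)\le k+|\calT_f(z)|$; the special case $\calT_f(z)=\{T^{*}\}$ (which forces $\alpha=0$) is always good. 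Moreover $T'\subseteq T^{*}$ together with $T^{*}\in\calT_f(y)$ gives $T'(y)=1$, so such a $T'$ is actually placed in $\calL_y$.

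\textbf{Main step (and the main obstacle): analyzing the random walk.} I would take $T^{*}$ to be a \emph{shortest} term in $\calT_f(y)$; this is necessarily long by the hypothesis $f_{\le\tau}(y)=0$, and it has the important side benefit of ruling out ``subset terms'' $T_j\subsetneq T^{*}$ in $\calT_f(y)$ (any such $T_j$ would be strictly shorter, contradicting the choice). Let $p$ be the first index of $\bpi$ with $\bpi(p)\in\vars(T^{*})$; for every $i<p$ the attempted flip succeeds because $T^{*}$ remains satisfied, and $T^{*}\in\calT_f(\bz_i)$ throughout this prefix. Within the prefix any competing term $T_j\in\calT_f(y)$ (necessarily with $\vars(T_j)\not\subseteq\vars(T^{*})$, by the choice of $T^{*}$) is killed the first time a variable of $\vars(T_j)\setminus\vars(T^{*})$ is visited. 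The plan is then to argue that with probability $\ge 2^{-\tilde O(\sqrt k)}$ over $\bpi$ a random prefix simultaneously (i)~visits a killing variable for every competing term before step $p$, so that all original competitors are dead; and (ii)~does not accidentally switch on any term of $f$ that was unsatisfied by $y$. Given (i) and (ii), at the first step $i^{*}\le p$ after all killings we have $\calT_f(\bz_{i^{*}})=\{T^{*}\}$, hence $\alpha(\bz_{i^{*}})=0$, and $\genterm$ outputs a valid stem. The main obstacle I anticipate is (ii): a single short term of $f$ that turns on mid-walk can be nearly disjoint from $T^{*}$ and inflate $\alpha$ by $\Omega(|T^{*}|)$, so controlling this requires exploiting both the randomness of $\bpi$ and the large gap between the ``short'' threshold $\tau=1000k$ and the valid-stem slack $2k$ --- turning on a short term requires a specific coordinated pattern of flips that should be unlikely for a short random prefix. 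Balancing (i) and (ii) so that the success probability comes out to $2^{-\tilde O(\sqrt k)}$ rather than the weaker $2^{-\Theta(k)}$ that a naive ``all killing variables before all of $\vars(T^{*})$'' event would give is precisely the ``delicate analysis'' alluded to in \Cref{sec:techniques}, and is where the $\sqrt{k}$-versus-$k$ improvement over \cite{BlumRudich92} is extracted.
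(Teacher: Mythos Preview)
The routine parts (running time, output size, reduction to a per-iteration $2^{-\tilde O(\sqrt k)}$ success probability) are fine, and your core/protected-set bound $|T^{*}\setminus T'|\le\alpha(z)+|P(z)|$ is correct and close in spirit to the paper's \Cref{lem:genterm-output-stem}. The gap is in the ``main step'': you do not give an argument, and the direction you sketch --- fix a shortest $T^{*}$ and hope all competing terms are killed before any variable of $T^{*}$ is touched --- is not how the paper obtains $\sqrt{k}$, nor do you indicate how it would. As you yourself note, the naive version of this event has probability $2^{-\Theta(k)}$ (e.g.\ when each competitor has only one variable outside $T^{*}$), and nothing in the proposal explains how to beat that.

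The paper's argument is structurally different. It never fixes a target term. Instead it tracks the set of \emph{stripped} terms $\calT_f(\bz_i)\setminus(P(y)\cup \bU_i)$, where $\bU_i$ is the set of unanimous indices among currently-satisfied terms (your $C$ corresponds to the literals on $\bU_i$), and proves by induction on $m:=|\calT_f(\bz_i)\setminus(P(y)\cup \bU_i)|$ that a valid stem appears with probability $\ge e^{-O(\sqrt{m}\log k)}$. The $\sqrt{m}$ arises from a two-case dichotomy on the total number of literals in the stripped terms: if this count is $\ge k\sqrt{m}$, then the next ``relevant'' index under $\bpi$ lies in $P(y)$ with probability at most $1/\sqrt m$, and otherwise shrinks $m$ by at least $1$; if the count is $\le k\sqrt{m}$, then (since by \Cref{lem:genterm-output-stem} every stripped term has $>k$ literals or we are already done) some literal occurs in $\ge\sqrt m$ stripped terms, so with probability $\Omega(k^{-3/2})$ we hit it next and shrink $m$ by $\sqrt m$. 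Solving the recursion gives the claimed bound. Your concern (ii) about ``switching on'' previously unsatisfied terms is not handled by any $\tau$-versus-$2k$ gap; it is handled much more simply by \Cref{lem:sat-term-decreasing}: as long as no coordinate in $P(y)$ has been flipped, $\calT_f(\bz_i)$ is monotone nonincreasing, and the Case~I analysis is exactly what shows that avoiding $P(y)$ is cheap. So the missing idea is to (a) measure progress by the number of distinct stripped terms rather than by proximity to a single target $T^{*}$, and (b) use the many-literals/popular-literal dichotomy to get $\sqrt m$ in the recursion.
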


The rest of this section is dedicated to proving \Cref{lem:find-stem}.

\subsection{Setup}

Our chief goal will be to show that line $(2)$ of \Cref{alg:find-stem} outputs a valid stem with probability $2^{-\wt{O}(\sqrt{k})}$.
To begin reasoning towards this end, we note that so long as we do not flip a variable in the protected set $P(y)$, we will slowly be decreasing the number of terms $\bz_i$ satisfies in the loop on line $2(c)$. Intuitively, it will be useful for us to satisfy fewer terms, so we will be interested in the event that we do not flip a variable in the protected set. (Indeed, if we satisfied a unique term of $f$, we would be done.)

\begin{lemma}
\label{lem:sat-term-decreasing}
    Let $f$ be a DNF of size $k$ and $y \in \zo^n$ be such that $y$ satisfies $f_{> \tau}$ and not $f_{\leq \tau}$. Suppose that $i$ is such that $(\bz_i)_a = y_a$ for all $a \in P(y)$, then
	\[\calT_f(\bz_{i}) \subseteq \calT_f(\bz_{i-1}).\]
	(Recall that $\calT_f(x)$ is the set of terms in $f$ that are satisfied by assignment $x$.)
\end{lemma}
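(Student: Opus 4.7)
The plan is to prove this via a case analysis on whether the algorithm flipped at iteration $i-1$, using the fact that $\bpi$ is a permutation (so each index is visited at most once) to leverage the hypothesis about $\bz_i$ into a statement about $\bz_{i-1}$.

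First I would record the following auxiliary observation, which is essentially a restatement of the intuition behind \Cref{def:protected}: if $(\bz_i)_a = y_a$ for every $a \in P(y)$, then $\calT_f(\bz_i) \subseteq \calT_f(y)$. Indeed, any term $T$ not satisfied by $y$ contributes to $P(y)$ some coordinate $a$ indexing a literal $\ell \in T$ that $y$ fails; the hypothesis forces $(\bz_i)_a = y_a$, so $\bz_i$ also fails $\ell$, hence fails $T$. So $T \in \calT_f(\bz_i)$ implies $T \in \calT_f(y)$.

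Next I would split into the two possibilities produced by step~2(c)(ii). If $\bz_i = \bz_{i-1}$ there is nothing to show. Otherwise $\bz_i = \bz_{i-1}^{\oplus j}$ with $j := \bpi(i-1)$. Here comes the key point: because $\bpi$ is a permutation of $[n]$, index $j$ has not been considered in any earlier iteration of the inner loop, so it has not been flipped before, and therefore $(\bz_{i-1})_j = (\bz_0)_j = y_j$. Suppose toward a contradiction that some term $T$ lies in $\calT_f(\bz_i) \setminus \calT_f(\bz_{i-1})$. Since $\bz_i$ and $\bz_{i-1}$ differ only at coordinate $j$, the literal of $T$ responsible for $\bz_{i-1} \not\models T$ must sit at position $j$; call it $\ell$. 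Then $\bz_{i-1}$ fails $\ell$, and since $(\bz_{i-1})_j = y_j$, so does $y$, giving $T \notin \calT_f(y)$. But the auxiliary observation, applied to $T \in \calT_f(\bz_i)$, forces $T \in \calT_f(y)$, a contradiction. Hence $\calT_f(\bz_i) \subseteq \calT_f(\bz_{i-1})$.

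The main obstacle, conceptually, is bridging a hypothesis phrased only in terms of $\bz_i$ to a conclusion about $\bz_{i-1}$; the permutation structure of $\bpi$ is precisely what permits this bridging, since it guarantees that the flipped coordinate $j = \bpi(i-1)$ still carries its original value $y_j$ in $\bz_{i-1}$. The hypotheses that $y$ satisfies $f_{>\tau}$ and not $f_{\leq \tau}$ are not actually needed for this particular sub-lemma; they provide the context used in the subsequent probabilistic analysis of \findcandidatestem{} that drives \Cref{lem:find-stem}.
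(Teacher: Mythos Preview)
Your proof is correct and uses the same two ingredients as the paper's: the protected-set observation that $\calT_f(\bz_i)\subseteq\calT_f(y)$, and the fact that $\bpi$ visits each coordinate at most once. The paper organizes the argument slightly differently---rather than a local contradiction at the single flipped coordinate $j=\bpi(i-1)$, it argues globally that any term $T\in\calT_f(y)$ which left $\calT_f(\bz_{i-1})$ must have done so because some earlier flip unsatisfied it, and since that coordinate is never revisited, $T$ stays out of $\calT_f(\bz_i)$---but this is a cosmetic difference, not a different idea.
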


\begin{proof}
Since no coordinate in $P(y)$ was flipped before time $i$, neither the set $\calT_f(\bz_i)$ nor the set $\calT_f(\bz_{i-1})$ can contain any term $T$ that $y$ does not satisfy. So consider a term $T$ that $y$ satisfies.  If $T \not \in \calT_f(\bz_{i-1})$, then at some time step $j \leq i-1$ an index $a \in [n]$ must have been flipped that resulted in $T$ being unsatisfied. Since this coordinate is never flipped again, it follows that $T \not \in \calT_f(\bz_i)$, as desired.
\end{proof}

As a corollary, we note that if a coordinate $j$ did get flipped by the time we reach $\bz_i$, then assuming that no coordinates in the protected set for $y$ have been flipped, $j$ must be missing from all terms that are satisfied by $\bz_i$:

\begin{corollary} 
\label{cor:flipped-coords-missing}
    Let $f$ be a DNF of size $k$ and $y \in \zo^n$ be such that $y$ satisfies $f_{> \tau}$ and not $f_{\leq \tau}$. If $(\bz_i)_a = y_a$ for all $a \in P(y)$, $T$ is a term in $\calT_f(\bz_i)$, and $j \in [n]$ is such that $(\bz_i)_j \not = y_j$, then neither $x_j$ nor $\overline{x_j}$ appear in $T$.
\end{corollary}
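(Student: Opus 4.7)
The plan is to derive the corollary as a short consequence of \Cref{lem:sat-term-decreasing}, by first propagating the hypothesis backwards in time and then using the fact that a single literal on variable $j$ cannot be satisfied by two assignments that disagree at coordinate $j$.

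First, I would observe that the hypothesis ``$(\bz_i)_a = y_a$ for all $a \in P(y)$'' holds not only at time $i$ but at every earlier time as well. This is because $\bpi$ is a \emph{permutation}, so in the loop of step~2(c) each coordinate is considered (and thus possibly flipped) at most once; if no coordinate of $P(y)$ has been flipped by time $i$, then none has been flipped by any time $i' \le i$ either. Consequently \Cref{lem:sat-term-decreasing} applies at every step from $1$ through $i$, and chaining the inclusions yields
\[
\calT_f(\bz_i) \;\subseteq\; \calT_f(\bz_{i-1}) \;\subseteq\; \cdots \;\subseteq\; \calT_f(\bz_0) \;=\; \calT_f(y).
\]
In particular, the term $T \in \calT_f(\bz_i)$ also lies in $\calT_f(y)$, so $y$ satisfies $T$.

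Second, I would exploit that $T$ is satisfied by both $y$ and $\bz_i$. Suppose, for contradiction, that some literal on variable $j$ appears in $T$. If this literal is $x_j$, then $y$ satisfying $T$ forces $y_j = 1$ and $\bz_i$ satisfying $T$ forces $(\bz_i)_j = 1$, giving $(\bz_i)_j = y_j$. If instead the literal is $\overline{x_j}$, the same reasoning forces $y_j = 0 = (\bz_i)_j$. Either case contradicts the assumption $(\bz_i)_j \neq y_j$, so no literal involving $j$ can appear in $T$.

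I do not expect any genuine obstacle here: the corollary is essentially immediate once one (i) propagates the ``no protected-coordinate flipped'' hypothesis back to all earlier times by invoking the permutation structure of $\bpi$, and (ii) iterates \Cref{lem:sat-term-decreasing} to pull $T$ back into $\calT_f(y)$. The only subtle point worth being explicit about is the use of the permutation structure in step (i); after that, step (ii) is just the two-line observation that a literal on $j$ cannot be simultaneously satisfied by two assignments differing at coordinate~$j$.
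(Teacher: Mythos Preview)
Your proof is correct and follows essentially the same approach as the paper: both arguments iterate \Cref{lem:sat-term-decreasing} (after observing that the protected-set hypothesis propagates backward) to conclude that $T$ is satisfied by two assignments disagreeing at coordinate~$j$, hence contains no literal on~$j$. The only cosmetic difference is that the paper compares the adjacent pair $(\bz_{t-1},\bz_t)$ at the first time~$t$ coordinate~$j$ is flipped, whereas you compare the endpoints $(y,\bz_i)$ directly; both variants work for the same reason.
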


\begin{proof}
Let $t \leq i$ denote the first time at which $(\bz_t)_j \not = y_j$. Applying \Cref{lem:sat-term-decreasing}, we note that $\bz_{t}$ and $\bz_{t-1}$ both satisfy $T$. As $(\bz_{t-1})_j \not = (\bz_t)_j$, it must follow that $T$ has no literal corresponding to $j$.
\end{proof}

We now consider the following two definitions:
 
\begin{definition}[Stripped Terms]
	Given a term $T$ and a set $S \subseteq [n]$, we say the \emph{stripped term}, denoted $T \setminus S$, is defined by taking $T$ and removing all literals $x_i$ and $\overline{x_i}$ where $i \in S$.  Given a set of terms $\calT$ and a set of indices $S$, we will write $\calT \setminus S$ to denote the set of stripped terms $\{T \setminus S: T \in \calT\}$
\end{definition}

\begin{definition}[Unanimous Indices]
	Given a set of terms $\calT$, we say the set of \emph{unanimous indices $U := U(\calT)$ of $\calT$} is the set of all indices $i \in [n]$ such that either every term in $\calT$ contains the literal $x_i$ or every term in $\calT$ contains $\overline{x_i}$.
\end{definition}

We will frequently consider the set 
\begin{equation} \label{eq:keyset}
\calT_f(\bz_i) \setminus (P(y) \cup \bU_i), \text{~where $\bU_i$ is the set of unanimous indices of $\calT_f(\bz_i)$}.
\end{equation}
In words, this is the set of all terms satisfied by $\bz_i$ after both the elements of the protected set and the unanimous indices of those terms have been stripped away. These should be thought of as the coordinates that we can ``safely flip'' in line 2(c)ii after the $i$th step of the loop without flipping an element of the protected set $P(y)$.

Crucially, any index with a literal that occurs in a term of $\calT_f(\bz_i) \setminus (P(y) \cup \bU_i)$ has not yet appeared under $\bpi$ by the $i$th iteration of the loop:

\begin{lemma}
\label{lem:stripped-indices-flippable}
  Let $f$ be a DNF of size $k$ and $y \in \zo^n$ be such that $y$ satisfies $f_{> \tau}$ and not $f_{\leq \tau}$. Fix an index $i$ such that $(\bz_i)_a = y_a$ for all $a \in P(y)$. Set $j \in [n]$ to be such that there is a corresponding literal in some stripped term in $\calT_f(\bz_i) \setminus (P(y) \cup \bU_i)$.
Then $\bpi(t) \not = j$ for all $t < i$.
\end{lemma}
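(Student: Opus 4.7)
The plan is to argue by contradiction: suppose some $t < i$ has $\bpi(t) = j$, and case-split on whether the algorithm actually flipped $j$ at step $t$. Before the case split, I would record two useful facts. First, since the algorithm never ``unflips'' a coordinate and $(\bz_i)_a = y_a$ for every $a \in P(y)$, the same equality $(\bz_s)_a = y_a$ must hold for every $a \in P(y)$ and every $s \leq i$; this lets me apply \Cref{lem:sat-term-decreasing} iteratively to obtain the monotonicity
\[
\calT_f(\bz_i) \subseteq \calT_f(\bz_{i-1}) \subseteq \cdots \subseteq \calT_f(\bz_t).
\]
Second, the hypothesis that $j$ has a literal in some term of $\calT_f(\bz_i) \setminus (P(y) \cup \bU_i)$ unpacks into: there is a term $T \in \calT_f(\bz_i)$ that contains a literal on variable $j$, and both $j \notin P(y)$ and $j \notin \bU_i$.

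Case 1 (the flip at step $t$ was accepted). Then $\bz_{t+1} = \bz_t^{\oplus j}$, and since $\bpi$ is a permutation index $j$ is never revisited, so $(\bz_i)_j \neq y_j$. Now \Cref{cor:flipped-coords-missing} forbids any term of $\calT_f(\bz_i)$ from containing a literal on $j$, directly contradicting the existence of such a literal in $T \in \calT_f(\bz_i)$.

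Case 2 (the flip at step $t$ was rejected). Then $f(\bz_t^{\oplus j}) = 0$, so every term in $\calT_f(\bz_t)$ becomes falsified when coordinate $j$ is toggled. The only way toggling a single bit falsifies a satisfied term is if the term contains the literal on variable $j$ agreeing with $(\bz_t)_j$. Because $j$ has not been flipped by step $t$ (and, since the flip is rejected and $\bpi$ is a permutation, is never flipped at any later step up to $i$), we have $(\bz_t)_j = (\bz_i)_j = y_j$. Combining this with the monotonicity established above, every term of $\calT_f(\bz_i) \subseteq \calT_f(\bz_t)$ contains the same literal on $j$, namely the one consistent with $y$. But this means $j \in \bU_i$, contradicting our unpacking.

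The main delicate step is the bookkeeping in Case 2: I must simultaneously (a) transport the ``every term contains the $y_j$-literal'' conclusion forward from $\calT_f(\bz_t)$ to $\calT_f(\bz_i)$ using \Cref{lem:sat-term-decreasing}, and (b) verify that the relevant literal remains the one consistent with $\bz_i$, which uses the permutation property to conclude $(\bz_i)_j = (\bz_t)_j = y_j$. Both cases being impossible, no such $t < i$ can exist, completing the proof.
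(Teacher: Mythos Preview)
Your proof is correct and takes essentially the same approach as the paper: a contradiction argument split on whether index $j$ was actually flipped at step $t$, with Case~1 dispatched via \Cref{cor:flipped-coords-missing} and Case~2 via the observation that $f(\bz_t^{\oplus j})=0$ forces every term of $\calT_f(\bz_t)$ to contain the same literal on $j$. The only cosmetic difference is where the contradiction in Case~2 is located: the paper transports $j \notin \bU_i$ back to $j \notin \bU_t$ (using $\bU_t \subseteq \bU_i$) and derives the contradiction at time $t$, whereas you transport the ``all terms share a literal on $j$'' conclusion forward from $\calT_f(\bz_t)$ to $\calT_f(\bz_i)$ and conclude $j \in \bU_i$ directly; these are dual uses of the same monotonicity from \Cref{lem:sat-term-decreasing}.
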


\begin{proof}
    Towards a contradiction, suppose that $\bpi(t) = j$ for some $t < i$. Note that we cannot have flipped $j$ on line 2(c)ii, as if we had, then \Cref{cor:flipped-coords-missing} tells us that no term in $\calT_f(\bz_i)$ would contain a literal involving $j$. Thus, we must not have flipped $j$, which means that $f(\bz_t^{\oplus j}) = 0$. Note now that $j \not \in \bU_t$, since $j \not \in \bU_i$ and $\bU_i \supseteq \bU_t$ by \Cref{lem:sat-term-decreasing}. If some term in $\calT_f(\bz_t)$ did not contain a literal involving $j$, then we would have $f(\bz_t^{\oplus j}) = 1$, which is a contradiction; so it must be the case that there exists both a term in $\calT_f(\bz_t)$ containing $x_j$, and a term in $\calT_f(\bz_t)$ containing $\overline{x_j}$. However, this cannot happen either as $\bz_t$ cannot satisfy both of these terms. Since we have reached a contradiction in all cases, this completes the proof.
\end{proof}

We also quickly record the fact that under our usual assumption that no coordinates in the protected set for $y$ have been flipped so far, the set of indices of literals in a stripped term in $\calT_f(\bz_i) \setminus (P(y) \cup \bU_i)$ is non-increasing.

\begin{corollary}
\label{cor:stripped-terms-decrease}
    Let $f$ be a DNF of size $k$ and $y \in \zo^n$ be such that $y$ satisfies $f_{> \tau}$ and not $f_{\leq \tau}$. Suppose that $i$ is such that $(\bz_i)_a = y_a$ for all $a \in P(y)$, then
        \[\calT_f(\bz_i) \setminus (P(y) \cup \bU_i) \subseteq \calT_f(\bz_{i-1}) \setminus (P(y) \cup \bU_{i-1}) \]
    where $\bU_i$ denotes the set of unanimous indices among $\calT_f(\bz_i)$.
\end{corollary}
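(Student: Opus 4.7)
My plan is to fix an arbitrary $T \in \calT_f(\bz_i)$ and produce a term $T' \in \calT_f(\bz_{i-1})$ such that $T' \setminus (P(y) \cup \bU_{i-1})$ equals $T \setminus (P(y) \cup \bU_i)$ as a literal set; doing this for every $T$ gives the claimed family-level set inclusion. By \Cref{lem:sat-term-decreasing}, $\calT_f(\bz_i) \subseteq \calT_f(\bz_{i-1})$, so $T$ itself already lies in $\calT_f(\bz_{i-1})$ and is the natural first candidate for $T'$.

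The structural fact the proof rests on is the reverse containment $\bU_{i-1} \subseteq \bU_i$ of unanimous index sets. This is immediate from $\calT_f(\bz_i) \subseteq \calT_f(\bz_{i-1})$: if every term in the larger family carries the same literal on coordinate $j$, then so does every term in the subfamily. Consequently $P(y) \cup \bU_{i-1} \subseteq P(y) \cup \bU_i$, which yields the one-sided literal-set inclusion $T \setminus (P(y) \cup \bU_i) \subseteq T \setminus (P(y) \cup \bU_{i-1})$ essentially for free.

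The crux, and where I expect the main obstacle, is upgrading this one-sided containment to an equality of literal sets --- equivalently, showing that each $j \in \bU_i \setminus \bU_{i-1}$ either lies in $P(y)$ or contributes no literal to $T$. For such a $j$, some $T^* \in \calT_f(\bz_{i-1}) \setminus \calT_f(\bz_i)$ must block unanimity on $j$ in the larger family; since $\bz_{i-1}$ satisfies $T^*$ as well as every term of $\calT_f(\bz_i)$ (all of which carry the unanimous literal on $j$), the only consistent option is that $T^*$ has no literal on $j$ at all. I would then use the single-bit update $\bz_i = \bz_{i-1}^{\oplus \bpi(i-1)}$ --- the only way $\calT_f$ can shrink between consecutive steps --- together with \Cref{def:protected} and the running hypothesis $(\bz_i)_a = y_a$ for $a \in P(y)$, either to place $j$ directly into $P(y)$, or else to swap $T'$ from $T$ to the dropped-out witness $T^*$, whose stripped image under $P(y) \cup \bU_{i-1}$ already omits the literal on $j$. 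Threading this selection between $T$ and $T^*$ so that a \emph{single} $T'$ simultaneously accounts for every problematic $j$ is the delicate heart of the argument, and I anticipate it will require a careful case analysis based on whether $T$ itself is removed from $\calT_f(\bz_i)$ by flipping $\bpi(i-1)$.
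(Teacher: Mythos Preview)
The paper's entire proof is exactly what you call the ``essentially for free'' part: it records $\calT_f(\bz_i)\subseteq\calT_f(\bz_{i-1})$ (from \Cref{lem:sat-term-decreasing}) and $\bU_{i-1}\subseteq\bU_i$ (since unanimity over a larger family implies unanimity over a subfamily), and then simply says these two inclusions combine to give the corollary. There is no attempt to ``upgrade'' the per-term sub-term relation $T\setminus(P(y)\cup\bU_i)\subseteq T\setminus(P(y)\cup\bU_{i-1})$ to an equality; the authors are content to read the conclusion at that level of coarseness, and their downstream uses of the corollary (cardinality drops in the proof of \Cref{claim:inductive-find-stem}) are consistent with that reading.

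Your proposed extra work, by contrast, is aimed at a genuinely stronger statement --- that each stripped term on the left literally appears as a stripped term on the right --- and the sketched mechanism for it does not go through. The witness $T^\ast\in\calT_f(\bz_{i-1})\setminus\calT_f(\bz_i)$ that lacks a literal on some $j\in\bU_i\setminus\bU_{i-1}$ need not agree with $T$ on any coordinate outside $P(y)\cup\bU_i$, so $T^\ast\setminus(P(y)\cup\bU_{i-1})$ will in general be nothing like $T\setminus(P(y)\cup\bU_i)$; swapping to $T^\ast$ trades the problem at $j$ for arbitrary discrepancies elsewhere. With several problematic indices $j$ (each potentially witnessed by a different $T^\ast$), there is no reason a single $T'$ with the required stripped image exists at all. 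Finally, your closing case split is moot: you have fixed $T\in\calT_f(\bz_i)$, so by definition $T$ is \emph{not} among the terms removed by the flip at step $i-1$. In short, the two-line argument you already have is the paper's proof; the elaborate ``upgrade'' is both beyond what the paper does and, as outlined, not salvageable.
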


\begin{proof}
    Note that by \Cref{lem:sat-term-decreasing}, it follows that $\calT_f(\bz_i) \subseteq \calT_f(\bz_{i-1})$. As mentioned in the previous proof, this implies that $\bU_{i-1} \subseteq \bU_i$. Combining these two inclusions yields the corollary.
\end{proof}

We next observe that under our usual assumption that no coordinates in the protected set for $y$ have been flipped so far, flipping any variable $j \in \bU_i \setminus P(y)$ will result in $f(\bz_i^{\oplus j}) = 0$.

\begin{lemma}
\label{lem:uni-vars-flip-zero}
Let $f$ be a DNF of size $k$ and $y \in \zo^n$ satisfy $f_{> \tau}$ and not $f_{\leq \tau}$. Let $i$ be such that $(\bz_i)_a = y_a$ for all $a \in P(y)$. 
Then we have that $f(\bz_i^{\oplus j})=0$ for all $j \in \bU_i \setminus P(y)$.
\end{lemma}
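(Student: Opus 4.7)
The plan is to fix any $j \in \bU_i \setminus P(y)$ and to show, case by case on each term $T \in f$, that $T$ is not satisfied by $\bz_i^{\oplus j}$. Since $f(\bz_i) = 1$ (we only update $\bz_i$ to $\bz_i^{\oplus \bpi(i)}$ when the latter still satisfies $f$), we will freely use that $\calT_f(\bz_i) \neq \emptyset$ at the relevant moment. The key observation that unlocks the argument is that $j$ being unanimous for $\calT_f(\bz_i)$ imposes strong structure on how $j$ may or may not have been touched during the loop so far.

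First I would treat the easy case $T \in \calT_f(\bz_i)$. By the definition of unanimous index, $T$ contains either $x_j$ or $\overline{x_j}$, and since $\bz_i$ satisfies $T$, it satisfies this particular literal; flipping bit $j$ therefore turns this literal false and kills $T$ in $\bz_i^{\oplus j}$.

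Next I would handle the cases $T \notin \calT_f(\bz_i)$, splitting according to whether $y$ satisfies $T$. If $y \not\models T$, let $a \in P(y)$ be the smallest index of a literal of $T$ unsatisfied by $y$; by the hypothesis $(\bz_i)_a = y_a$ the literal at $a$ is still unsatisfied by $\bz_i$, and since $j \notin P(y)$ we have $j \neq a$, so flipping $j$ leaves this witness untouched and $T$ remains unsatisfied. If instead $y \models T$, then some literal of $T$ at some index $a'$ must be unsatisfied by $\bz_i$ but satisfied by $y$, so $(\bz_i)_{a'} \neq y_{a'}$; that is, coordinate $a'$ was flipped at some earlier step. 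To conclude, I need $j \neq a'$ so that the literal at $a'$ survives the flip. This is where Corollary~\ref{cor:flipped-coords-missing} comes in: if we had $j = a'$, then $j$ is a flipped coordinate and the corollary says no term of $\calT_f(\bz_i)$ may contain a literal on $j$, contradicting $j \in \bU_i$ together with $\calT_f(\bz_i) \neq \emptyset$.

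The main obstacle is precisely this last subcase, since it requires combining the hypothesis that no coordinate of $P(y)$ has been flipped (via Corollary~\ref{cor:flipped-coords-missing}) with the unanimity of $j$ to rule out the degenerate scenario $j = a'$. The remaining two cases are essentially bookkeeping on the definitions of $\bU_i$ and $P(y)$.
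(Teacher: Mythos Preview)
Your proposal is correct and follows essentially the same approach as the paper's proof: both split into the cases $T \in \calT_f(\bz_i)$ (killed by unanimity of $j$) and $T \notin \calT_f(\bz_i)$, further splitting the latter by whether $y$ satisfies $T$, and both invoke \Cref{cor:flipped-coords-missing} to rule out that the witness coordinate for a term with $y \models T$ and $\bz_i \not\models T$ could equal $j$.
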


\begin{proof}
    Consider a term $T \in f$ that is not satisfied by $\bz_i$. If $T$ is not satisfied by $\bz_0$ then there is a variable in $P(y)$ certifying this. Thus, $\bz_i^{\oplus j}$ cannot satisfy any term that $\bz_0$ doesn't satisfy. On the other hand, if $T$ is satisfied by $\bz_0$, then there must be some $t < i$ such that $\bpi(t)$ certifies that $\bz_i$ doesn't satisfy $T$. That said, by \Cref{cor:flipped-coords-missing}, we have that no variable in $\bU_i$ was flipped at time $t < i$. Thus, $\bpi(t) \not \in \bU_i$
    and $\bz_i^{\oplus j}$ again does not satisfy $T$. Thus, in either case, we have that flipping any variable $j \in \bU_i \setminus P(y)$ will not cause $\bz_i$ to satisfy a $T \not \in \calT_f(\bz_i)$. Since every term in $\calT_f(\bz_i)$ has a literal corresponding to the value of $(\bz_i)_j$, it then follows that $f(\bz_i^{\oplus j}) = 0$ for every $j \in \bU_i \setminus P(y)$ as claimed.
\end{proof}

We will need one last ingredient. This is a lemma which promises that under our usual assumption that no coordinates in the protected set for $y$ have been flipped so far, \genterm{} will output a valid stem when run on $\bz_i$ if there exists a stripped term in $\calT_f(\bz_i) \setminus (P(y) \cup \bU_i)$ of length at most $k$.

\begin{lemma}
\label{lem:genterm-output-stem}
Let $f$ be a DNF of size $k$ and $y \in \zo^n$ satisfy $f_{> \tau}$ and not $f_{\leq \tau}$. Let $i$ be such that $(\bz_i)_a = y_a$ for all $a \in P(y)$ and consider the $i$th step of the loop on line 2.(c).  
If $\calT_f(\bz_i) \setminus (P(y) \cup \bU_i)$ contains a stripped term of length at most $k$, then $\genterm(f,\bz_i)$ outputs a valid stem of a term satisfied by $y$.
\end{lemma}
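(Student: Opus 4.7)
The plan is to fix a term $T \in \calT_f(\bz_i)$ whose stripped version $T \setminus (P(y) \cup \bU_i)$ has size at most $k$ (such a $T$ exists by hypothesis), and to show that the term $T'$ returned by $\genterm(f,\bz_i)$ satisfies the two conditions required of a valid stem of $T$: $T' \subseteq T$ and $|T \setminus T'| \leq 2k$. Along the way I will also verify that $y$ satisfies $T$, so that $T$ is indeed a term satisfied by $y$ as the statement requires. Throughout, let $I = \{j : f(\bz_i^{\oplus j}) = 0\}$ (as in the definition of $\genterm$) and let $L(T) \subseteq [n]$ denote the set of indices of literals in $T$.

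First, iterating \Cref{lem:sat-term-decreasing} from $\bz_0 = y$ up to $\bz_i$ yields $\calT_f(\bz_i) \subseteq \calT_f(y)$, so $y$ satisfies $T$. For the inclusion $T' \subseteq T$, pick any $j \in I$: since $\bz_i$ satisfies $T$ but $f(\bz_i^{\oplus j}) = 0$, flipping coordinate $j$ must falsify $T$, which forces $T$ to contain a literal at index $j$ matching $(\bz_i)_j$ (otherwise $T$ would remain satisfied). This is exactly the literal that $T'$ places at index $j$. Since every literal of $T'$ is also a literal of $T$, and $y$ satisfies $T$, we further get $T'(y) = 1$, which ensures that the outer loop of \findcandidatestem{} actually adds $(T',\emptyset)$ to $\calL_y$.

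For the length bound, the inclusion just established implies $|T \setminus T'| = |L(T) \setminus I|$. I would partition $L(T) \setminus I$ into three pieces: (i) indices in $L(T) \setminus (P(y) \cup \bU_i)$, of size at most $k$ by the hypothesis on the stripped term; (ii) indices in $L(T) \cap P(y)$, of size at most $|P(y)| \leq k - 1$ (since $P(y)$ contributes at most one index per term of $f$ unsatisfied by $y$, and $y$ already satisfies $T$, so at most $k-1$ terms are unsatisfied); and (iii) indices in $L(T) \cap (\bU_i \setminus P(y))$, which is empty because \Cref{lem:uni-vars-flip-zero} guarantees that every $j \in \bU_i \setminus P(y)$ lies in $I$. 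Summing these gives $|T \setminus T'| \leq k + (k-1) \leq 2k$, completing the verification that $T'$ is a valid stem of $T$.

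The only step that requires genuine insight is piece (iii): without \Cref{lem:uni-vars-flip-zero}, the contribution from unanimous indices could be as large as $|\bU_i|$, which would blow the $2k$ budget and force a much weaker notion of valid stem. Everything else is a direct unwinding of the definitions of $I$, $P(y)$, $\bU_i$, and the stripped term, together with the chain of inclusions $\calT_f(\bz_i) \subseteq \calT_f(y)$ supplied by \Cref{lem:sat-term-decreasing}.
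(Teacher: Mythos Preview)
Your proof is correct and follows essentially the same approach as the paper: both arguments fix a term $T$ with short stripped version, use \Cref{lem:uni-vars-flip-zero} to show that all indices in $\bU_i \setminus P(y)$ land in $I$, and then bound the missing literals by $|T \setminus (P(y) \cup \bU_i)| + |P(y)| \leq 2k$, with the inclusion $T' \subseteq T$ and the fact that $y$ satisfies $T$ handled exactly as you do. (One tiny wording quibble: in piece~(iii) the set $L(T)\cap(\bU_i\setminus P(y))$ itself need not be empty---it is its intersection with $\overline{I}$ that vanishes---but your justification makes the intended meaning clear.)
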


\begin{proof}
    Let $T_{\text{stripped}} \in \calT_f(\bz_i) \setminus (P(y) \cup U_i)$ be the stripped term of length at most $k$. Note that by \Cref{lem:uni-vars-flip-zero}, flipping an index $j \in \bU_i \setminus P(y)$ will result in $f(\bz_i^{\oplus j}) = 0$.
    Now let $T_{\text{unstripped}}$ denote a term of $T$ corresponding to $T_{\text{stripped}}$. As $f(\bz_i^{\oplus j}) = 0$ for every $j \in \bU_i \setminus P(y)$, we conclude that \genterm$(f,\bz_i)$ outputs a term that includes all literals in $T_{\text{unstripped}}$ involving indices in $\bU_i \setminus P(y)$. By construction there are then at most $|T_{\text{stripped}}| + |P(y)| \leq 2k$ literals missing from the term output by $\genterm(f,\bz_i)$ in $T_{\text{unstripped}}$. On the other hand, $\genterm(f,\bz_i) \subseteq T_{\text{unstripped}}$, as line $2$ of $\genterm(f,\bz_i)$ will not add a literal corresponding to an index $j \in [n]$ with no literal in $T_{\text{unstripped}}$ since $\bz_i^{\oplus j}$ satisfies $T_{\text{unstripped}}$. Finally, by \Cref{lem:sat-term-decreasing}, $y$ satisfies $T_{\text{unstripped}}$. Thus, we have that \genterm$(f,\bz_i)$ outputs a valid stem as promised.
\end{proof}

\subsection{The main lemma} \label{sec:main-lemma}
We can now put the above results together to prove our main lemma underlying \Cref{lem:find-stem}, which states that a particular iteration of the loop on Line $2$ of \findcandidatestem{} will indeed output a valid stem with non-negligible probability:

\begin{lemma}
\label{lemma:find-stem-single-round}
    Suppose that $f$ is a size $k$ DNF and $y \in \zo^n$ satisfies $f_{> \tau}$ and not $f_{\leq \tau}$. Then any particular iteration of the while loop on Line $2$ of $\findcandidatestem(f,y)$ outputs a valid stem for some term $T$ satisfying $y$ with probability $2^{-\wt{O}(\sqrt{k})}$.
\end{lemma}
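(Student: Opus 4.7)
The plan is to exhibit a good event $E$ on the random permutation $\bpi$ with $\Pr[E] \geq 2^{-\tilde{O}(\sqrt{k})}$, under which a single iteration of the loop on Line 2 produces a valid stem. By \Cref{lem:genterm-output-stem}, it suffices for $E$ to guarantee that at some step $i^\star$ of the sweep, (a) no coordinate of $P(y)$ has been flipped yet, and (b) the stripped set $\calT_f(\bz_{i^\star}) \setminus (P(y) \cup \bU_{i^\star})$ contains a term of length at most $k$. Under these conditions, $\genterm(f,\bz_{i^\star})$ outputs a valid stem for a term of $f$ satisfied by $y$; since the stem is a subset of that term's literals, it is itself satisfied by $y$ and thus added to $\calL_y$ on Line 2(c)i.

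My strategy for (b) is to drive $\calT_f(\bz_{i^\star})$ down to the singleton $\{T^\star\}$ for a fixed target term $T^\star \in \calT_f(y)$ (which exists with $|T^\star| > 1000k$ by the hypothesis on $y$). Once $\calT_f(\bz_{i^\star}) = \{T^\star\}$, every literal-index of $T^\star$ not in $P(y)$ is trivially unanimous, yielding stripped length at most $|P(y)| \leq k-1$. For each competing term $T^{(j)} \in \calT_f(y) \setminus \{T^\star\}$, eliminating it from $\calT_f$ requires flipping a ``killer'' literal-index of $T^{(j)}$ not shared with $T^\star$ (and not in $P(y)$): by \Cref{lem:sat-term-decreasing} and \Cref{cor:flipped-coords-missing}, such a flip removes $T^{(j)}$ while preserving $T^\star$. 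In the degenerate case where $T^{(j)}$'s literal-indices are contained in $T^\star$'s (with matching polarities, as forced by both being satisfied by $y$), no such killer exists --- but whenever $\calT_f(\bz_i)$ reduces so that $T^{(j)}$ is its shortest-support term, every literal-index of $T^{(j)}$ lies in $\bU_i$, so the stripped version of $T^{(j)}$ has length $0 \leq k$ and we obtain a valid stem for $T^{(j)}$ instead.

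I would formalize $E$ as a requirement on $\bpi$ that its first $K = \tilde{O}(\sqrt{k})$ positions contain a killer for each non-degenerate competing term, contain no coordinate of $P(y)$, and contain no literal-index of $T^\star$ (so that $T^\star$ is not inadvertently killed by a collateral flip on a shared index at a step where some third term is absent from $\bU_i$). Under $E$, the first $K$ steps execute only non-$P(y)$, $T^\star$-preserving flips that eliminate every non-degenerate competitor, yielding $\calT_f(\bz_K) = \{T^\star\}$ (up to degenerate extensions handled above). A combinatorial estimate over uniform permutations should then deliver $\Pr[E] \geq 2^{-\tilde{O}(\sqrt{k})}$, the $\sqrt{k}$ exponent emerging from a balance between $\tilde{O}(\sqrt{k})$ placement constraints, each losing a $\poly(k)$ factor. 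The main obstacle will be the secondary flip mechanism for $P(y)$ coordinates: an $a \in P(y)$ may still trigger a flip at its step $i$ if some term $T'$ has $a$ as its unique $\bz_i$-unsatisfied literal, and this possibility---which is not directly controlled by the position of $a$ in $\bpi$---must be ruled out by carefully tracking the set of literals flipped in the first $K$ steps and exploiting the randomness of $\bpi$ to show such near-satisfying coincidences arise only with probability $2^{-\tilde{O}(\sqrt{k})}$.
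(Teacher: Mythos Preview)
Your proposal has a genuine gap at its core counting step. You aim to define an event $E$ in which the first $K=\tilde{O}(\sqrt{k})$ positions of $\bpi$ contain a killer index for \emph{every} competing term $T^{(j)}\in\calT_f(y)\setminus\{T^\star\}$, avoid $P(y)$, and avoid all literal-indices of $T^\star$. But $|\calT_f(y)|$ can be as large as $k$, and the competing terms may have pairwise disjoint variable sets (take $k$ terms of length $1000k$ over disjoint blocks, all satisfied by $y$). In that case, eliminating each competitor requires a \emph{distinct} killer index, so at least $k-1$ of the first $K$ positions must land in killer sets. This is simply impossible when $K=\tilde{O}(\sqrt{k})$; no permutation lies in $E$. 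Your heuristic ``$\tilde{O}(\sqrt{k})$ placement constraints, each losing a $\poly(k)$ factor'' mis-counts: there are $\Theta(k)$ elimination constraints, not $\tilde{O}(\sqrt{k})$. If you enlarge $K$ to $\Theta(k)$ to make $E$ nonempty, the probability of simultaneously avoiding all of $T^\star$'s (at least $1000k$) literal-indices and $P(y)$ through $\Theta(k)$ relevant-index draws collapses to $2^{-\Omega(k)}$ at best, recovering only the Blum--Rudich rate.

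The paper's argument is structurally different and does \emph{not} attempt to isolate a single target term. It works instead with the potential $m:=|\calT_f(\bz_i)\setminus(P(y)\cup\bU_i)|$ and proves by induction on $m$ that a valid stem is output with probability at least $e^{-100\sqrt{m}\log k}$. The $\sqrt{k}$ exponent arises from a two-case dichotomy at each inductive step: either the union of stripped terms is large ($\geq k\sqrt{m}$), so the next relevant coordinate avoids $P(y)$ with probability $\geq 1-1/\sqrt{m}$ and $m$ drops by at least $1$; or the union is small, so by averaging some literal lies in $\geq\sqrt{m}$ stripped terms, and with probability $\geq 1/(2k^{3/2})$ the next relevant coordinate is that literal, dropping $m$ by $\sqrt{m}$. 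The recursion $p(m)\geq\min\bigl\{(1-1/\sqrt{m})\,p(m-1),\ (2k^{3/2})^{-1}p(m-\sqrt{m})\bigr\}$ is what yields $e^{-O(\sqrt{m}\log k)}$. The crucial point you are missing is that one never needs $\calT_f(\bz_i)$ to become a singleton: once \emph{any} stripped term has length $\leq k$, \Cref{lem:genterm-output-stem} fires, and the unanimous-index mechanism makes this happen long before all competitors are eliminated.
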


\begin{proof}
    We will prove the following claim by induction:
    \begin{claim}
    \label{claim:inductive-find-stem}
        Let $f$ be a size $k$ DNF and $y \in \{0,1\}^n$ satisfy $f_{> \tau}$ and not $f_{\leq \tau}$. Consider some iteration of the while loop on line $2$ of $\findcandidatestem(f,y)$ and fix an $i \in \{0,1, \dots n\}$. Moreover, suppose $(i)$ $\genstem(f,\bz_i)$ fails to output a valid stem in round $i$ and $(ii)$ $(\bz_i)_a = y_a$ for all $a \in P(y)$. Let $\bU_i$ denote the set of unanimous indices of $\calT_f(\bz_i)$. Assuming that $\genstem(f,\bz_i)$ did not output a valid stem, it then follows that one of the future iterations outputs a valid stem with probability at least $e^{-100\sqrt{|\calT_f(\bz_i) \setminus (P(y) \cup \bU_i)|}\log(k)}$.
    \end{claim}

\Cref{claim:inductive-find-stem} easily yields \Cref{lemma:find-stem-single-round} as follows: Since $\bz_0=y$, $\genstem(f,\bz_0)$ either does or does not output a valid stem. If it does, then we are done. If it does not, we can apply \Cref{claim:inductive-find-stem} with $i=0$ to conclude that some subsequent execution of $\genstem(f,\bz_{i'})$ outputs a valid stem with probability at least 
        \[e^{-100 \sqrt{|\calT_f(\bz_0) \setminus (P(y) \cup \bU_0)|}\log(k)} \geq e^{-100\sqrt{k}\log(k)}\]
    as desired. 

We now turn to prove the claim.
    \begin{proof}[Proof of \Cref{claim:inductive-find-stem}]
As promised, we will prove the statement by induction on $|\calT_f(\bz_i) \setminus (P(y) \cup \bU_i)|$. For the base case, suppose that $|\calT_f(\bz_i) \setminus (P(y) \cup \bU_i)| = 1$. It then follows that \genstem$(f,\bz_i)$ finds a valid stem, which means that \Cref{claim:inductive-find-stem} holds vacuously since its condition $(i)$ is violated.
To see that \genstem$(f,\bz_i)$ finds a valid stem,
fix a term $T$ satisfied by $\bz_i$. We claim that $T \setminus (P(y) \cup \bU_i) = \emptyset$. Indeed, for any literal in $T$ outside of $P(y)$ and $\bU_i$, there must exist a term of $\calT_f(\bz_i)$ that does not include that literal, as otherwise the corresponding index would appear in $\bU_i$. But this violates $|\calT_f(\bz_i) \setminus (P(y) \cup \bU_i)| = 1$. Thus, the stripped terms all have length $0$ and we output a stem by \Cref{lem:genterm-output-stem}.

We now move on to the inductive step; so we suppose that the claim is true for all $|\calT_f(\bz_i) \setminus (P(y) \cup \bU_i)| \leq m - 1$, and we wish to prove the claim when this set has size $m$. Our proof uses the following objects:

\begin{itemize}
\item Let $R \subseteq [n]$ denote the set of all indices occurring in either $P(y)$ or in some literal in 
\[\bigcup_{T \in  \calT_f(\bz_i) \setminus (P(y) \cup \bU_i)} T.\]

\item Let $\bj \geq i$ denote the first time at or after $i$ that $\bpi(\bj) \in R$.

\end{itemize}

Our argument considers two cases based on the size of the above set $\bigcup_{T \in  \calT_f(\bz_i) \setminus (P(y) \cup \bU_i)} T$.

\paragraph{\textsc{Case I:}} $|\bigcup_{T \in  \calT_f(\bz_i) \setminus (P(y) \cup \bU_i)} T| \geq k \sqrt{|\calT_f(\bz_i) \setminus (P(y) \cup \bU_i)|}$.
    Intuitively, in this case there are many ``safely flippable variables'' (recall the discussion after \Cref{eq:keyset}). Roughly speaking, in this case we will use the fact that the variable chosen at round $\bj$ is quite unlikely to be in the protected set (see \Cref{claim:missP}) to show that we output a valid stem either at round $\bj+1$ or at a later round with high enough probability to give the inductive statement.

    Let us show that $\bpi(\bj)$ is very likely to miss the protected set:
    \begin{claim} \label{claim:missP}
        \[ \Pr [\bpi(\bj) \not \in P(y)] \geq \left( 1 - \frac{1}{\sqrt{m}} \right). \]
    \end{claim}

    \begin{proof}
        Note that by \Cref{lem:stripped-indices-flippable}, we have $\bpi(t) \not \in R \setminus P(y)$ for all $t < i$. Thus
            \[\Pr \left [\bpi(\bj) \not \in P(y) \right] \geq \left(1 - \frac{|P(y)|}{|R|} \right). \]
        To finish the proof we then note that
        \[|R \setminus P(y)| \geq \left| \bigcup_{T \in  \calT_f(\bz_i) \setminus (P(y) \cup \bU_i)} T \right| \geq k\sqrt{m},\]
where the second inequality holds by the defining condition for Case~I, recalling that $|\calT_f(\bz_i) \setminus (P(y) \cup \bU_i)|=m$.  The first inequality holds because for each index there is a unique literal that could occur in a term appearing in the union. (Normally, there could be two such literals, but since $\bz_i$ satisfies all these terms there there is only one.)
    \end{proof}

    Now suppose that $\bpi(\bj) \not \in P(y)$. Since $\bpi(\bj) \in R$, it follows that $|\calT_f(\bz_{\bj+1}) \setminus (P(y) \cup \bU_{\bj+1})| < |\calT_f(\bz_{i}) \setminus (P(y) \cup \bU_{i})|$, where $\bU_{\bj+1}$ denotes the set of unanimous indices in $\calT_f(\bz_{\bj+1})$. Indeed, if we flipped $\bpi(\bj)$
    , then there was some stripped term  in $\calT_f(\bz_{i}) \setminus (P(y) \cup \bU_{i})$
    that contained a literal involving $\bpi(\bj)$, but by \Cref{cor:flipped-coords-missing}, no term in $\calT_f(\bz_{\bj+1})$ can contain a literal involving $\bpi(\bj)$. 
    If we did not flip $\bpi(\bj)$, then we must have that $\bpi(\bj) \in \bU_{\bj}$. However, $\bpi(j) \not \in \bU_i$, so we again conclude that some stripped term is no longer satisfied by $\bz_{j+1}$. So whether or not we flipped $\bpi(\bj)$,  \Cref{cor:stripped-terms-decrease} then gives us that $|\calT_f(\bz_{\bj+1}) \setminus (P(y) \cup \bU_{\bj+1})| < |\calT_f(\bz_{i}) \setminus (P(y) \cup \bU_{i})|$ as desired.

    Now  let
        \[p := \Pr \left[\text{we output a valid stem at time $\bj+1$} \bigg|  \bpi(\bj) \not \in P(y) \right]. \]
    Applying the inductive hypothesis at time $\bj + 1$, we get that
        \begin{align*}&\Pr \left[ \text{we output a valid stem after step $\bj+1$} \bigg| \bpi(\bj) \not \in P(y) \land \text{no valid stem at time $\bj + 1$} \right]\\
        &\geq e^{-100 \sqrt{m - 1} \log(k)}.
        \end{align*}

    Thus we have
    \begin{align*}
        \Pr \left[ \text{we output a valid stem after step $i$} \right] &\geq \Pr[\bpi(\bj) \not \in P(y)] \cdot \left(e^{-100 \sqrt{m - 1} \log(k)}(1-p) + p \right).  \\
    \intertext{As this is minimized when $p = 0$, we can lower bound this by}
        &\geq \Pr[\bpi(\bj) \not \in P(y)] \cdot \left(e^{-100 \sqrt{m - 1} \log(k)} \right) \\
        &\geq \left(1 - \frac{1}{\sqrt{m}} \right) \cdot \left(e^{-100 \sqrt{m - 1} \log(k)} \right) 
        \tag{by \Cref{claim:missP}}\\
        &\geq e^{-100 \sqrt{m - 1} \log(k)} \cdot e^{-2/\sqrt{m}} \\
        &\geq e^{-100 \left( \sqrt{m - 1} + \frac{0.02}{\sqrt{m}} \right) \log(k) } \\
        &\geq e^{-100 \sqrt{m} \log(k)}
    \end{align*}

    where we used that $1 - x \geq e^{-2x}$ for $0 \leq x \leq \frac{1}{\sqrt{2}}$ and $\sqrt{x} - \sqrt{x-1} \geq \frac{1}{2\sqrt{x}}$ for all $x \geq 1$. This completes the inductive step in Case~I.

\paragraph{\textsc{Case II:}} $|\bigcup_{T \in \calT_f(\bz_i) \setminus (P(y) \cup \bU_i)} T| \leq k \sqrt{|\calT_f(\bz_i) \setminus (P(y) \cup \bU_i)|}$. Intuitively, in this case the number of ``safely flippable variables'' is not very high.
Roughly speaking, we will use this to argue the existence of some ``popular index'' which, if it is chosen by $\bpi$ at round $\bj$, causes $|\calT_f(\bz_i) \setminus (P(y) \cup \bU_i)|$ to go down substantially (see \Cref{eq:godown}), and this will let us push the inductive claim through.

We begin by proving that such a ``popular index'' exists:
    \begin{claim}
    There exists a literal $\ell$ involving some index $a \in [n] \setminus (P(y) \cup \bU_i)$ that appears in at least $\sqrt{m}$ stripped terms in $\calT_f(\bz_i) \setminus (P(y) \cup \bU_i)$
    \end{claim}

    \begin{proof}
        We double count $(\ell, T)$ pairs, where $\ell$ is a literal and $T$ is a stripped term in $\calT_f(\bz_i) \setminus (P(y) \cup \bU_i)$ containing $\ell$. On the one hand, since we didn't output a valid stem, we must have that each stripped term has at least $k$ literals by \Cref{lem:genterm-output-stem}. Thus, there are at least 
            \[|\calT_f(\bz_i) \setminus (P(y) \cup \bU_i)| \cdot k\]
        many such pairs. On the other hand, let $c_{\max}$ denote the maximum number of stripped terms that contain a particular literal i.e. 
        \[c_{\max} := \max_{\ell} |\{T \in \calT_f(\bz_i) \setminus (P(y) \cup \bU_i): \ell \in T\}| \]
        
        The number of $(\ell,T)$ pairs is then bounded by
            \[c_{\max} \cdot k \sqrt{|\calT_f(\bz_i) \setminus (P(y) \cup \bU_i)|}\]
        Thus we conclude that
            \[c_{\max} \cdot k \sqrt{|\calT_f(\bz_i) \setminus (P(y) \cup \bU_i)|} \geq |\calT_f(\bz_i) \setminus (P(y) \cup \bU_i)| \cdot k\]
        Rearranging then yields the claim.
    \end{proof}
    
    Now, consider the literal $\ell$ promised by the claim and let $a$ be the index it involves. By \Cref{lem:stripped-indices-flippable}, $a$ has not yet occurred under $\bpi$ before time $i$. We then have that
    
    \[\Pr \left[ \bpi(\bj) = a \right] \geq \frac{1}{|R|} \geq \frac{1}{k + \left| \bigcup_{T \in \calT_f(\bz_i) \setminus (P(y) \cup \bU_i)} T \right|} \geq \frac{1}{2k^{3/2}}. \]

    We now note that $f(\bz_{\bj}^{\oplus a}) = 1$, implying that we will flip $a$ in round $\bj$. To see this, note that since (by definition of $\bj$) we only flip variables that are in $[n] \setminus R$ between the $i$th and $\bj$th iterations, $\calT_f(\bz_i) = \calT_f(\bz_j)$. (More formally, this can be shown by noting that flipping irrelevant variables does not change $\calT_f$, we don't flip variables in $P(y) \subseteq R$ by assumption, and variables in $\bU_i \setminus P(y)$ won't be flipped by \Cref{lem:uni-vars-flip-zero}.) Now by assumption, 
    $a \not \in \bU_{i}$; since  $\calT_f(\bz_i) = \calT_f(\bz_j)$ we have $\bU_i=\bU_j$, so we conclude $a \not \in \bU_j$. 
    Hence $\bpi(\bj)$ is flipped in the $\bj$th round if $\bpi(\bj) = a$. 

    Now invoking \Cref{cor:stripped-terms-decrease} and \Cref{cor:flipped-coords-missing} with the fact that $a$ appeared in at least $\sqrt{m}$ stripped terms, we get that 
\begin{equation} \label{eq:godown}
\text{if $\bpi(\bj) = a$ then
    $|\calT_f(\bz_{\bj+1}) \setminus (P(y) \cup \bU_{\bj+1})| < |\calT_f(\bz_{i}) \setminus (P(y) \cup \bU_{i})| - \sqrt{m}$}. 
\end{equation} 
    Applying the inductive hypothesis, we note that
    \begin{align*}
        \Pr \left [ \text{we output a valid stem after $\bj + 1$} \bigg| \bpi(\bj) = a \land \text{ no valid stem at $\bj +1$} \right] &\geq e^{-100 \sqrt{m - \sqrt{m}} \log(k)} \\     
        &\geq k^{50} e^{-100 \sqrt{m} \log(k)}
    \end{align*}

    where we used that $\sqrt{x} - \sqrt{x- \sqrt{x}} \geq 1/2$ for all $x \geq 1$. If we denote 
        \[p = \Pr \left[ \text{we output a valid stem at $\bj + 1$} \bigg | \bpi(\bj) = a \right] \]
    then it follows that we output a valid stem with probability at least
    \begin{align*}
        \frac{1}{2k^{3/2}} \left( p + (1 - p) k^{50} e^{-100 \sqrt{m} \log(k)} \right) &\geq \frac{k^{48.5}}{2} e^{-100 \sqrt{m} \log(k)} \geq e^{-100 \sqrt{m} \log(k)} \\
    \end{align*}
where the first inequality used that the minimum value occurs at $p = 0$ and the second inequality used that $k \geq 2$. This completes the inductive step in Case II, and hence the proof of \Cref{claim:inductive-find-stem}. \qedhere

\end{proof}

This concludes the proof of \Cref{lemma:find-stem-single-round}.
\end{proof}

\subsection{Proof of \Cref{lem:find-stem}}

With \Cref{lemma:find-stem-single-round} in hand it is easy to prove \Cref{lem:find-stem}:

\begin{proof}[Proof of \Cref{lem:find-stem}]
Consider some iteration of the while loop on line $2$ of \Cref{alg:find-stem}. 
Applying \Cref{lemma:find-stem-single-round}, it follows that we output a valid stem with probability at least $2^{-\wt{O}(\sqrt{k})}$. As we repeat the loop on line $2$ for $2^{\wt{O}(\sqrt{k})} \cdot \log(n)$ times, we conclude that we fail to output a valid stem with probability at most
        \[ \left(1 - 2^{-\wt{O}(\sqrt{k})} \right)^{\log(n) 2^{\wt{O}(\sqrt{k})}} = n^{-k}.\]
    Finally, note that each iteration of the while loop on line $2$ can add at most $n$ stems to the list $\calL_y$ of candidate stems. It follows that we output at most $n \log(n)  2^{\wt{O}(\sqrt{k})}$ candidate stems, as desired.
\end{proof}


\section{Expressing a $k$-term DNF as an ``augmented PTF''} \label{sec:structural}

The goal of this section is to establish \Cref{lem:augmented-PTF}, which shows that any $k$-term DNF formula can be expressed as an ``augmented PTF,'' of not-too-high degree and weight, over a suitable set of stems and auxiliary variables.  To explain this, we need some terminology.

\subsection{Setup and terminology}

\begin{definition} [Eligible pair] \label{def:eligible-pair}
An \emph{eligible pair} is a pair $(T',R_{T'})$ where $T'$ is a term (conjunction of Boolean literals over $x_1,\dots,x_n$) and $R_{T'} \subseteq [n]$ is a set of indices (auxiliary variables) which are \emph{disjoint} from $T'$, meaning that if $T'$ contains either $x_j$ or $\overline{x_j}$ then $j \notin R_{T'}.$
We will sometimes refer to $T'$ as a \emph{candidate stem}.
\end{definition}

\begin{definition} [Augmented monomial] \label{def:augmented-monomial}
Fix a set ${\cal F}=\{(T'_1,R_{T'_1}),\dots,(T'_N,R_{T'_N})\}$ of eligible pairs.  An \emph{${\cal F}$-augmented monomial of degree $d$} is a product of the form $T'_j \cdot \prod_{i \in S_j} x_i$ for some $j \in [N]$ and some $S_j \subseteq R_{T'_j}$, where $|S| \leq d$. 
\end{definition}

Recall that our variables $x_i$ take values in $\{0,1\}$, and that we view a term $T'$ as also being $\{0,1\}$-valued, so an ${\cal F}$-augmented monomial is a $\{0,1\}$-valued function.  We always assume that ${\cal F}$ contains an eligible pair $(T'_i,R_{T'_i})$ with $T'_i = $ the empty term (which corresponds to the constant-1 function).

\begin{definition} [Augmented PTF] \label{def:augmented-PTF}
Fix a set ${\cal F}$ as in \Cref{def:augmented-monomial}. An \emph{${\cal F}$-augmented PTF (polynomial threshold function) of degree $d$ and weight $W$} is a  Boolean function of the form 
\[
\mathbf{1}[p(x_1,\dots,x_n,T'_1,\dots,T'_N) \geq \theta],
\quad \text{where} \quad
p(x_1,\dots,x_n,T'_1,\dots,T'_N) = \sum_{j=1}^{N} w_{j} \cdot \left( T'_j \cdot  \prod_{i \in S_j} x_i\right)
\]
is a linear combination of ${\cal F}$-augmented monomials of degree $d$ (so each $S_j$ is a subset of $R_{T'_j}$) and each \emph{weight} $w_j$ is an integer, such that the total magnitude of all weights $\sum_{j=1}^N |w_j|$ is at most $W$.
\end{definition}

\begin{definition} [${\cal F}$ is fully expressive]
    \label{def:fully-expressive}
    Let $f$ be a $k$-term DNF over $x_1,\dots,x_n$.
A set ${\cal F}$ of (candidate stem $T',$ associated subset of variables $R_{T'}$) pairs is said to be \emph{fully expressive for $f$} if 
for each term $T_i$ of $f$,
 ${\cal F}$ contains a pair $(T',R_{T'})$ such that (a) $T'$ is a stem of $T_i$ and (b) $T_i \setminus T' \subseteq R_{T'}$.
\end{definition}

\subsection{Expressing a $k$-term DNF as an augmented PTF}

The following simple lemma plays an important role in our results:  it says that any $k$-term DNF can be expressed as a ``low-degree, low-weight'' augmented PTF.  Intuitively, this representation is what the Winnow2 algorithm will try to learn in \Cref{alg:Glorious-Learn-DNF}.

\begin{lemma} [Expressing a $k$-term DNF as an augmented PTF] \label{lem:augmented-PTF}
Let $f$ be a $k$-term DNF, and let ${\cal F}$ be fully expressive for $f$.  Then $f$ can be written as an augmented PTF over ${\cal F}$, of degree at most $\dmax$ and weight at most $\Wmax$, where
\begin{equation} \label{eq:dmax-and-Wmax}
\dmax := O(\sqrt{k \log k}),
\quad\quad\quad\quad
\Wmax :=  2^{O(\sqrt{k} \log^2(k))}.
\end{equation}
\end{lemma}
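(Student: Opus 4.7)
The plan is to combine the full expressivity of $\calF$ with a Chebyshev-polynomial construction in the style of Klivans--Servedio \cite{KlivansServedio:04jcss}. Write $f = T_1 \vee \cdots \vee T_k$. By full expressivity, for each $i \in [k]$ we can fix a pair $(T'_i, R_{T'_i}) \in \calF$ with $T'_i \subseteq T_i$, $|T_i \setminus T'_i| \le 2k$, and $T_i \setminus T'_i \subseteq R_{T'_i}$. Thus, once the stem $T'_i$ is ``satisfied,'' the remaining literals of $T_i$ form an AND of at most $2k$ literals over variables in $R_{T'_i}$. The idea is to use a low-degree polynomial approximation of this short AND and multiply by $T'_i$ to get an augmented monomial expansion of $f$.

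The key analytic ingredient I would invoke is the following extremal-polynomial fact (provable via a shift/rescale of the Chebyshev polynomial $\mathcal{T}_d$): for every integer $m \ge 1$ and every $\epsilon \in (0,1)$, there is a univariate polynomial $q_{m,\epsilon}$ of degree $d = O(\sqrt{m\log(1/\epsilon)})$ satisfying $q_{m,\epsilon}(0) = 1$ and $|q_{m,\epsilon}(j)| \le \epsilon$ for all $j \in \{1,\dots,m\}$, and whose coefficients, after clearing to a common integer denominator $D$, have total integer weight at most $2^{O(\sqrt{m}\,\log(1/\epsilon))}$ (with $D$ of the same order). I would apply this with $m_i := |T_i \setminus T'_i| \le 2k$ and $\epsilon := 1/(3k)$, obtaining degree $O(\sqrt{k\log k})$ and coefficient weight $2^{O(\sqrt{k}\log k)}$ per term.

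Next, for each $i$, let $\ell^{(i)}_1,\dots,\ell^{(i)}_{m_i}$ be the literals in $T_i \setminus T'_i$ and set $\sigma_i(x) := \sum_j \ell^{(i)}_j(x)$, so $\sigma_i$ takes integer values in $\{0,1,\dots,m_i\}$ and equals $m_i$ iff all these literals are satisfied. Define
\[
P(x) \;:=\; \sum_{i=1}^{k} T'_i(x) \cdot q_{m_i,\epsilon}\!\bigl(m_i - \sigma_i(x)\bigr).
\]
I would then verify the PTF behavior term-by-term: if $T_i(x) = 1$, the $i$th summand equals $1$; if $T'_i(x) = 0$, the $i$th summand is $0$; and if $T'_i(x) = 1$ but $T_i(x) = 0$, the $i$th summand has absolute value at most $\epsilon$. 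Summing, $P(x) \ge 1 - k\epsilon \ge 2/3$ whenever $f(x)=1$, and $|P(x)| \le k\epsilon \le 1/3$ whenever $f(x)=0$. Hence $f(x) = \mathbf{1}[P(x) \ge 1/2]$, and after multiplying through by the product of denominators (which is at most $2^{O(\sqrt{k}\log k)}$) we obtain a PTF with integer weights.

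It remains to expand $P(x)$ into $\calF$-augmented monomials as in \Cref{def:augmented-monomial}. The polynomial $q_{m_i,\epsilon}(m_i - \sigma_i(x))$, viewed as a polynomial in $\sigma_i$, has degree at most $d = O(\sqrt{k\log k})$; substituting $\sigma_i = \sum_j \ell^{(i)}_j$ and expanding produces a sum of monomials in the literals $\ell^{(i)}_j$, each of degree $\le d$. Rewriting each negative literal as $\overline{x_j} = 1 - x_j$ and expanding turns each degree-$d$ literal-monomial into at most $2^d$ monomials in the variables themselves. Crucially, all variables involved lie in $T_i \setminus T'_i \subseteq R_{T'_i}$, so after multiplying by $T'_i$ each resulting product is a legal $\calF$-augmented monomial of degree at most $d = \dmax$. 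The main obstacle I expect is the weight bookkeeping: tracking the Chebyshev-coefficient weight ($2^{O(\sqrt{k}\log k)}$), the literal-to-variable expansion blowup ($2^d = 2^{O(\sqrt{k\log k})}$), the common denominator, and the factor-of-$k$ sum over terms. All of these combine multiplicatively to $2^{O(\sqrt{k}\log^2 k)} = \Wmax$, which is the bound we need.
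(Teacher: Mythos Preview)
Your proposal is correct and follows essentially the same Klivans--Servedio Chebyshev approach as the paper: use full expressivity to write each $T_i = T'_i \wedge B_i$ with $|B_i| \le 2k$, approximate each short conjunction $B_i$ by a low-degree polynomial, multiply by the stem $T'_i$, sum, and threshold. The paper packages the per-term approximator as a standalone claim (its $q_B$ is the $\log(2k)$-th power of a degree-$\lceil\sqrt{2k}\rceil$ Chebyshev polynomial, giving degree $O(\sqrt{k}\log k)$ rather than your tighter $O(\sqrt{k\log k})$), but the structure is the same.

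One small point to tighten: as written you use a different $q_{m_i,\epsilon}$ for each term and then speak of ``the product of denominators.'' With $k$ distinct denominators each of size $2^{\tilde O(\sqrt{k})}$, that product could be $2^{\tilde O(k^{3/2})}$, which would blow the weight bound. The fix is trivial---use a single polynomial $q_{2k,\epsilon}$ for all terms (since $m_i - \sigma_i(x) \in \{0,\dots,2k\}$ always), so there is one common denominator $D$; this is exactly what the paper does. With that adjustment your weight accounting goes through and lands safely inside $\Wmax = 2^{O(\sqrt{k}\log^2 k)}$.
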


\begin{proof}
We closely follow the way that \cite{KlivansServedio:04jcss} uses Chebyshev polynomials to construct low-degree polynomial threshold functions for DNF formulas that are ``not too long.''  In our context the terms of our DNF $f$ may be arbitrarily long; we use the stems, together with the fact that ${\cal F}$ is fully expressive for $f$, to effectively reduce the length of the terms of $f$ to at most $2k$.

In more detail, write $f$ as $T_1 \vee \cdots \vee T_s$. Without loss of generality, for each $i \in [k]$ we may assume that the pair $(T'_i,R_{T'_i})$ is such that (a) $T'_i$ is a stem of $T_i$ and (b) $T_i \setminus T' \subseteq R_{T'}$.  This means that there is a subset $A_i \subseteq R_{T'_i}$ such that term $T_i$ can be written as $T'_i \wedge B_i$, where $B_i$ is a conjunction of length at most $2k$ that consists entirely of literals over the variables in $A_i$.

We will prove the following standard claim later:

\begin{claim} \label{claim:approx-conjunction}
There is an integer $0 < D \leq 2^{O(\sqrt{k} \log^2(k))}$ such that the following holds: For any conjunction $B$ of length at most $2k$, there is a polynomial $q_B$ of degree at most $\dmax$ such that 
\begin{equation} \label{eq:qbproperty}
q_B(x)  
\begin{cases}
= 1 & \text{~if~}B(x)=1;\\
\in \left[-{\frac 1 {2k}}, {\frac 1 {2k}} \right] & \text{~if~}B(x)=0.
\end{cases}
\end{equation}
for every $x \in \{0,1\}^n$. 
Moreover, $q_B$ is of the form $q'_B/D$, where $q'_B$ has integer coefficients whose absolute values sum to at most $ 2^{O(\sqrt{k} \log^2(k))}.$
\end{claim}

We now complete the proof of \Cref{lem:augmented-PTF} using \Cref{claim:approx-conjunction}.
By \Cref{claim:approx-conjunction}, 
\[
\text{the polynomial} \quad  \sum_{i=1}^k T'_i \cdot q_{B_i}(x)
\quad
\text{takes values}
\quad
\begin{cases}
\geq 1& \text{~if~}f(x)=1;\\
\in \left[-{\frac 1 2}, {\frac 1 2} \right] & \text{~if~}f(x)=0.
\end{cases}
\]
Since each $q_{B_i}$ has the same denominator of $D$, clearing the denominator we get that
\[
\text{the polynomial} \quad  q(x) := D \sum_{i=1}^k T'_i \cdot q_{B_i}(x)
\quad
\text{takes values}
\quad
\begin{cases}
\geq D& \text{~if~}f(x)=1;\\
\leq {\frac D 2} & \text{~if~}f(x)=0.
\end{cases}
\]
So the desired augmented polynomial threshold function computing $f$ is 
$\mathbf{1}\left[q(x) \geq {\frac {3D} 4} \right].$
It is clear that the degree of $q$ is at most $\dmax$, and it is easy to verify that the sum of the absolute values of the (integer) coefficients of $q$ is at most $k \cdot  \left(2^{O(\sqrt{k} \log^2(k))}\right)^2 =  2^{O(\sqrt{k} \log^2(k))}.$ This proves \Cref{lem:augmented-PTF} (up to the proof of  \Cref{claim:approx-conjunction}).
\end{proof}

\begin{proof} [Proof of \Cref{claim:approx-conjunction}] 
Fix $B$ to be any conjunction of length $m \leq 2k$.
Given a boolean literal $\ell$, its arithmetization is $x_i$ if $\ell = x_i$ and is $1-x_i$ if $\ell = \overline{x_i}$.  Let 
\[
S(x) = 2k-m + (\text{the sum of the arithmetizations of the $m$ literals in $B$),}
\] 
so $S(x)=2k$ if $B(x)=1$ and $0 \leq S(x)\leq 2k-1$ if $B(x)=0$.
We will show that the desired polynomial $q_B(x)$ is
\begin{equation} \label{eq:qB}
q_B(x) = {\frac 
{C_{\lceil \sqrt2k \rceil} \left(
{\frac {S(x)}2k} \cdot \left(1+{\frac 1 2k}\right)
\right)^{\log(2k)}}
{C_{\lceil \sqrt2k \rceil} \left(
1+{\frac 1 2k} \right)^{\log(2k)}}
},
\end{equation}
where $C_d$ is the degree-$d$ Chebyshev polynomial of the first kind.
The fact that $q_B$ satisfies \Cref{eq:qbproperty} is an easy  consequence of the following well-known property of Chebyshev polynomials \cite{Cheney:66}:

\begin{fact} \label{fact:Chebyshev}
The polynomial $C_d$ satisfies
\begin{itemize}
\item [(a)] $|C_d(t)| \leq 1$ for $|t| \leq 1$, with $C_d(1)=1$;
\item [(b)] $C'_d(t) \geq d^2$ for $t>1$, with $C'_d(1)=t^2.$
\end{itemize}
\end{fact}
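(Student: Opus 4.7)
The plan is to invoke the two classical parametrizations of the Chebyshev polynomials of the first kind: for $|t| \leq 1$, write $t = \cos\theta$ with $\theta \in [0,\pi]$ and use
\[
C_d(\cos\theta) = \cos(d\theta);
\]
for $t \geq 1$, write $t = \cosh\phi$ with $\phi \geq 0$ and use
\[
C_d(\cosh\phi) = \cosh(d\phi).
\]
Both identities follow by a straightforward induction on $d$ from the three-term recurrence $C_{d+1}(t) = 2t\,C_d(t) - C_{d-1}(t)$ combined with the cosine and hyperbolic-cosine addition formulas $\cos((d{+}1)\theta) = 2\cos\theta\cos(d\theta) - \cos((d{-}1)\theta)$ and its hyperbolic analogue, starting from the base cases $C_0 = 1$ and $C_1(t) = t$.

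Part (a) is then immediate from the trigonometric parametrization: for any $t \in [-1,1]$, writing $t = \cos\theta$ with $\theta \in [0,\pi]$ gives $|C_d(t)| = |\cos(d\theta)| \leq 1$, and the choice $\theta = 0$ (corresponding to $t = 1$) gives $C_d(1) = \cos 0 = 1$.

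For part (b), I would substitute $t = \cosh\phi$ with $\phi > 0$ and differentiate $C_d(\cosh\phi) = \cosh(d\phi)$ with respect to $\phi$ to obtain $C_d'(\cosh\phi)\,\sinh\phi = d\,\sinh(d\phi)$, so that
\[
C_d'(t) \;=\; d \cdot \frac{\sinh(d\phi)}{\sinh\phi}
\quad\text{for } t = \cosh\phi > 1.
\]
It therefore suffices to show $\sinh(d\phi)/\sinh\phi \geq d$ for $\phi > 0$. I would derive this from the elementary identity $\cosh\phi \pm \sinh\phi = e^{\pm\phi}$, which yields $\sinh(d\phi) = \tfrac{1}{2}\bigl[(\cosh\phi + \sinh\phi)^d - (\cosh\phi - \sinh\phi)^d\bigr]$; binomially expanding and subtracting cancels all even powers of $\sinh\phi$, leaving
\[
\sinh(d\phi) \;=\; \sum_{\substack{1 \leq k \leq d \\ k \text{ odd}}} \binom{d}{k}\,\cosh^{d-k}\phi\,\sinh^{k}\phi.
\]
Dividing through by $\sinh\phi$ produces a sum of nonnegative terms (for $\phi \geq 0$) whose $k=1$ contribution alone is $d\,\cosh^{d-1}\phi \geq d$; all other terms are nonnegative, so the inequality follows and hence $C_d'(t) \geq d^2$ for all $t > 1$. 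The boundary case $C_d'(1) = d^2$ is obtained by letting $\phi \to 0^+$, at which point $\sinh(d\phi)/\sinh\phi \to d$ (either by L'H\^opital or a leading-order Taylor expansion).

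The argument is essentially mechanical once the two parametrizations are in place, and there is no substantive obstacle; the only real bookkeeping task is setting up the hyperbolic binomial expansion cleanly enough to extract the lower bound on $\sinh(d\phi)/\sinh\phi$.
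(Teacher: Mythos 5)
Your proof is correct. The paper does not actually prove this fact --- it simply cites it from Cheney's approximation theory text --- and your argument via the two standard parametrizations $C_d(\cos\theta)=\cos(d\theta)$ and $C_d(\cosh\phi)=\cosh(d\phi)$, together with the hyperbolic binomial expansion to bound $\sinh(d\phi)/\sinh\phi \geq d$, is exactly the canonical derivation one would find in such a reference. (You also implicitly correct the typo in the statement: the boundary value should read $C_d'(1)=d^2$, not $t^2$, which is what you prove.)
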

It is clear that the degree of $q_B$ is at most $\lceil \sqrt2k \rceil \cdot \log(2k) \leq \dmax$ and that $q_B(x)=1$ if $B(x)=1$.  If $B(x)=0$, then since $S(x) \leq 2k-1$, we have that ${\frac {S(x)}2k} \cdot \left(1 + {\frac 1 2k}\right) \in [0,{\frac {(2k)^2 - 1}{(2k)^2}}] \subset [0,1]$, so the numerator of $q_B(x)$ is at most 1 and the denominator (using the derivative bound given by Item (b), together with the fact that $C_{\lceil \sqrt{2k} \rceil}(1)=1$) is at least $2^{\log(2k)} > 2k$. This gives \Cref{eq:qbproperty}.

For the ``Moreover'' sentence of \Cref{claim:approx-conjunction}, we recall the following fact, which is an easy consequence of the defining recurrence relation for the Chebyshev polynomials of the first kind:

\begin{fact} [Coefficient bounds for Chebyshev polynomials]
\label{fact:coefficients}
The degree-$d$ Chebyshev polynomial of the first kind $C_d(x) = \sum_{i=0}^d a_{i,d} x^i$ has integer coefficients which satisfy $\sum_{i=0}^d |a_{i,d}| \leq 3^d.$
\end{fact}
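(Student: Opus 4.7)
The plan is to derive the fact directly from the standard three-term recurrence for Chebyshev polynomials of the first kind, namely
\[
C_0(x) = 1, \qquad C_1(x) = x, \qquad C_{d+1}(x) = 2x\, C_d(x) - C_{d-1}(x) \quad (d \geq 1).
\]
First, integrality of the coefficients $a_{i,d}$ follows by a routine induction on $d$: the base cases have integer coefficients, and if $C_d$ and $C_{d-1}$ do, then so does $2x\, C_d(x) - C_{d-1}(x)$, since both the shift-and-double operation and the subtraction preserve integrality.

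Next, for the $\ell_1$-coefficient bound, I would set $L_d := \sum_{i=0}^{d} |a_{i,d}|$. Multiplying a polynomial by $2x$ scales each coefficient (in absolute value) by exactly $2$ and shifts its degree up by one, so the triangle inequality applied to the recurrence gives
\[
L_{d+1} \;\leq\; 2\, L_d + L_{d-1}
\]
for every $d \geq 1$. I would then establish $L_d \leq 3^d$ by strong induction. The base cases $L_0 = 1 \leq 3^0$ and $L_1 = 1 \leq 3^1$ are immediate, and the inductive step gives
\[
L_{d+1} \;\leq\; 2 \cdot 3^{d} + 3^{d-1} \;=\; 7 \cdot 3^{d-1} \;\leq\; 9 \cdot 3^{d-1} \;=\; 3^{d+1},
\]
which closes the induction and completes the proof.

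I do not anticipate any real obstacle here; the whole argument is a few lines from the defining recurrence. (One could in fact solve the linear recurrence $L_{d+1} = 2 L_d + L_{d-1}$ exactly using its characteristic polynomial $\lambda^2 - 2\lambda - 1 = 0$, yielding the sharper bound $L_d = O\!\bigl((1+\sqrt{2})^{\,d}\bigr)$, but the looser $3^d$ stated in \Cref{fact:coefficients} is all that is used downstream in the proof of \Cref{claim:approx-conjunction}.)
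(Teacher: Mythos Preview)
Your proof is correct and is exactly the argument the paper has in mind: the paper states the fact without proof, merely noting it is ``an easy consequence of the defining recurrence relation for the Chebyshev polynomials of the first kind,'' and your induction on $L_{d+1} \le 2L_d + L_{d-1}$ from that recurrence is precisely that easy consequence.
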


\Cref{fact:coefficients} implies that the denominator $C_{\lceil \sqrt{2k} \rceil} \left(
1+{\frac 1 2k} \right)^{\log(2k)}$ of $q_B(x)$ is a rational number $\alpha/\beta$ where $|\alpha|,|\beta|$ are both $(2k)^{O(\sqrt{2k} \cdot \log k)} = k^{O(\sqrt{k} \cdot \log k)}$.
Hence we can write 
\[
q_b(x) = 
C_{\lceil \sqrt{2k} \rceil} \left(
{\frac {S(x)(2k+1)}{{(2k)}^2}}\right)
^{\log(2k)} \cdot {\frac \beta \alpha}.
\]
\Cref{fact:coefficients} also implies that $C_{\lceil \sqrt{2k} \rceil} \left(
{\frac {S(x)(2k+1)}{(2k)^2}}\right)^{\log(2k)}$ is expressible as $a(x)/\gamma$ where $\gamma$ is an integer with $|\gamma|\leq (2k)^{2 \log(2k) \cdot \lceil \sqrt2k \rceil} = k^{O(\sqrt{k} \cdot \log k)}$ and $a(x)$ has integer coefficients whose absolute values sum to $(2k)^{O(\sqrt{k} \cdot \log k)}$. This straightforwardly yields the desired conclusion, and the claim is proved.
\end{proof}

\subsection{Discussion} \label{subsec:featuresdiscussion}

In light of \Cref{lem:augmented-PTF}, we make the following definition, which  captures the collection of features that will be used by the attribute-efficient Winnow2 linear threshold learning algorithm:

\begin{definition} [$\Features({\cal F})$]
    \label{def:features}
    Let ${\cal F}$ be a collection of pairs $(T',R_{T'})$ where each $T'$ is a candidate stem (i.e.~a conjunction of literals over $x_1,\dots,x_n$) and each $R_{T'}$ is a subset of $[n]$ that is disjoint from the variables that occur in $T'$.
    The set $\Features({\cal F})$ contains all ${\cal F}$-augmented monomials of degree at most $\dmax$.
    \end{definition}

\begin{remark} \label{rem:LTF-PTF-equivalence}
Note that an ${\cal F}$-augmented PTF of degree $\dmax$ and weight $\Wmax$ is equivalent to a linear threshold function over the feature set $\Features({\cal F})$ in which the weight vector has sum of magnitudes of (integer) weights at most $\Wmax$.
\end{remark}

\begin{remark} 
[The role of the $R_{T'}$ sets] 
\label{rem:RT-sets}
As suggested by the previous remark, when we are running Winnow2 using ${\cal F}$, the set of features (denoted $\Features({\cal F})$) for the linear threshold function will consist of all possible ${\cal F}$-augmented monomials of degree $\dmax$.
Writing $N$ to denote the number of pairs in ${\cal F}=\{(T'_1,R_{T'_1}),\dots,(T'_N,R_{T'_N})\}$, we have 
\begin{equation} \label{eq:bounds-on-Features}
{r \choose \leq \dmax} \leq |\Features({\cal F})| 
=
\sum_{i=1}^N {|R_{T'_i}| \choose \leq \dmax}
\leq |{\cal F}| \cdot {r \choose \leq \dmax},
\end{equation}
where $r = \max_{(T',R_{T'}) \in {\cal F}} |R_{T'}|$.
Since the running time of Winnow2 (even to perform a single update) is at least linear in the dimension of its feature space, this means that the running time of our learning algorithm, \Cref{alg:Glorious-Learn-DNF}, will be at least ${r \choose \leq \dmax}.$

The main point of \Cref{sec:find-relevant-variables} is to ensure that we can construct a fully expressive set ${\cal F}$ in which each set $R_{T'}$ is always of size at most $\Rmax = O(k^2 \log k)$.  
If we had not imposed the requirement that each ${\cal F}$-augmented monomial $T'_j \cdot \prod_{i \in S_j} x_i$  must have $S_j \subseteq R_{T'_j}$ (equivalently, if we had taken each $R_{T'_j}$ to simply be the entire set $[n]$, which amounts to ``getting rid of the $R_{T'_j}$'s''), then the left-hand side of \Cref{eq:bounds-on-Features} would be ${n \choose \dmax} \approx n^{\tilde{\Theta}(\sqrt{k})}$, and as discussed in \Cref{sec:techniques} this would be a lower bound on the number of features $|\Features({\cal F})|$ and hence on the running time of our approach.  
The point of using the $R_{T'}$ sets (and of the work we do in \Cref{sec:find-relevant-variables} to find them and argue that they will never be too large) is that with the upper bound $\Rmax = O(k^2 \log k)$ in hand, the upper bound $|{\cal F}| \cdot {r \choose \leq \dmax}$ on $|\Features({\cal F})|$ is at most $|{\cal F}| \cdot {O(k^2 \log k) \choose \dmax} = |{\cal F}| \cdot 2^{\tilde{O}(\sqrt{k})}$. This means that we have (at least potentially) ``gotten $n$ out of the base,'' at least as long as $|{\cal F}|$ is not too large.  (In \Cref{sec:overall-glorious}, specifically \Cref{UB-on-mistakes-when-fully-expressive}, we will upper bound ${\cal F}$.)
\end{remark}


\section{Finding relevant variables} \label{sec:find-relevant-variables}

In this section we give an efficient algorithm, called \FindRelevantVariable. The key property of this algorithm is that if it is given as input a valid stem $T'$ for some term of $f$, along with a suitable pair of positive and negative inputs, then it will ``grow'' the associated set $R_{T'}$ of auxiliary variables for $T'$ in a useful way.  Formally, we will establish the following:

\begin{lemma}
\label{lemma:findessentialvariables}
Suppose $f: \zo^n \to \zo$ is a $k$-term DNF, $T'$ is a valid stem of some term $T_i$ of $f$, and $\kappa \in (0,1)$. Let $R_{T'}$ be the set of morally relevant variables associated with $T'$ (cf. \Cref{def:morals}), and let $y$ and $z$ be positive and negative (resp.)\ assignments to $f$ both of which satisfy the stem $T'$.  
Suppose that the hybrid input
$z' := z_{R_{T'}} \sqcup y_{\overline{R_{T'}}}$ 
satisfies some short term of $f_{T'}$. Then 
\begin{enumerate}
    \item \FindRelevantVariable$(f,T', R_{T'}, y, z)$~runs in time $n \cdot \poly(k,\log(1/\kappa))$;
    \item 
    With probability at least $1-\kappa$, \FindRelevantVariable$(f,T', R_{T'}, y, z)$~updates $R_{T'}$ to contain one new morally relevant variable $x_j$ that was not previously in $T'$ or $R_{T'}$, where some term $T_{i'}$ that $T'$ is a stem of contains $x_j$ or $\overline{x_j}$.
\end{enumerate}
\end{lemma}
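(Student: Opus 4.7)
My proof plan combines three ingredients: restriction to $f_{T'}$, the Bonami-Beckner noise operator $N_\rho$, and binary search driven by membership queries. The guiding idea is that after restricting $f$ by the stem $T'$, every term $T_{i'}$ for which $T'$ is a valid stem becomes a term of $f_{T'}$ of length at most $2k$ (\emph{short}), while the remaining terms of $f_{T'}$ may be long. Since $y$ and $z$ both satisfy $T'$, their restrictions $\hat{y}, \hat{z}$ live in the domain of $f_{T'}$, with $f_{T'}(\hat{y})=1$ and $f_{T'}(\hat{z})=0$; and because $R_{T'}$ is disjoint from the variables of $T'$, the hybrid $\hat{z}'$ makes sense in $f_{T'}$ and, by hypothesis, satisfies a short term.

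Next I apply $N_\rho$ to $f_{T'}$ with $\rho$ chosen so that every term of length exceeding $\tau=1000k$ contributes at most $\kappa/(C \log n)$ to the noised value at any point, while short terms retain $\Omega(1)$ contribution when satisfied. A direct consequence (the anticipated \Cref{claim:L14}) is that any coordinate $x_j$ appearing only in long terms of $f_{T'}$ satisfies $|N_\rho f_{T'}(u) - N_\rho f_{T'}(u^{\oplus j})| \leq \kappa/(C\log n)$ for every $u$. I then run a binary search over the coordinates where $\hat{z}'$ and $\hat{z}$ disagree; by construction they agree on $R_{T'}$, so these coordinates lie outside $T' \cup R_{T'}$. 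Since $N_\rho f_{T'}(\hat{z}') = \Omega(1)$ (the short term it satisfies is robust to noise) while $N_\rho f_{T'}(\hat{z}) \approx 0$, a standard interpolation argument isolates a single coordinate $j$ whose flip shifts $N_\rho f_{T'}$ by $\Omega(1/\log n)$. Membership queries estimate $N_\rho f_{T'}$ at intermediate points by averaging over $\poly(k, \log(1/\kappa))$ independent noisy samples and invoking a Chernoff bound, each step failing with probability at most $\kappa/\log n$.

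The noise bound then forces $x_j$ to appear in some short term of $f_{T'}$; by the correspondence from the restriction step this exhibits a term $T_{i'}$ of $f$ for which $T'$ is a valid stem and that contains $x_j$ or $\overline{x_j}$. Because the binary search examines only coordinates outside $T' \cup R_{T'}$, $x_j$ is genuinely new, and adding it to $R_{T'}$ discharges the conclusion. For the runtime, each of the $O(\log n)$ binary-search steps requires $\poly(k, \log(1/\kappa))$ membership queries, each costing $O(n)$ to evaluate, for a total of $n \cdot \poly(k, \log(1/\kappa))$; a union bound over estimation failures across the $O(\log n)$ steps gives overall failure probability at most $\kappa$. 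The main obstacle I expect is the careful balancing of the noise rate $\rho$ so that short terms stay visible while long terms are suppressed below the per-step accuracy threshold; this delicate calibration is where the specific choice $\tau = 1000k$ interacts with the noise parameter to give the bound needed in Step 2, and is the content of the anticipated \Cref{claim:L14}.
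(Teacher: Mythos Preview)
Your overall architecture — restrict by $T'$, smooth with the noise operator, and search along a path from $z$ to $z'$ — matches the paper's, but two of your claimed steps do not go through as written.

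\textbf{The search bound.} You assert that binary search over the coordinates where $\hat z$ and $\hat z'$ differ isolates a coordinate whose flip changes the noised function by $\Omega(1/\log n)$. A standard binary search halves the guaranteed gap at every level and therefore delivers only $\Omega(\delta_0/n)$ at a leaf, which is far too small to separate a useful coordinate from one whose effect comes solely via long terms. The paper's algorithm instead performs a \emph{line search} (walk from $z$ to $z'$ one coordinate at a time), and its analysis is not a halving argument at all: it observes that there are at most $O(k^2\log k)$ morally relevant coordinates for $f_{T'}$ (each lives in some term of length at most $1000\tau\log k$), that any maximal run of morally \emph{irrelevant} flips changes $\T_\rho f_{T'}$ by at most $k^{-50}$ (this is \Cref{claim:L14} applied to an entire \emph{set} of bad indices, not one flip at a time), and hence by telescoping some single morally relevant flip must account for change at least $0.7/O(k^2\log k) > k^{-3}$. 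The detection threshold $k^{-3}$ is independent of $n$, which is why $\poly(k,\log(1/\kappa))$ samples per point suffice.

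\textbf{The noise separation.} You posit a $\rho$ for which every term of length exceeding $\tau=1000k$ contributes at most $\kappa/(C\log n)$ to the noised value while short terms retain $\Omega(1)$ contribution. No such $\rho$ exists: any noise rate that preserves a satisfied length-$\tau$ term with constant probability does the same for a length-$(\tau+1)$ term. The paper handles this by inserting an intermediate \emph{medium} regime (lengths in $(\tau,\,1000\tau\log k]$); \Cref{claim:L14} bounds only the effect of coordinates that appear exclusively in \emph{long} terms (length $>1000\tau\log k$). Accordingly the output coordinate is only shown to be \emph{morally relevant} — i.e.\ to lie in some short-or-medium term of $f_{T'}$. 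That weaker conclusion is what the paper's proof actually establishes, and it is all that is used downstream (it caps $|R_{T'}|$ at $O(k^2\log k)$); your stronger claim that $x_j$ necessarily sits in a short term of $f_{T'}$, and hence in a term $T_{i'}$ for which $T'$ is a valid stem, does not follow from the noise calculation.
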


\subsection{Useful Preliminaries}
\label{subsec:noise-prelims}

A useful operator in the analysis of Boolean functions is the \emph{Bonami--Beckner noise operator} (see Chapter~2 of~\cite{o2014analysis} for more information):

\begin{definition} \label{def:noise}
	Fix $\rho\in[0,1]$. For a given $x\in\zo^n$, we write $\by\sim N_\rho(x)$ to mean a draw of $\by\in\zo^n$ where each bit $\by_i$ is drawn as follows: 	
	\[N_{\rho}(x) := \begin{cases}
	 x_i & \text{with probability}~\rho\\ 
	 1-x_i & \text{with probability}~1-\rho
	 \end{cases}.
	\]
	Given a function $f\isazofunc$, we define the \emph{noise operator} $\T_\rho$ as
	\[
		\T_\rho f(x) := \Ex_{\by\sim N_\rho(x)}\sbra{f(\by)}\,.
	\]
\end{definition}

For the rest of this section, we will fix 
\begin{equation} \label{eq:rho-def}
	\rho := 1 - \frac{1}{10\tau}
\end{equation}
where recall from \Cref{sec:preliminaries} that we set $\tau := 1000k$. 
Furthermore, throughout this section, let $g\isazofunc$ be a $k$-term DNF; recall that as discussed in \Cref{sec:preliminaries} we can write 
\[
	g := g_{\leq \tau} \vee g_{> \tau}.
\]

Ultimately, the role of $g$ will be played by $f_{T'}$ where $T'$ is a candidate stem. Note that when $T'$ corresponds to a bona fide stem of an actual term, it will be the case that $(f_{T'})_{\leq \tau} \neq \emptyset$. 

\begin{definition}[Term lengths] \label{def:term-lengths}
	We say that a term $T$ is:
	\begin{itemize}
		\item \emph{short} if its length is at most $\tau$; 
		\item \emph{medium} if its length is between $\tau+1$ and $1000\tau\log k$; and 
		\item \emph{long} if its length is strictly greater than $1000 \tau\log k$. 
	\end{itemize}
\end{definition}

The following easy claim shows how noise affects assignments that satisfy short terms: 

\begin{claim} \label{claim:pos}
Suppose $y^1 \in \zo^n$ is an input such that $g_{\leq \tau}(y^1)=1$.  Then $\T_\rho g(y^1) \geq 0.9.$
\end{claim}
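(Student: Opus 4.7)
\medskip

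\noindent\textbf{Proof plan.} The plan is to lower bound $\T_\rho g(y^1)=\Pr_{\by\sim N_\rho(y^1)}[g(\by)=1]$ by the probability that $\by$ satisfies a single well-chosen term. Since $g_{\leq \tau}(y^1)=1$, there is some term $T \in g$ with $|T|\leq \tau$ that is satisfied by $y^1$. Any $\by\in\zo^n$ that agrees with $y^1$ on the at most $\tau$ coordinates appearing in $T$ automatically satisfies $T$, and hence satisfies $g$. So the first step is to observe
\[
\T_\rho g(y^1) \;\geq\; \Pr_{\by\sim N_\rho(y^1)}\bigl[\by \text{ agrees with } y^1 \text{ on every variable of }T\bigr] \;=\; \rho^{|T|} \;\geq\; \rho^{\tau},
\]
where the equality uses that the coordinates of $\by$ are drawn independently, each matching $y^1$ with probability $\rho$, and the last inequality uses $|T|\leq\tau$ and $\rho<1$.

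Next I would plug in $\rho = 1-\tfrac{1}{10\tau}$ and check that $\rho^\tau \geq 0.9$. Using the standard estimate $\ln(1-x) \geq -x - x^2$ valid for $x \in [0,1/2]$ with $x = 1/(10\tau)$, I would write
\[
\tau \ln \rho \;\geq\; -\tau\left(\frac{1}{10\tau} + \frac{1}{100\tau^2}\right) \;=\; -\frac{1}{10} - \frac{1}{100\tau}.
\]
Since $\tau = 1000k \geq 1000$, this yields
\[
\rho^\tau \;\geq\; e^{-1/10}\cdot e^{-1/(100\tau)} \;\geq\; 0.9048 \cdot \bigl(1 - 10^{-5}\bigr) \;\geq\; 0.9,
\]
which gives the claim.

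I don't expect any serious obstacles here: the argument is essentially a one-line probabilistic lower bound followed by a numerical check that the choice $\rho = 1 - \tfrac{1}{10\tau}$ is slow enough relative to the short-term length $\tau$ to keep the satisfaction probability close to $e^{-1/10}$. The only thing worth being careful about is that $T$ really is fully inside $y^1$'s term set (so each literal of $T$ is satisfied by $y^1$, and therefore the event ``$\by$ agrees with $y^1$ on all variables of $T$'' implies ``$\by$ satisfies $T$''), and that the constants $1/10$ and $0.9$ leave enough slack; both points are immediate from the definitions of $\rho$ and $\tau$.
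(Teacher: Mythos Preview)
Your proposal is correct and takes essentially the same approach as the paper: pick a short term $T$ satisfied by $y^1$ and lower-bound $\T_\rho g(y^1)$ by the probability that noise leaves all of $T$'s literals intact. The only cosmetic difference is that the paper bounds $\Pr[T(\bz)=0]\le \tau\cdot\frac{1}{10\tau}=0.1$ via a one-line union bound rather than computing $\rho^\tau$ exactly and doing the numerical estimate you carry out.
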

\begin{proof}
	Let $T$ be the short term in $g$ that is satisfied by $y^1$. 
	We then have 
	\[
		\Prx_{\bz\sim N_\rho(y^1)}\sbra{T(\bz) = 0} \leq \tau\cdot\frac{1}{10\tau} \leq 0.1\,,
	\]
	thanks to our choice of $\rho$ as well as a union bound over the at most $\tau$ literals of $T$. The result follows immediately. 
\end{proof}

We will also require the following claim about how noise affects assignments that do not satisfy any short or medium terms:

\begin{claim} \label{claim:L16} 
	Suppose that $y^0$ is such that $g_{\leq 1000\tau \log k}(y^0)=0$. 
	Then $\T_\rho g(y^0)\leq 0.1.$
\end{claim}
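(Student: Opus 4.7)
The plan is to bound $\T_\rho g(y^0)$ by a union bound over the terms of $g$, splitting them into two groups according to length and handling each group separately.

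For each term $T$ of $g$, observe that $\Pr_{\bz\sim N_\rho(y^0)}[T(\bz)=1]=\rho^{a}(1-\rho)^{b}$, where $a$ counts the literals of $T$ that $y^0$ already satisfies and $b=|T|-a$ counts those it does not. I will split the analysis as follows. If $T$ is short or medium, i.e.~$|T|\leq 1000\tau\log k$, then the hypothesis $g_{\leq 1000\tau\log k}(y^0)=0$ forces $b\geq 1$, so $\Pr[T(\bz)=1]\leq 1-\rho=\frac{1}{10\tau}$. If instead $T$ is long, i.e.~$|T|> 1000\tau\log k$, then trivially $\Pr[T(\bz)=1]\leq \rho^{|T|}\leq \rho^{1000\tau\log k}$, and by our choice \eqref{eq:rho-def} of $\rho$ this is at most $\bigl(1-\tfrac{1}{10\tau}\bigr)^{1000\tau\log k}\leq e^{-100\log k}=k^{-100}$.

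A union bound over the at most $k$ terms of $g$ then gives
\[
\T_\rho g(y^0)=\Prx_{\bz\sim N_\rho(y^0)}[g(\bz)=1]\leq k\cdot\frac{1}{10\tau}+k\cdot k^{-100}=\frac{1}{10000}+k^{-99},
\]
which is certainly at most $0.1$ (for any $k\geq 1$, using $\tau=1000k$). This gives the desired bound.

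The argument is essentially a routine noise-stability calculation, so there is no real obstacle; the only thing to be careful about is that the bound for short/medium terms uses the assumption that $y^0$ misses at least one literal (i.e.~does not satisfy those terms), while the bound for long terms does \emph{not} need this assumption and instead exploits the sheer length of $T$ to make the best-case probability $\rho^{|T|}$ already tiny. The constants have been chosen precisely so that both contributions are comfortably below $0.1$ after summing over $k$ terms.
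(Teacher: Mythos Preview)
Your proof is correct, and it takes a genuinely different (and simpler) route than the paper's. The paper argues via the \emph{protected set} $P(y^0)$: it first conditions on the event $E$ that noise flips none of the at most $k$ protected coordinates (probability $\geq (1-\tfrac{1}{10\tau})^k\geq 0.99$), which in one stroke guarantees that \emph{every} term not satisfied by $y^0$ (in particular all short and medium terms) stays unsatisfied; it then bounds, under $E$, the probability that any long term is satisfied by $\rho^{|T|}\leq k^{-100}$, and finishes with a union bound. You instead do a direct term-by-term union bound with no conditioning: for short/medium terms you use the hypothesis to get at least one unsatisfied literal, giving a factor $1-\rho=\tfrac{1}{10\tau}$, and for long terms you use $\Pr[T(\bz)=1]=\rho^a(1-\rho)^b\leq \rho^{|T|}$ (since $1-\rho<\rho$). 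Your argument is more elementary and self-contained; the paper's approach reuses the protected-set notion from elsewhere in the paper, so it is thematically consistent but a bit heavier for this particular claim. One tiny remark: the step ``trivially $\Pr[T(\bz)=1]\leq \rho^{|T|}$'' implicitly uses $1-\rho\leq\rho$ (equivalently, each literal is satisfied with probability at most $\max(\rho,1-\rho)=\rho$); this is obvious here but worth a word.
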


\begin{proof}
	Let $P$ denote the protected set of $y$ (cf.~\Cref{def:protected}). 
	We have 
	\[
		\Prx_{\bz\sim N_\rho(y^0)}\sbra{\bz_i = y_i^0~\text{for all}~i\in P} = \pbra{1 - \frac{1}{10\tau}}^k \geq 0.99\,,
	\]
	where we used our choice of $\tau$ as well as the inequality $(1-a)^b \geq 1-ab$ for $a,b \geq 0$. 
	Call this event $E$; in other words, $E$ is the event that applying noise does not flip any of the indices in the protected set. 
	Let $T$ be a term of length at least $1000 \tau \log(k)$. 
	Note that 
	\[
		\Prx_{\bz \sim N_\rho(y^0)}\sbra{T(\bz) = 1 \mid  E} \leq \left(1 - \frac{1}{10\tau} \right)^{1000 \tau \log(k)} \leq k^{-100}\,.
	\]
	A union bound over at most $k$ such possible terms implies that 
	\[
		\Prx_{\bz\sim N_\rho(y^0)}\sbra{\exists~\text{long term}~T~\text{s.t.}~T(\bz) = 1 \mid E} \leq k^{-99}\,.
	\]
	It then follows that 
	\[
		\T_\rho g(y^{0}) = \Prx_{\bz\sim N_\rho(y^{0})}\sbra{g(\bz) = 1} \leq 0.01 + k^{-99} \leq 0.02\,,
	\]
	completing the proof. 
\end{proof}

The following terminology will be helpful:

\begin{definition}[Morally relevant variables]
\label{def:morals}
    Given a $k$-term DNF $g$, we say that $i \in [n]$ is \emph{morally irrelevant for $g$} if no short or medium term in $g$ contains either $x_i$ or $\overline{x_i}$, and will say that $i$ is \emph{morally relevant for $g$} otherwise. 
\end{definition}

The following claim shows that the value of $\T_\rho g(y)$ is not sensitive to flipping bits that are morally irrelevant for $g$:

\begin{claim} \label{claim:L14}
    Let $g$ be a $k$-term DNF and $y \in \zo^n$. 
    Then for any set $S\sse[n]$ consisting entirely of indices that are morally irrelevant for $g$, we have 
    \[
        \left|
        \T_\rho g (y) - \T_\rho g(y^{\oplus S})
        \right| 
        \leq {\frac 1 {k^{50}}}\,,
    \]
    where $y^{\oplus S} \in \zo^n$ is obtained by flipping the bits of $y$ whose indices are in $k$. 
\end{claim}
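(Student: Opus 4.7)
The plan is to use a shared-randomness coupling: draw $\bz \sim N_\rho(y)$ and set $\bz' := \bz \oplus \mathbf{1}_S$, which yields $\bz' \sim N_\rho(y^{\oplus S})$ while forcing $\bz$ and $\bz'$ to agree outside $S$ and disagree on every index of $S$. The main structural observation that drives everything is that since every index in $S$ is morally irrelevant for $g$, every short and medium term of $g$ has no literal on $S$, so $g_{\leq 1000\tau\log k}(\bz) = g_{\leq 1000\tau\log k}(\bz')$ deterministically. Decomposing $g = g_{\leq 1000\tau \log k} \vee g_{> 1000\tau \log k}$, this means $g(\bz) \neq g(\bz')$ can only happen when $g_{>1000\tau\log k}(\bz) = 1$ or $g_{>1000\tau\log k}(\bz') = 1$.

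Next I would bound the probability that a noisy draw satisfies some long term, uniformly over the starting input. For any long term $T$ and any fixed $w \in \zo^n$, writing $a$ and $b = |T|-a$ for the literals of $T$ satisfied and violated by $w$ respectively, noise acts independently on the coordinates of $T$ and gives $\Pr_{\by \sim N_\rho(w)}[T(\by) = 1] = \rho^a(1-\rho)^b$. Since $\rho > 1/2$ we have $(1-\rho)/\rho \le 1$, so this is at most $\rho^{|T|}$; then $\rho^{|T|} \le (1-\tfrac{1}{10\tau})^{1000\tau \log k} \le k^{-100}$, exactly as in the calculation inside the proof of Claim~\ref{claim:L16}. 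A union bound over the at most $k$ long terms yields $\Pr_{\by \sim N_\rho(w)}[g_{>1000\tau\log k}(\by)=1] \leq k^{-99}$ for every $w$, applied in particular to $w = y$ and $w = y^{\oplus S}$.

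Assembling the pieces via the triangle inequality,
\[
\left|\T_\rho g(y) - \T_\rho g(y^{\oplus S})\right| \le \Pr[g(\bz)\neq g(\bz')] \le 2k^{-99} \le k^{-50}.
\]
The one moderately subtle point is the observation that the per-term expression $\rho^a(1-\rho)^b$ is maximized at $b=0$ because $\rho > 1/2$: the factor $(1-\rho)^b$ appearing in the naive expression could at first look like a source of difficulty (over $k$ long terms with $b=1$ it would only give a constant), but in fact it only helps. This is precisely what lets us control the tail by the length of $T$ alone and sidestep the protected-set machinery used in Claim~\ref{claim:L16}, thereby obtaining the much stronger $k^{-50}$ bound required here.
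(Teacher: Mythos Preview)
Your argument is correct and essentially identical to the paper's own proof: the coupling you describe is exactly the change of variables $\T_\rho g(y)=\E_{\bz\sim N_\rho(0^n)}[g(y^{\oplus\bz})]$ the paper uses, and your per-term estimate $\rho^a(1-\rho)^b\le\rho^{|T|}\le k^{-100}$ matches the paper's computation (which is written as $(10\tau)^{-j}(1-\tfrac{1}{10\tau})^{M-j}\le k^{-100}$). The resulting bound $2k\cdot k^{-100}\le k^{-50}$ is the same in both.
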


\begin{proof}
Note that we can equivalently view 
\[
    \T_\rho g(y) = \Ex_{\bz\sim N_\rho(y)}\sbra{f(\bz)} = \Ex_{\bz\sim N_\rho(0^n)}\sbra{f(y^{\oplus \bz})}\,,
\]
where we identify $\bz \in \zo^n$ with a subset of $[n]$ in the natural fashion. 
We then have 
\begin{align*}
\left|\T_\rho g(y) - \T_\rho g(y^{\oplus S})\right|
&=
\left|
\Ex_{\bz \sim N_\rho(0^n)}[g(y^{\oplus \bz})]
-
\Ex_{\bz \sim N_\rho(0^n)}[g(y^{\oplus S \oplus \bz})]
\right|\\
&\leq
\Ex_{\bz \sim N_\rho(0^n)} \left[
\left|
g(y^{\oplus \bz})
-
g(y^{\oplus S \oplus \bz})
\right|
\right] \tag{Jensen's inequality}\\
&=
\Prx_{\bz \sim N_\rho(0^n)} \left[
g(y^{\oplus \bz})
\neq
g(y^{\oplus S \oplus \bz})
\right]\\
&=\Prx_{\bz \sim N_\rho(0^n)} \left[
g(y^{\oplus \bz})=1,
g(y^{\oplus S \oplus \bz})=0
\text{~or~}
g(y^{\oplus \bz})=0,
g(y^{\oplus S \oplus \bz})=1
\right]\\
&\leq
2k \cdot \max_{x \in \zo^n} \,
\max_{\text{long term}~T~\text{in}~g} \,
\Prx_{\bz \sim N_\rho(0^n)}
\sbra{T(x^{\oplus \bz})=1}\,,
\end{align*}
where the last inequality is by a union bound over the (at most $k$) long terms in $g$, using the fact that no variable in $k$ occurs in any short or medium term of $g$.

For concreteness, we may suppose (without loss of generality) that the long term achieving the max in the expression above is $x_1 \wedge x_2 \wedge \dots \wedge x_M$ where $M>1000\tau \log k.$  Now, if $x \in \zo^n$ has $x_i=0$ for $j$ different indices $i \in [M]$, then since $\bz$ must flip precisely those indices and no others in $[M]$ to satisfy $T$, we have 
\begin{align*}
    \Prx_{\bz \sim N_\rho(0^n)}\sbra{T(x^{\oplus \bz})=1}
    &= (10\tau)^{-j}\left(1-{\frac 1 {10\tau}}\right)^{M-j} \\
    & \leq (10\tau)^{-j}\left(1-{\frac 1 {10\tau}}\right)^{1000\tau \log(k)-j} \\
    &= \pbra{10\tau - 1}^{-j}\pbra{1 - \frac{1}{10\tau}}^{1000\tau\log (k)} \\
    &\leq \pbra{10\tau - 1}^{-j}k^{-100} \\ 
    &\leq k^{-100}\,,
\end{align*}
which completes the proof. 
\end{proof}

Our algorithm will not compute the exact value of $\T_\rho g(x)$; instead, it will use an estimate of this value.
Let $\kappa \in (0,1)$ be a parameter we will set later, and let $\wh{\T}_\rho g(x)$ be an empirical estimate of $\T_\rho g(x)$ obtained by drawing $O(k^{200}\log \kappa^{-1})$ samples $\bz \sim N_\rho(x)$ and averaging $f(\bz)$. A standard Hoeffding bound gives that 
\[
    \Prx\sbra{\abs{\wh{\T}_\rho g(x) - \T_\rho g(x)} \geq k^{-100}} \leq \kappa\,.
\]

\begin{assumption} \label{noise-no-failure-assumption}
    By incurring failure probability $\kappa\cdot(\#~\text{times we estimate}~\wh{\T}_\rho g)$, we will assume that $\wh{\T}_\rho g(x)$ is always within an additive $\pm k^{-100}$ of $\T_\rho g(x)$. (We will set $\kappa$ in \Cref{sec:handling-failure-probability}.)
\end{assumption}

\subsection{Finding Relevant Variables}

We now turn to prove~\Cref{lemma:findessentialvariables}. 

\begin{algorithm}[t]
\addtolength\linewidth{-2em}

\vspace{0.5em}

\textbf{Input:} MQ access to a $k$-term DNF $f: \zo^n \rightarrow \zo$, a candidate stem $T'$, a set $R_{T'}$ of auxiliary variables for the stem $T'$, an $n$-bit string $y$ such that $f(y)=1,T'(y)=1$, and an $n$-bit string $z$ such that $f(z)=0,T'(z)=1$. \\[0.25em]
\textbf{Output:} A set $R_{T'}$
\vspace{0.5em}

\FindRelevantVariable($f, T', R_{T'}, y,z$):

\vspace{0.5em}

\begin{enumerate}

    \item Let $z' := z_{R_{T'}} \sqcup y_{\overline{R_{T'}}}$. 

    \item Do a line search from $z$ to $z'$ to find an input $x$ and a coordinate $i$ such that 
    \[
        \wh{\T}_\rho f_{T'}(x^{\oplus j}) \geq \wh{\T}_\rho f_{T'}(x) + k^{-3}\,.
    \]
    If no such $x$ and $i$ exist, then return FAIL.

    \item Add $i$ to $R_{T'}$ and return $R_{T'}$. 

\end{enumerate}
\caption{An algorithm to find relevant variables of a stem $T'$}
\label{alg:find-relevant-vars}
\end{algorithm}

\begin{proof}[Proof of~\Cref{lemma:findessentialvariables}]
The running time bound is obvious from inspection of \Cref{alg:find-relevant-vars} and our definition of $\wh{\T}_\rho$.

    We now turn to establish Item~2 of \Cref{lemma:findessentialvariables}. 
    Consider the line search in Step~2 of the algorithm; let $a_0=z,a_1,\dots,a_m=z'$ be the sequence of distinct points in $\zo^n$ that are visited in the course of this search. 
    In each step of this walk some coordinate $j$ is flipped; we say that a coordinate flip $j$ is a \emph{bad flip} if $j$ is morally irrelevant for $f_{T'}$, and we say that coordinate flip $j$ is a \emph{good flip} if $j$ is  morally relevant for $f_{T'}$ (cf.~\Cref{def:morals}).  We call a maximal sequence of consecutive bad flips a \emph{bad stretch}, and a maximal sequence of consecutive good flips a \emph{good stretch}.  

    The entire walk from $a_0$ to $a_m$ can be partitioned into alternating stretches
    \[
        \text{(bad stretch) (good stretch) (bad stretch) (good stretch)} \dots
    \]
    and we observe that the total number of flips occurring across all of the good stretches is at most $O(k^2 \log k)$ (because for a variable to be morally relevant for $f_{T'}$ it must be in a term of length at most $1000\tau \log k$ in $f_{T'}$, and there are at most $1000\tau k\log(k) = O(k^2 \log k)$ such variables). This in turn implies that there are at most $O(k^2 \log k)$ many good stretches, and at most that many bad stretches.

    Now, if $a_i$ is at the beginning of a bad stretch and $a_{i'}$ is at the end of a bad stretch, by \Cref{claim:L14} we know that $\T_\rho f_{T'}(a_i)$ and $\T_\rho f_{T'}(a_{i'})$ differ by at most $k^{-50}$.  So the sum of $\T_\rho f_{T'}(a_{k+1}) - \T_\rho f_{T'} (a_k)$, summed across all $k$ for which coordinate flip $k$ is a good flip, is at least $0.8 - k^{-50} \cdot O(k^2 \log k) > 0.7$; since there are at most $O(k^2 \log k)$ many good flips in total, there must be some good flip $k$ such that this difference is at least 
    \[
        \Omega\pbra{{\frac 1 {k^2 \log k}}} > \frac{1}{k^3}\,.
    \]
    Thanks to~\Cref{noise-no-failure-assumption}, we assume $\wh{\T}_\rho f_{T'}(\cdot) = \T_\rho f_{T'}(\cdot) \pm k^{-100}$, so the above inequality implies that  \FindRelevantVariable~will not fail.  
    Moreover, it follows from \Cref{claim:L14} that $i$ is morally relevant, which completes the proof.  
\end{proof}

\begin{remark} \label{remark:Rmax} It immediately follows from Item~2 of~\Cref{lemma:findessentialvariables} that for a valid stem $T'$ we have $|R_{T'}| \leq O(k^2 \log k)$, since every coordinate added to $R_{T'}$ is a morally relevant coordinate for $f_{T'}$ and as observed in the proof of \Cref{lemma:findessentialvariables}, there are at most $O(k^2 \log k)$ such coordinates. In the next section, we will write $R_{\max} = \Theta(k^2 \log k)$ to refer to this upper bound.
\end{remark}


\section{The overall learning algorithm} 
\label{sec:overall-glorious}

Throughout this section, up until \Cref{sec:handling-failure-probability} we make two assumptions saying that the subroutines we call throughout the algorithm will always give correct answers even though they each have a small probability of erring. 
The first is \Cref{noise-no-failure-assumption} above, and the second is
 \Cref{findcandidatestem-no-failure-assumption}:

\begin{assumption} \label{findcandidatestem-no-failure-assumption}
    By incurring a small failure probability, we will assume that \findcandidatestem$(f,y)$ \emph{always} outputs a valid stem for a term $T \in f$ that $y$ satisfies
when it is called on a $y \in \zo^n$ that satisfies $f_{>\tau}$ and not $f_{\leq \tau}$.
\end{assumption}

The rest of the second proceeds under these assumptions; we handle the straightforward failure probability analysis in \Cref{sec:handling-failure-probability} below by incorporating them into our bound on the overall failure probability of our algorithm.

We recall from \Cref{eq:dmax-and-Wmax} that the parameters $\dmax$ and $\Wmax$ are defined to be $\dmax := O(\sqrt{k \log k}),$
$\Wmax :=  2^{O(\sqrt{k} \log^2(k))}$.

\subsection{The main idea of the algorithm} 
\label{sec:main-idea}

The high level idea of our algorithm is to alternate between the following two phases of execution:

\begin{itemize}

\item [(1)] Running Winnow2 over a feature set $\Features({\cal F})$ of augmented monomials, which is based on a collection ${\cal F}$ of (candidate stem $T'$, associated subset of variables $R_{T'} \subset [n]$) pairs that the algorithm maintains (this feature set is defined below).

\item [(2)] Growing the feature set that is used to run Winnow.  This is done by either adding a new pair $(T',R_{T'})$ to ${\cal F}$, or growing the set $R_{T'}$ for some pair $(T',R_{T'})$ that is already present in ${\cal F}$.

\end{itemize}

As alluded to in (2) above, the set ${\cal F}$ grows in two ways throughout the execution of $\LearnDNF$. The first way is by adding new pairs (corresponding to adding new candidate stems); this is accomplished by the $\findcandidatestem$ procedure that was presented and analyzed in \Cref{sec:findstem}. The second way is by growing the set $R_{T'}$ for some stem $T'$ that is the first element of an existing pair that already belongs to ${\cal F}$; this is accomplished by the $\FindRelevantVariable$ procedure that was presented and analyzed in \Cref{sec:find-relevant-variables}. 
Intuitively, the algorithm's goal in growing ${\cal F}$ is for ${\cal F}$ to become \emph{fully expressive} as defined in \Cref{def:fully-expressive}.

Observe that by \Cref{lem:augmented-PTF}, if ${\cal F}$ is fully expressive then there is a low-weight, low-degree augmented PTF over ${\cal F}$, or equivalently, a linear threshold function over the features in $\Features({\cal F})$, that exactly computes $f$.  By \Cref{thm:Winnow2}, this implies that if Winnow2 is run over the feature set $\Features({\cal F})$ where ${\cal F}$ is fully expressive, it will succeed in exact learning the DNF $f$ after at most $\poly(\Wmax) \cdot \log |\Features({\cal F})|$ mistakes.

As our algorithm is running Winnow2, it repeatedly runs $\FindRelevantVariable$ in an attempt to find new relevant variables for the current stems. 
If $\FindRelevantVariable$ succeeds in finding a new relevant variable for any stem, then we abort the current run of Winnow2 and start over with the expanded version of $\Features({\cal F})$ (this is the second way that ${\cal F}$ can grow as mentioned above). 
On the other hand, if Winnow2~exceeds the number of mistakes that it would make if $\Features({\cal F})$ were fully expressive, then we run $\findcandidatestem$ to try to find a new candidate stem (this is the first way mentioned above that ${\cal F}$ can grow).

\subsection{The $\LearnDNF$ algorithm}

We now present our main DNF learning algorithm, \LearnDNF. The algorithm makes equivalence queries at various points (in line~4 and in the course of running Winnow2); if an equivalence query returns ``correct'' then the learning process halts (successfully).  

\bigskip

\begin{algorithm}[H]
\addtolength\linewidth{-2em}

\vspace{0.5em}

\textbf{Input:} MQ and EQ access to an unknown $k$-term DNF formula $f: \zo^n \rightarrow \zo$. \\[0.25em]
\textbf{Output:} A hypothesis $h: \zo^n \to \zo.$
\vspace{0.5em}

\LearnDNF($\MQ(f),\EQ(f)$):

\vspace{0.5em}

\begin{enumerate}

        \item Set 
        $
        M_{\max} := (\FCSM / n \log(n))^{\log^3 k} \cdot \log(n)
        $ (where $\FCSM$ is defined in \Cref{lem:find-stem}), and set $R_{\max} := \Theta(k^2 \log k)$ (as defined in \Cref{remark:Rmax}). 

        \item \label{initstep} Initialize ${\cal F}$ to contain the single pair $(T' := \emptyset,R_{T'}:=\emptyset)$. 
        Make an equivalence query on the constant-false function; assuming it returns a counterexample,  initialize the set $\ALLPOS$ to the set containing this counterexample.

        \item Initialize $POS = \emptyset$. Run Winnow2 over the feature set $\Features({\cal F})$, with the following checks each time a counterexample is received:

        \begin{enumerate}
            \item Each time Winnow2 receives a positive counterexample $z \in \zo^n$ to the hypothesis $h: \zo^n \to \zo$ that was just used for its equivalence query (so $h(z)=0$ but $f(z)=1$), add $z$ to both $POS$ and $\ALLPOS$.

            \item \label{lesscoollabel} Each time Winnow2 receives a negative counterexample to the hypothesis $h: \zo^n \to \zo$ that was just used for its equivalence query (so $h(z)=1$ but $f(z)=0$):
            \begin{itemize}
                \item For each pair $(T',R_{T'})$ in ${\cal F}$ with $T'(z)=1$ and $|R_{T'}|<R_{\max}$, find a point $y \in \ALLPOS$ with $T'(y)=1$  and run $\FindRelevantVariable(f, T', R_{T'}, y,z)$.
                \item If any of these calls to $\FindRelevantVariable$ updates any set $R_{T'}$ by adding a variable to it (and hence changes ${\cal F}$), then abort the current run of Winnow2 and return to the beginning of Step~3 (hence resetting $POS = \emptyset$ but leaving $\ALLPOS$ unchanged).
            \end{itemize}

            \item \label{Josh'sCoolLabel} (Magic Moment) If the current execution of Winnow2 has made $M_{\max}+1$ equivalence queries without having grown any set $R_{T'}$ for any pair $(T',R_{T'})$ in ${\cal F}$, then run $\findcandidatestem(f, y)$ for every $y \in POS$, add all of the resulting pairs that are returned by these runs to ${\cal F}$, then return to the beginning of Step~3 (hence resetting $POS = \emptyset$ but leaving $\ALLPOS$ unchanged).

        \end{enumerate}

\end{enumerate}
\caption{The Overall DNF Learning Algorithm}
\label{alg:Glorious-Learn-DNF}
\end{algorithm}

\bigskip \bigskip

\subsection{$M_{\max}$ is large enough to learn $k$-term DNFs}

As we now argue, $M_{\max}$ is an upper bound on the maximum number of mistakes that Winnow would ever make if it were run using a set $\Features({\cal F})$ where ${\cal F}$ was built over the course of the execution of \Cref{alg:Glorious-Learn-DNF} and ${\cal F}$ is fully expressive:

\begin{lemma} \label{lem:countfeatures}
Suppose that in the execution of LearnDNF on an $k$-term DNF $f$, step~\ref{Josh'sCoolLabel} (the Magic Moment) is reached at most $k+1$ times. Then, $|\Features({\cal F})| \leq n \log^2(n) 2^{\tilde{O}(\sqrt{k})} \cdot (\FCSM / n \log(n))^{2 + \log^3 k}$.
\end{lemma}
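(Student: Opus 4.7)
}

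My plan is to carry out a direct two-step counting argument: first upper-bound $|\Features({\cal F})|$ in terms of $|{\cal F}|$ using the structural bound from \Cref{rem:RT-sets}, and then upper-bound $|{\cal F}|$ by tracking the two operations that can add to ${\cal F}$ during an execution of \LearnDNF.

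First I would invoke the right-hand inequality of \Cref{eq:bounds-on-Features} together with \Cref{remark:Rmax} to write
\[
|\Features({\cal F})| \;\leq\; |{\cal F}| \cdot \binom{R_{\max}}{\leq \dmax} \;\leq\; |{\cal F}| \cdot \binom{\Theta(k^{2}\log k)}{\leq \, O(\sqrt{k\log k})} \;=\; |{\cal F}| \cdot 2^{\tilde{O}(\sqrt{k})}.
\]
So it suffices to show $|{\cal F}| \leq n\log^{2}(n) \cdot (\FCSM/(n\log n))^{2 + \log^{3} k}$.

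Next I would bound $|{\cal F}|$ by inspecting \Cref{alg:Glorious-Learn-DNF}. The set ${\cal F}$ starts with the single pair $(\emptyset,\emptyset)$ in Step~\ref{initstep}. Step~\ref{lesscoollabel} only ever modifies an existing $R_{T'}$ (never adds a new pair), so the only place new pairs enter ${\cal F}$ is the Magic Moment in Step~\ref{Josh'sCoolLabel}. Each Magic Moment is triggered only after the current run of Winnow2 has made $M_{\max}+1$ equivalence queries, so $|POS| \leq M_{\max}+1$. At that moment we run \findcandidatestem$(f,y)$ for each $y \in POS$; by \Cref{lem:find-stem} each such call returns at most $\FCSM$ pairs. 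By the hypothesis of the lemma, the Magic Moment occurs at most $k+1$ times, so
\[
|{\cal F}| \;\leq\; 1 + (k+1)(M_{\max}+1)\,\FCSM \;\leq\; O(k) \cdot M_{\max} \cdot \FCSM.
\]

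Finally I would plug in $M_{\max} = (\FCSM/(n\log n))^{\log^{3} k} \cdot \log(n)$ and the identity $\FCSM = n\log(n) \cdot (\FCSM/(n\log n))$ to get
\[
|{\cal F}| \;\leq\; O(k) \cdot \log(n) \cdot (\FCSM/(n\log n))^{\log^{3} k} \cdot n\log(n) \cdot (\FCSM/(n\log n))
\;=\; O(k)\cdot n\log^{2}(n) \cdot (\FCSM/(n\log n))^{1+\log^{3} k}.
\]
Since $\FCSM/(n\log n) = 2^{\tilde{O}(\sqrt{k})} \geq k$ (for all relevant $k$), the stray factor of $O(k)$ is absorbed into one additional power of $(\FCSM/(n\log n))$, yielding $|{\cal F}| \leq n\log^{2}(n)\,(\FCSM/(n\log n))^{2+\log^{3} k}$. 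Combining with the first step gives the claim. There isn't really a significant obstacle here; the only subtle point is remembering that Step~\ref{lesscoollabel} does not add pairs, and being careful to absorb the $O(k)$ factor into the exponent via the cheap bound $k \leq \FCSM/(n\log n)$, which is precisely why the statement has $2+\log^{3} k$ rather than $1+\log^{3} k$ in the exponent.
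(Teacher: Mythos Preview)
Your argument is correct and essentially identical to the paper's: both bound $|{\cal F}|$ by $1+(k+1)(M_{\max}+1)\cdot\FCSM$ using that only Step~\ref{Josh'sCoolLabel} adds pairs, then multiply by $\binom{R_{\max}}{\leq \dmax}=2^{\tilde O(\sqrt{k})}$ and absorb the stray $\poly(k)$ factor into an extra power of $\FCSM/(n\log n)$. One small nit: you justify $r\leq R_{\max}$ via \Cref{remark:Rmax}, but that remark only applies to \emph{valid} stems, whereas ${\cal F}$ contains many candidate stems that are not valid; the correct justification (and the one the paper uses) is the explicit size check $|R_{T'}|<R_{\max}$ in the first bullet of Step~\ref{lesscoollabel}, which caps $|R_{T'}|$ for every pair in ${\cal F}$.
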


\begin{proof}
Let us first consider $|{\cal{F}}|$, i.e., the number of pairs in ${\cal{F}}$. At the initialization of the algorithm (step~\ref{initstep}), we have $|{\cal{F}}|=1$. Thereafter, pairs are only added to ${\cal{F}}$ in step~\ref{Josh'sCoolLabel} when we call $\findcandidatestem$.

By assumption, step~\ref{Josh'sCoolLabel} is reached at most $k+1$ times. The number of times we make a call to  
$\findcandidatestem$ in step~\ref{Josh'sCoolLabel} is $|POS|$, which is upperbounded by the number of mistakes that Winnow2 has made so far; by the condition which moves us to step~\ref{Josh'sCoolLabel}, this is at most $(k+1) \cdot (M_{\max}+1) \leq k^2 \cdot \log(n) \cdot (\FCSM / n \log(n))^{1 + \log^3 k}$ (with room to spare). Finally, each call to $\findcandidatestem$ may add up to $\FCSM = n \log(n) \cdot 2^{\tilde{O}(\sqrt{k})}$ pairs to ${\cal{F}}$ by \Cref{lem:find-stem}. We therefore have in total
$$|{\cal F}| \leq 1 + (k+1) \cdot |POS| \cdot \FCSM \leq k^4 n \log^2(n) \cdot (\FCSM / n \log(n))^{2 + \log^3 k}.$$

For each $(T', R_{T'}) \in \cal{F}$, we have $|R_{T'}| \leq R_{\max} = O( k^2 \log k)$ because of the size constraint imposed in the first bullet of step~\ref{lesscoollabel} of the algorithm. Thus, by the argument of \Cref{rem:RT-sets} above, we may conclude that 
$$|\Features({\cal F})| \leq |{\cal F}| \cdot 2^{\tilde{O}(\sqrt{k})} \leq  n \log^2(n) 2^{\tilde{O}(\sqrt{k})} \cdot (\FCSM / n \log(n))^{2 + \log^3 k},$$
as desired.
\end{proof}

The next lemma is where we use the attribute-efficiency of Winnow2, to show that our choice of $M_{\max}$ is sufficiently large.

\begin{lemma} \label{UB-on-mistakes-when-fully-expressive}
Suppose that in the execution of LearnDNF on an $k$-term DNF $f$, step~\ref{Josh'sCoolLabel} is reached at most $k+1$ times.  Let ${\cal F}$ be any fully expressive set that can be built over the course of such an execution of \Cref{alg:Glorious-Learn-DNF}.  Then if Winnow2 is run over the feature set $\Features({\cal F})$, it succeeds in exactly learning $f$ after at most $M_{\max}$ many equivalence queries (i.e.~mistakes).
\end{lemma}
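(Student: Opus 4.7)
The plan is to chain together Lemma~\ref{lem:augmented-PTF}, Remark~\ref{rem:LTF-PTF-equivalence}, Theorem~\ref{thm:Winnow2}, and Lemma~\ref{lem:countfeatures}, and then verify that the resulting Winnow2 mistake bound is at most $M_{\max}$ by a routine arithmetic comparison.

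First, since $\calF$ is fully expressive for $f$, \Cref{lem:augmented-PTF} gives that $f$ can be written as an augmented PTF over $\calF$ of degree at most $\dmax$ and weight at most $\Wmax = 2^{O(\sqrt{k}\log^2 k)}$. By \Cref{rem:LTF-PTF-equivalence}, this is the same as saying $f$ equals a linear threshold function over the features in $\Features(\calF)$ whose integer weight vector has sum of magnitudes at most $\Wmax$. Applying the Winnow2 guarantee (\Cref{thm:Winnow2}) with $N = |\Features(\calF)|$ and $W = \Wmax$, Winnow2 run on the feature set $\Features(\calF)$ succeeds in exact learning after at most $\poly(\Wmax) \cdot \log |\Features(\calF)|$ equivalence queries.

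Next, I would apply \Cref{lem:countfeatures} (whose hypothesis ``step~\ref{Josh'sCoolLabel} is reached at most $k+1$ times'' is exactly the hypothesis we already have) to bound
\[
\log|\Features(\calF)| \;\leq\; O(\log n) + \tilde{O}(\sqrt{k}) + (2+\log^3 k)\cdot \log\!\pbra{\FCSM / (n \log n)} \;\leq\; O(\log n) + \tilde{O}(\sqrt{k})\cdot \log^3 k,
\]
using that $\FCSM / (n\log n) = 2^{\tilde{O}(\sqrt k)}$ from \Cref{lem:find-stem}. Combined with $\poly(\Wmax) = 2^{\tilde{O}(\sqrt k)}$, the total number of mistakes made by Winnow2 is at most $2^{\tilde{O}(\sqrt k)}\cdot (\log n + \tilde O(\sqrt k)\log^3 k)$.

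Finally, I would verify the comparison with $M_{\max} = (\FCSM / n\log n)^{\log^3 k}\cdot \log n = 2^{\tilde{\Theta}(\sqrt k)\cdot \log^3 k}\cdot \log n$. Since the exponent $\tilde{O}(\sqrt k)$ in the Winnow2 bound is dominated by $\tilde{\Theta}(\sqrt k)\cdot \log^3 k$, and since $M_{\max}$ already carries a factor of $\log n$ matching the $\log n$ term from $\log|\Features(\calF)|$, both terms in the Winnow2 mistake bound are comfortably at most $M_{\max}$; I would present this as a short inequality check.

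The only ``real'' step here is confirming that the $\log^3 k$ power built into the definition of $M_{\max}$ is large enough to absorb both $\poly(\Wmax)$ and $\log|\Features(\calF)|$ simultaneously — everything else is invocation of previously established results. The main thing to be careful about is that the hypothesis of \Cref{lem:countfeatures} is available for free (it is the hypothesis of the lemma being proved), so no additional argument about when the Magic Moment fires is needed here.
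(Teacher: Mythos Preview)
Your proposal is correct and follows essentially the same approach as the paper: invoke \Cref{lem:augmented-PTF} (via \Cref{rem:LTF-PTF-equivalence}) and \Cref{thm:Winnow2} to get the $\poly(\Wmax)\cdot\log|\Features(\calF)|$ mistake bound, then use \Cref{lem:countfeatures} (whose hypothesis coincides with that of the lemma) to bound $|\Features(\calF)|$, and finish with the same arithmetic check that the $\log^3 k$ exponent in $M_{\max}$ dominates the $2^{O(\sqrt{k}\log^2 k)}$ factor from $\poly(\Wmax)$.
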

\begin{proof}
By the correctness guarantee of Winnow2 from \Cref{thm:Winnow2}, and the definition of $\Wmax$ from \Cref{lem:augmented-PTF}, we know that Winnow2 will succeed as long as
$$M_{\max} \gg \poly(\Wmax) \cdot \log |\Features({\cal F})|.$$
\Cref{lem:augmented-PTF} shows that $\Wmax \leq 2^{O(\sqrt{k} \log^2(k))}$, and \Cref{lem:countfeatures} shows that $|\Features({\cal F})| \leq n \log^2(n) 2^{\tilde{O}(\sqrt{k})} \cdot (\FCSM / n \log(n))^{2 + \log^3 k}$. Therefore, using the bound $\FCSM = n \log(n) \cdot 2^{\tilde{O}(\sqrt{k})}$ from \Cref{lem:find-stem}, we can calculate
\begin{align*}\poly(\Wmax) \cdot \log |\Features({\cal F})| &\leq 2^{O(\sqrt{k} \log^2(k))} \cdot O(\sqrt{k} \cdot \polylog (k) + \log(n) + (\log^3 k) \cdot \log(\FCSM / n \log(n))) \\&\leq 2^{O(\sqrt{k} \log^2(k))} \cdot (\log n) \cdot \log(\FCSM / n \log(n)) \\&\leq 2^{O(\sqrt{k} \log^2(k))} \cdot (\log n)  \\& \ll \log(n) \cdot (\FCSM / n \log(n))^{\log^3 k} \\&= M_{\max},\end{align*}
as desired.\qedhere
\end{proof}

\subsection{Correctness of $\LearnDNF$}

Our goal in this subsection is to prove that our algorithm correctly learns $f$.

\begin{theorem} \label{lem:5b-at-most-s-times}
\Cref{alg:Glorious-Learn-DNF} correctly learns $f$, and step~\ref{Josh'sCoolLabel} is reached at most $k$ times during its execution.
\end{theorem}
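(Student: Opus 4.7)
The plan is to prove both claims together by establishing the following progress invariant: each time step~\ref{Josh'sCoolLabel} is reached during the execution of $\LearnDNF$, the subsequent call(s) to $\findcandidatestem$ add to ${\cal F}$ a valid stem for some term $T_* \in f$ that previously had no valid stem in ${\cal F}$. Once this invariant is established, the bound $k$ on the number of visits to step~\ref{Josh'sCoolLabel} is immediate, since $f$ has at most $k$ terms. Correctness then follows: after at most $k$ visits to step~\ref{Josh'sCoolLabel}, ${\cal F}$ contains a valid stem for every term of $f$; at this point either some subsequent run of Winnow2 succeeds and terminates the algorithm, or $\FindRelevantVariable$ keeps growing the $R_{T'}$ sets until ${\cal F}$ is fully expressive, at which point \Cref{UB-on-mistakes-when-fully-expressive} guarantees that the next run of Winnow2 succeeds within $M_{\max}$ mistakes (and thus never triggers step~\ref{Josh'sCoolLabel} again).

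To prove the progress invariant, fix a visit to step~\ref{Josh'sCoolLabel} and suppose that ${\cal F}$ already contains valid stems for some subset $\mathcal{H} \subsetneq f$ of terms. Since Winnow2 made $M_{\max}+1$ equivalence queries in its most recent run with the feature set $\Features({\cal F})$ and no $R_{T'}$ was grown, \Cref{UB-on-mistakes-when-fully-expressive} implies ${\cal F}$ is not fully expressive. I would split into two cases. In Case~A, $\mathcal{H} = f$ (every term already has a valid stem) but some $(T',R_{T'})\in {\cal F}$ has $T_i\setminus T'\not\subseteq R_{T'}$ for some term $T_i$ with $T'\subseteq T_i$; here I would argue this case is impossible without a growth of some $R_{T'}$, by showing that among the negative counterexamples received by Winnow2 in its most recent run, at least one $z$ together with a suitable $y\in\ALLPOS$ satisfies the hypotheses of \Cref{lemma:findessentialvariables} for the pair $(T',R_{T'})$ --- so step~\ref{lesscoollabel} would have grown $R_{T'}$ and aborted the run before reaching \ref{Josh'sCoolLabel}. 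In Case~B, there is a term $T_*\in f\setminus\mathcal{H}$ that has no valid stem in ${\cal F}$; here I would argue that because Winnow2 made more than $k$ positive mistakes, some $y\in POS$ must satisfy $T_*$, and furthermore some such $y$ can be chosen to satisfy $f_{>\tau}$ but not $f_{\leq\tau}$ --- any short term of length $\le 2k$ already has the empty stem in ${\cal F}$, and any short term of length between $2k$ and $\tau$ can be absorbed into Case~A reasoning --- so by \Cref{findcandidatestem-no-failure-assumption}, $\findcandidatestem(f,y)$ returns a valid stem for $T_*$, establishing the invariant.

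The main obstacle I expect is Case~A, which amounts to showing that while Winnow2 runs over an \emph{almost}-fully-expressive ${\cal F}$, some negative counterexample $z$ and some positive $y\in\ALLPOS$ will satisfy the preconditions of \Cref{lemma:findessentialvariables} --- in particular, the technical condition that the hybrid $z'=z_{R_{T'}}\sqcup y_{\overline{R_{T'}}}$ satisfies some short term of $f_{T'}$. The plan for this is to exploit the fact that for the offending pair $(T',R_{T'})$ with $T'$ a valid stem of $T_i$, the residual $T_i$-restricted to $T'$ has length at most $2k\le\tau$, so it is a short term of $f_{T'}$; then by choosing $y\in\ALLPOS$ to be a positive counterexample that satisfies $T_i$ (guaranteed to exist because every term in $\mathcal{H}$ must contribute a positive mistake at some point for Winnow2 to recover a correct weight), the hybrid $z'$ inherits from $y$ all the literals outside $R_{T'}$ needed to satisfy this short residual term, while the literals of $T'$ itself are also satisfied since $T'(z)=1$. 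The delicate point is verifying that a negative $z$ with $T'(z)=1$ actually appears as a counterexample during Winnow2's run, which follows from the fact that the linear threshold hypothesis computed by Winnow2 is forced to disagree with $f$ somewhere on the set $\{x:T'(x)=1\}$ whenever $R_{T'}$ misses a relevant variable of $T_i$.
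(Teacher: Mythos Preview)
Your high-level structure matches the paper's: establish a progress invariant (each visit to step~\ref{Josh'sCoolLabel} yields a valid stem for a new term), then deduce the $k$-visit bound and correctness. The paper likewise splits along your Case~A/B lines, via three helper lemmas (\Cref{helper1,helper2,helper3}).

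However, there is a genuine gap in how you justify the existence of the right counterexamples. In Case~B you assert that ``because Winnow2 made more than $k$ positive mistakes, some $y\in POS$ must satisfy $T_*$.'' This does not follow: nothing prevents \emph{all} positive counterexamples from satisfying only terms in $\mathcal{H}$, no matter how many there are. Similarly, in Case~A your claim that a negative $z$ with $T'(z)=1$ must appear among the counterexamples, and that a $y\in\ALLPOS$ satisfying $T_i$ itself (not just $T'$) must exist, are both unjustified --- Winnow2 need not see a positive example for every individual term in order to learn, and there is no a~priori reason the equivalence oracle must produce a negative example in the region $\{x:T'(x)=1\}$.

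The paper's missing device, used in both cases, is an \emph{impostor DNF}. For Case~B (their \Cref{helper3}) they set $f''=\bigvee_{i\in M}T_i$ over only the terms that already have valid stems, observe that $f''$ is expressible over the current $\Features({\cal F})$, and invoke \Cref{UB-on-mistakes-when-fully-expressive}: if every example seen were consistent with $f''$, Winnow2 would have halted within $M_{\max}$ mistakes, a contradiction; hence some $y\in POS$ has $f(y)=1$, $f''(y)=0$, so $y$ satisfies a term with no stem. For Case~A (their \Cref{helper1}) they set $f'=\bigvee_i\widetilde{T}_i$ where $\widetilde{T}_i$ is $T_i$ restricted to the literals over $T'_i\cup R_{T'_i}$; again $f'$ is expressible over $\Features({\cal F})$, so Winnow2 must encounter an $x$ with $f(x)\neq f'(x)$, and since removing literals only enlarges the satisfying set one gets $f(x)=0$, $f'(x)=1$. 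This $x$ is the needed negative counterexample: $\widetilde{T}_{i^*}(x)=1$ for some $i^*$ forces $T'_{i^*}(x)=1$ and supplies exactly the coordinates in $R_{T'_{i^*}}$ needed for the hybrid condition of \Cref{lemma:findessentialvariables}. Once you insert this impostor-DNF trick, the rest of your outline goes through.
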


Our proof of \Cref{lem:5b-at-most-s-times} will make use of the following three helper lemmas which describe the features and examples we must have when we reach different stages of the algorithm.

\begin{lemma} \label{helper1}
    Suppose step~\ref{Josh'sCoolLabel} has been reached at most $k+1$ times. 
    If ${\cal F}$ has a valid stem for each term of $f$, but for some term $T_i$, each valid stem $T'$ in ${\cal F}$ for $T_i$ is such that its $R_{T'}$ is missing a relevant variable, then one of the calls to $\FindRelevantVariable$ in step~\ref{lesscoollabel} will succeed and abort the current run of Winnow2 before step~\ref{Josh'sCoolLabel} is reached.
\end{lemma}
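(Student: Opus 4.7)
I plan to argue by contradiction. Suppose that during the current Winnow2 run, no call to \FindRelevantVariable~in step~\ref{lesscoollabel} succeeds; then $\Features({\cal F})$ remains fixed throughout, and Winnow2 runs uninterrupted until it either halts with a correct hypothesis or makes $M_{\max}+1$ mistakes (triggering step~\ref{Josh'sCoolLabel}). By the hypothesis of the lemma we may fix a valid stem $T' \in {\cal F}$ for $T_i$ together with a missing relevant variable $x_j \in T_i \setminus T'$, with $j \notin R_{T'}$.

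The first step is to rule out Winnow2 halting successfully. I would argue that $f$ cannot be exactly computed by any LTF over $\Features({\cal F})$ by exhibiting inputs $y, y^{\oplus j}$ satisfying $T'$ with $f(y) \neq f(y^{\oplus j})$ but on which every feature in $\Features({\cal F})$ takes the same value. Augmented monomials built from $(T', R_{T'})$ are trivially blind to $x_j$ (since $j$ is in neither $T'$ nor $R_{T'}$); for the other pairs $(T'', R_{T''}) \in {\cal F}$ that do involve $x_j$ (either $j \in R_{T''}$ or $x_j$ appears as a literal in $T''$), one would choose $y$ so that each such feature happens to agree on $y$ and $y^{\oplus j}$. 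Combined with \Cref{UB-on-mistakes-when-fully-expressive} and the assumption on the number of times step~\ref{Josh'sCoolLabel} has been reached, the non-representability of $f$ forces Winnow2 to continue making mistakes until either step~\ref{lesscoollabel} aborts or $M_{\max}+1$ mistakes elapse.

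The second step is to construct a successful \FindRelevantVariable~call that contradicts our standing assumption. The intended witnesses for \Cref{lemma:findessentialvariables} are: a $y \in \ALLPOS$ satisfying $T_i$ (such a $y$ must appear in $\ALLPOS$ during the run, since $T_i$-satisfying positive inputs are eventually mispredicted by the evolving hypothesis), and a negative counterexample $z$ satisfying $T'$ whose values on $R_{T'} \cap T_i$ agree with the literals required by $T_i$. The hybrid $z' := z_{R_{T'}} \sqcup y_{\overline{R_{T'}}}$ then satisfies $T_i \setminus T'$, which is a short term of $f_{T'}$ since $|T_i \setminus T'| \leq 2k \leq \tau$, so \Cref{lemma:findessentialvariables} guarantees that \FindRelevantVariable~will succeed. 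The main obstacle, which I would handle last, is showing that a negative counterexample $z$ with the required structure is in fact returned by the adversarial equivalence oracle before $M_{\max}+1$ mistakes are made; my plan is to leverage the structural observation from the first step --- that the current hypothesis cannot distinguish $y$ from $y^{\oplus j}$ --- together with a pigeonhole or potential-function argument bounding the number of ``structurally unusable'' negative counterexamples the adversary can return before being forced to return a good one.
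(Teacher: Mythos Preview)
Your approach diverges from the paper's at the crucial point, and the gap you flag as ``the main obstacle'' is genuine and is not closed by the strategy you sketch.

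The pigeonhole/potential-function plan in your second step does not work against an adversarial equivalence oracle. You need a \emph{specific} negative counterexample $z$ that (i) satisfies your chosen stem $T'$ and (ii) agrees with $T_i$ on the coordinates in $R_{T'}$. But nothing in your argument bounds how many ``structurally unusable'' counterexamples the oracle can return; the oracle may simply never give you such a $z$. The indistinguishability observation from your first step---that the hypothesis cannot separate some pair $y, y^{\oplus j}$---does not constrain the oracle's choice of counterexample in the way you need.

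The paper's device for forcing a good negative example is an \emph{impostor DNF} $f'$: for each term $T_j$ of $f$, pick a valid stem $T'_j$ in $\mathcal{F}$ whose $R_{T'_j}$ captures as many of $T_j$'s variables as possible, and let $\widetilde{T}_j$ be $T_j$ with all literals outside $T'_j \cup R_{T'_j}$ deleted; set $f' = \bigvee_j \widetilde{T}_j$. Two facts do all the work. First, $\mathcal{F}$ is fully expressive for $f'$ by construction, so by \Cref{UB-on-mistakes-when-fully-expressive} Winnow2 cannot accumulate $M_{\max}+1$ mistakes on examples that are all consistent with $f'$; hence some counterexample $x$ with $f(x)\neq f'(x)$ must appear. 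Second, since $f'$ was obtained from $f$ by deleting literals, $f'(x)=0 \Rightarrow f(x)=0$, so this $x$ must have $f(x)=0$ and $f'(x)=1$. That is precisely a negative example satisfying some $\widetilde{T}_{i^*}$, and hence satisfying the stem $T'_{i^*}$ and all literals of $T_{i^*}$ lying in $R_{T'_{i^*}}$---exactly the structure you were trying to force by pigeonhole.

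Your first step is also shakier than you acknowledge. Showing that $f$ is not computable by \emph{any} LTF over $\Features(\mathcal{F})$ via a single indistinguishable pair $y, y^{\oplus j}$ requires choosing $y$ so that every other candidate stem $T''$ containing $x_j$ (or with $j\in R_{T''}$) is unsatisfied by both $y$ and $y^{\oplus j}$, while simultaneously $f(y)\neq f(y^{\oplus j})$. With an uncontrolled pool of candidate stems in $\mathcal{F}$, it is not clear such a $y$ exists. The paper sidesteps this entirely: it never needs to argue that $f$ is inexpressible, only that the examples received cannot all be consistent with the (expressible) impostor $f'$.
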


\begin{lemma} \label{helper2}
    Suppose step~\ref{Josh'sCoolLabel} has been reached at most $k+1$ times.     Each time we reach step \ref{Josh'sCoolLabel}, there must be a term of $f$ for which our ${\cal F}$ does not contain a valid stem.
\end{lemma}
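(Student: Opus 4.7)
My plan is to prove \Cref{helper2} by contradiction, using \Cref{helper1} and \Cref{UB-on-mistakes-when-fully-expressive} as the two main ingredients. Suppose that step~\ref{Josh'sCoolLabel} is reached (and by hypothesis this is at most the $(k+1)$-st such occurrence), and suppose for contradiction that at this moment $\mathcal{F}$ already contains a valid stem for every term of $f$. I will split into two cases depending on whether $\mathcal{F}$ is fully expressive in the sense of \Cref{def:fully-expressive}.

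In the first case, suppose $\mathcal{F}$ is fully expressive. Then \Cref{UB-on-mistakes-when-fully-expressive} (whose hypothesis that step~\ref{Josh'sCoolLabel} is reached at most $k+1$ times is inherited from the hypothesis of \Cref{helper2}) guarantees that running Winnow2 over the feature set $\Features(\mathcal{F})$ succeeds in exactly learning $f$ after at most $M_{\max}$ equivalence queries. However, the triggering condition for step~\ref{Josh'sCoolLabel} is precisely that the \emph{current} execution of Winnow2 has made $M_{\max}+1$ equivalence queries without having grown any $R_{T'}$ (and in particular without having been aborted). Since the current $\mathcal{F}$ was unchanged throughout this current Winnow2 run, this is a direct contradiction with the Winnow2 mistake bound.

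In the second case, suppose $\mathcal{F}$ is not fully expressive. Unpacking \Cref{def:fully-expressive}, this means there exists some term $T_i$ of $f$ such that no valid stem $T' \in \mathcal{F}$ of $T_i$ satisfies $T_i \setminus T' \subseteq R_{T'}$. Combined with our standing assumption that $\mathcal{F}$ has at least one valid stem for each term (and in particular for $T_i$), this matches the hypothesis of \Cref{helper1} exactly. That lemma then guarantees that one of the calls to $\FindRelevantVariable$ in step~\ref{lesscoollabel} must have succeeded and aborted the current run of Winnow2 before step~\ref{Josh'sCoolLabel} was reached, once again contradicting the fact that we actually reached step~\ref{Josh'sCoolLabel} in this Winnow2 run.

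The step I expect to need the most care with is simply lining up the hypothesis ``step~\ref{Josh'sCoolLabel} has been reached at most $k+1$ times'' across the three lemmas and making sure the ``current run of Winnow2'' bookkeeping is consistent: each restart resets $POS$ and the Winnow2 mistake counter but leaves $\mathcal{F}$ and $\ALLPOS$ intact, so the mistake bound $M_{\max}$ from \Cref{UB-on-mistakes-when-fully-expressive} is exactly the right quantity to compare against the $M_{\max}+1$ threshold in the triggering condition. No new technical tools are needed beyond \Cref{helper1} and \Cref{UB-on-mistakes-when-fully-expressive}; the proof is a clean case split chaining the two.
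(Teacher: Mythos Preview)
Your proposal is correct and follows essentially the same approach as the paper: a contradiction argument that case-splits on whether $\mathcal{F}$ is fully expressive, invoking \Cref{UB-on-mistakes-when-fully-expressive} in the fully expressive case and \Cref{helper1} in the other. The paper's proof is slightly more terse but the logical structure is identical.
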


\begin{lemma} \label{helper3}
    Suppose step~\ref{Josh'sCoolLabel} has been reached at most $k+1$ times.     If there is a term of $f$ for which ${\cal F}$ does not contain a valid stem, then before getting to step \ref{Josh'sCoolLabel}, we must receive a satisfying assignment $y$ such that
    \begin{itemize}
        \item $y$ satisfies a term $T_i$ for which ${\cal F}$ does not contain a valid stem, and
        \item $y$ satisfies $f_{>\tau}$ and not $f_{\leq \tau}$.
    \end{itemize}
\end{lemma}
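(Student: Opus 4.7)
The plan is to argue by contrapositive. Suppose we have just reached step~\ref{Josh'sCoolLabel} in the current run of Winnow2 (so Winnow2 has made $M_{\max}+1$ equivalence queries in this run and every call to $\FindRelevantVariable$ in step~\ref{lesscoollabel} failed to grow any $R_{T'}$), and suppose further that no positive counterexample $y$ received during this run simultaneously satisfies both (i) some term $T_i$ for which ${\cal F}$ has no valid stem, and (ii) the condition that $y$ satisfies $f_{>\tau}$ but not $f_{\leq \tau}$. Under these assumptions I will derive a contradiction by exhibiting a weight-$\Wmax$ linear threshold function over $\Features({\cal F})$ that is consistent with every example received in the current run; by \Cref{thm:Winnow2} together with the calculation in the proof of \Cref{UB-on-mistakes-when-fully-expressive}, such a representation would imply Winnow2 makes at most $M_{\max}$ mistakes in the current run, contradicting that we have reached step~\ref{Josh'sCoolLabel}.

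To construct such an LTF, let $f_{\mathrm{covered}}$ denote the sub-DNF of $f$ consisting of the terms that already have a valid stem in ${\cal F}$, and define $g := f_{\mathrm{covered}} \vee f_{\leq \tau}$. The contrapositive assumption implies that every positive counterexample $y$ received in the current run falls into at least one of the cases (A) $y$ satisfies only terms that are covered by ${\cal F}$, or (B) $y$ satisfies some short term (of length $\leq \tau$) of $f$; in either case $g(y) = 1$, so $g$ agrees with $f$ on every positive counterexample. Since $g$ is a sub-DNF of $f$, we also have $g(z) = 0$ on every negative counterexample $z$. To express $g$ as a weight-$\Wmax$ augmented PTF over $\Features({\cal F})$, I apply the Chebyshev polynomial construction of \Cref{claim:approx-conjunction} term-by-term: if $T_i$ is covered by a valid stem $T' \in {\cal F}$, I build augmented monomials on top of $T'$ (which reduces the relevant conjunction to length at most $2k$); otherwise $T_i$ is a short term of length at most $\tau = O(k)$, and I build augmented monomials on top of the empty stem $(\emptyset,R_\emptyset) \in {\cal F}$, noting that the Chebyshev approximation for a conjunction of length $\tau$ still has degree at most $\dmax$ and weight at most $\Wmax$ up to constants absorbed into $\dmax, \Wmax$.

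The main obstacle will be verifying that in each of the above cases, the variables needed by the Chebyshev construction actually lie in the corresponding set $R_{T'}$ (respectively $R_\emptyset$). I plan to establish this by a direct analog of the argument behind \Cref{helper1}, applied to $g$ rather than to $f$: if some necessary variable were missing, then for some negative counterexample received in the current run, an appropriate positive example from $\ALLPOS$ (which must exist because $g$ sees positives of the relevant form) together with that negative counterexample would satisfy the hypotheses of \Cref{lemma:findessentialvariables} for the relevant stem, and the corresponding call to $\FindRelevantVariable$ in step~\ref{lesscoollabel} would have succeeded in growing $R_{T'}$---contradicting the fact that no $R_{T'}$ was grown during the current run. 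Once this ``${\cal F}$ is fully expressive for $g$'' claim is in hand, the low-weight LTF representation of $g$ over $\Features({\cal F})$ follows immediately by the proof of \Cref{lem:augmented-PTF}, and the rest of the argument is mechanical.
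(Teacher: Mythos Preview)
Your approach is essentially the same as the paper's: both argue by contradiction, define an ``impostor'' sub-DNF of $f$ consistent with every example seen in the current run (your $g = f_{\mathrm{covered}} \vee f_{\leq \tau}$, the paper's $f'' = \bigvee_{i \in M} T_i$ over the set $M$ of terms with valid stems), then argue that this impostor is expressible as a low-weight augmented PTF over $\Features({\cal F})$---invoking the fact that we did not exit at step~\ref{lesscoollabel} and the reasoning behind \Cref{helper1}---so Winnow2 could not have made $M_{\max}+1$ mistakes.

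The paper is terser: it simply asserts ``since we didn't exit at step~\ref{lesscoollabel}, $f''$ is expressible by the ${\cal F}$ we currently have,'' which is just a one-line appeal to the helper1 argument applied to $f''$, and it handles the second bullet by a parenthetical remark that $\emptyset$ is a stem whenever $y$ satisfies a short term. Your version is more explicit, separating out the case of short terms of length between $2k$ and $\tau$ (for which $\emptyset$ is not literally a ``valid stem'' under \Cref{def:eligible-pair}) and noting that the Chebyshev construction of \Cref{claim:approx-conjunction} still applies to conjunctions of length $O(k)$ up to the constants hidden in $\dmax,\Wmax$. This extra care is fine and arguably tightens a loose point in the paper's parenthetical, but it does not change the structure of the argument.
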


We first show how to prove \Cref{lem:5b-at-most-s-times} using the helper lemmas, then we will prove the helper lemmas.

\begin{proof}[Proof of \Cref{lem:5b-at-most-s-times} assuming Lemmas~\ref{helper1}, \ref{helper2} and \ref{helper3}]

 Each of the first $k+1$ times that step \ref{Josh'sCoolLabel} is reached, \Cref{helper2} shows that there is a term of $f$ for which ${\cal F}$ does not have a valid stem.  \Cref{helper3} therefore shows that $POS$ must contain a point $y$ which satisfies a term for which ${\cal F}$ does not have a stem. \Cref{lem:find-stem} thus shows that $\findcandidatestem$ will add to ${\cal F}$ a stem for a new term, that it did not previously have a stem for, each of these times step \ref{Josh'sCoolLabel} is reached. In particular, since there are only $k$ terms, this means step \ref{Josh'sCoolLabel} was previously reached at most $k$ times. 

Towards a contradiction, let us suppose that \Cref{alg:Glorious-Learn-DNF} does not correctly learn $f$; under this assumption, we show that (after the last time step~\ref{Josh'sCoolLabel} is reached) ${\cal F}$ will eventually become fully expressive for $f$. First, ${\cal F}$ must contain a valid stem for each term in $f$ by \Cref{helper2,helper3}, since otherwise, we would again reach step~\ref{Josh'sCoolLabel}. Next, by \Cref{helper1}, we know that if there were a term of $f$ such that every valid stem in ${\cal F}$ is missing a relevant variable, then the run of Winnow2 would terminate in step~\ref{lesscoollabel}. Since we never run step \ref{Josh'sCoolLabel} again, it follows that the number of stems in $\calF$ is bounded. Since each stem can have at most $R_{\max} = O(k^2 \log(k))$ many associated relevant variables, it follows that we can only terminate Winnow2 in step \ref{lesscoollabel} a finite number of times. Once this upper bound has been reached, ${\cal F}$ must be fully expressive for $f$ as desired. Once $\calF$ becomes fully expressive for $f$, we can apply \Cref{UB-on-mistakes-when-fully-expressive}, which shows that when Winnow2 terminates, it will successfully return $f$ as desired. 
\qedhere

\end{proof}

\begin{proof}[Proof of \Cref{helper1}]

\emph Consider the  \emph{impostor} DNF $f'$, defined as follows. For each term $T_i$, let $E_i$ denote the set of its relevant variables, and consider a pair $(T'_i,R_{T'_i})$ from ${\cal F}$ such that $T'_i$ is a valid stem for $T_i$ and the set $R'_{T'_i} := R_{T'_i} \cap E_i$ is maximal. Define the term $\widetilde{T}_i$ to be the conjunction of all literals in $T_i$ corresponding to variables in $T'_i \cup R_{T'_i}$. In other words, $\widetilde{T}_i$ is gotten by removing from $T_i$ all literals corresponding to variables not in $T'_i \cup R_{T'_i}$.  Define
$$f'(x) := \bigvee_{i=1}^s \widetilde{T}_i.$$

The set ${\cal F}$ contains all the features necessary to learn $f'$ by its definition, so when we run Winnow2 for $M_{max}$ steps with the set of features from ${\cal F}$, we must encounter a $x$ for which $f(x) \neq f'(x)$. Indeed, if it never saw an input inconsistent with $f'(x)$, it would errantly output $f'(x)$ by \Cref{UB-on-mistakes-when-fully-expressive}.

In particular, this $x$ must satisfy $f(x) = 0$ and $f'(x) = 1$. This is because we defined $f'$ by removing literals from terms of $f$, so for any input point $z$, if $f'(z)=0$ then $f(x)=0$.

To conclude that that we would have exited in step~5a, it remains to argue that this negative example $x$ could play the role of $z$ in a $(T',R_{T'},y,z)$ quadruple that would lead $\FindRelevantVariable$ to update the set $R_{T'}$. We will do so by picking $T'$ and $y$, and then showing that together they satisfy the premises of \Cref{lemma:findessentialvariables}.

Since $f'(x)=1$, there is some particular term, say $\widetilde{T}_{i^*}$, for which $\widetilde{T}_{i^*}(x)=1$. Moreover, $T_{i^*}(x) = 0$ because $f(x) = 0$. The $T'$ we need is the valid stem of $T_{i^*}$ which we called $T'_{i^*}$ above.

We know that $\widetilde{T}_{i^*}(x)=1$ which implies $T'_{i^*}(x)=1$. Since $T'_{i^*}$ is a stem in our set ${\cal F}$, this means there must be a $y \in \ALLPOS$ such that $T'_{i^*}(y)=1$. This is because of the way we add stems to ${\cal F}$ in the algorithm: 
if $T'_{i^*} = \emptyset$ then any $y \in \ALLPOS$ will do (note that $\ALLPOS$ is nonempty because of the way it is initialized in step~\ref{initstep}) and otherwise, we must have added $T'_{i^*}$ to ${\cal F}$ from a call to $\findcandidatestem$ in step~\ref{Josh'sCoolLabel}, which itself takes as input a point $y$ that was in $POS$ at the time, and thus is still in $\ALLPOS$ now, and $\findcandidatestem$ only returns a stem that $y$ satisfies.

To apply \Cref{lemma:findessentialvariables}, it remains only to argue that the hybrid input $x_{R_{T'_{i^*}}} \sqcup y_{\overline{R_{T'_{i^*}}}}$ satisfies a short term of $f_{T'_{i^*}}.$
We claim that $T_{i^*}|_{T'_{i^*}}$ is a desired short term. This is a short term since $T'_{i^*}$ is a valid stem for $T_{i^*}$, so restricting to $T'_{i^*}$ leaves at most $2k$ remaining literals. The input $y$ satisfies $T_{i^*}$ by its very definition, and so it in particular satisfies the literals of $T_{i^*}$ that are contained in $\overline{R_{T'_{i^*}}}$. Finally, $x$ satisfies the literals of $T_{i^*}$ that are contained in $R_{T'_{i^*}}$ because $T'_{i^*}(x)=1$ as observed earlier.     
\end{proof}

\begin{proof}[Proof of \Cref{helper2}]

Each time we reach step \ref{Josh'sCoolLabel}, it means we went through the whole run of Winnow with its current $\Features({\cal F})$ set of features, making $M_{\max}+1$ many mistakes, without adding any more relevant variables to any $R_{T'}$.  

Assume to the contrary we have a valid stem for each term of $f$. If ${\cal F}$ were fully expressive, then by \Cref{UB-on-mistakes-when-fully-expressive} Winnow2 would have succeeded. Therefore, we are in the case that we have a valid stem for each term, but for some term $T_i$, each valid stem $T'$ for $T_i$ is such that its $R_{T'}$ is missing a relevant variable.  \Cref{helper1} thus says we would have exited in step~\ref{lesscoollabel}, which contradicts the premise that we reached step~\ref{Josh'sCoolLabel}.    
\end{proof}

\begin{proof}[Proof of \Cref{helper3}]

Assume to the contrary that we get to step~\ref{Josh'sCoolLabel} without ever getting such a satisfying assignment $y$. This means that every satisfying assignment $y$ we ever got satisfies some term $T_j$ for which ${\cal F}$ contains a stem (note that if $y$ satisfies $f_{\leq L}$, then $\emptyset$ is such a stem). Let $M \subset [k]$ be the set of indices of terms that we have a valid stem for, i.e., we have a valid stem for $T_i$ for all $i \in M$ and not for any $i \notin M$. Note that $M$ is a strict subset since, by assumption, there is a term of $f$ for which ${\cal F}$ does not have a stem.

Consider the following new \emph{impostor} DNF $f''$:
$$f'' := \bigvee_{i \in M} T_i.$$
This DNF $f''$ is different from $f$ since $M$ is a strict subset of $[k]$. However, we notice that every assignment we were given is consistent with $f''$. The assignments we got that satisfy $f$ also satisfy $f''$ since, as we assumed earlier, they all satisfy terms of $f$ for which we have a valid stem. The assignments we got that don't satisfy $f$ also don't satisfy $f''$  since $f''$ has a subset of the terms of $f$, meaning every satisfying assignment of $f''$ is also a satisfying assignment of $f$.

Since we didn't exit at step~\ref{lesscoollabel}, this means $f''$ is expressible by the ${\cal F}$ we currently have, so by \Cref{UB-on-mistakes-when-fully-expressive}, Winnow2 must give us a point $y$ with $f''(y) \neq f(y)$, a contradiction.\qedhere
    
\end{proof}

\subsection{Bounding the failure probability:  Getting rid of \Cref{noise-no-failure-assumption} and \Cref{findcandidatestem-no-failure-assumption}}  \label{sec:handling-failure-probability}

We finally prove our main result!

\begin{proof}[Proof of \Cref{thm:main}]
    We have thus far made use of \Cref{noise-no-failure-assumption} and \Cref{findcandidatestem-no-failure-assumption}. By \Cref{lem:5b-at-most-s-times}, under these assumptions, our \Cref{alg:Glorious-Learn-DNF} correctly learns $f$. As in the proof of \Cref{lem:countfeatures}, we know it has made $\leq \poly(n,k) \cdot 2^{\tilde{O}(\sqrt{k})}$ mistakes, and also at most this many calls to \findcandidatestem and $\FindRelevantVariable$. By the running time bounds of \Cref{lem:find-stem} and \Cref{lemma:findessentialvariables}, this means the total running time is also $\leq \poly(n,k) \cdot 2^{\tilde{O}(\sqrt{k})}$.

    Finally we may remove \Cref{noise-no-failure-assumption} and \Cref{findcandidatestem-no-failure-assumption} by union bounds: by the above argument, each of the subroutines is called at most $\poly(n,k) \cdot 2^{\tilde{O}(\sqrt{k})}$ times, and we may set the error probabilities to be much smaller than the inverse of this (by setting $\kappa = 1/(\poly(n,k) \cdot 2^{\tilde{\Theta}(\sqrt{k})})$ sufficiently small in \Cref{noise-no-failure-assumption}, and   by \Cref{lem:find-stem} for \Cref{findcandidatestem-no-failure-assumption}) so that overall, they will all succeed except with failure probability at most  $1/(\poly(n,k) \cdot 2^{\tilde{\Theta}(\sqrt{k})})$.
\end{proof}

\section*{Acknowledgements}
Josh Alman is supported in part by NSF Grant CCF-2238221 and a Packard Fellowship. 
Shyamal Patel is supported by NSF grants CCF-2106429, CCF-2107187, CCF-2218677, ONR grant ONR-13533312, and an NSF Graduate Student Fellowship. Rocco Servedio is supported by NSF grants CCF-2106429 and CCF-2211238. This work was partially completed while a subset of the authors was visiting the Simons Institute for the Theory of Computing.

\begin{flushleft}
\bibliographystyle{alpha}
\bibliography{allrefs}

\newcommand{\etalchar}[1]{$^{#1}$}
\begin{thebibliography}{BGHM96}

\bibitem[ABK{\etalchar{+}}98]{ABKKPR98}
Howard Aizenstein, Avrim Blum, Roni Khardon, Eyal Kushilevitz, Leonard Pitt, and Dan Roth.
\newblock On learning read-k-satisfy-j {DNF}.
\newblock {\em {SIAM} J. Comput.}, 27(6):1515--1530, 1998.

\bibitem[AHP92]{AHP92}
Howard Aizenstein, Lisa Hellerstein, and Leonard Pitt.
\newblock Read-thrice {DNF} is hard to learn with membership and equivalence queries.
\newblock In {\em Proc. 33rd Annual Symposium on Foundations of Computer Science (FOCS)}, pages 523--532. {IEEE} Computer Society, 1992.

\bibitem[Ang87]{Angluin:87}
Dana Angluin.
\newblock Learning regular sets from queries and counterexamples.
\newblock {\em Information and Computation}, 75(2):87--106, 1987.

\bibitem[Ang88]{Angluin:88}
Dana Angluin.
\newblock Queries and concept learning.
\newblock {\em Machine Learning}, 2:319--342, 1988.

\bibitem[Ang92]{Angluin92}
Dana Angluin.
\newblock {Learning $k$-term DNF formulas using queries and counterexamples}.
\newblock Technical report, Yale University Department of Computer Science, YALEU/DCR/RR-559,, 1992.

\bibitem[ANPS25]{ANPS25}
Josh Alman, Shivam Nadimpalli, Shyamal Patel, and Rocco~A. Servedio.
\newblock {DNF Learning via Locally Mixing Random Walks}.
\newblock In {\em Proceedings of the 57th Annual {ACM} Symposium on Theory of Computing (STOC)}, page to appear. {ACM}, 2025.

\bibitem[AP91]{AizensteinPitt91}
Howard Aizenstein and Leonard Pitt.
\newblock Exact learning of read-twice {DNF} formulas (extended abstract).
\newblock In {\em 32nd Annual Symposium on Foundations of Computer Science, San Juan, Puerto Rico, 1-4 October 1991}, pages 170--179, 1991.

\bibitem[AP95]{AizensteinPitt:95}
H.~Aizenstein and Leonard Pitt.
\newblock On the learnability of disjunctive normal form formulas.
\newblock {\em Machine Learning}, 19:183--208, 1995.

\bibitem[BBB{\etalchar{+}}00]{BBB+:00}
Amos Beimel, Francesco Bergadano, Nader~H. Bshouty, Eyal Kushilevitz, and Stefano Varricchio.
\newblock Learning functions represented as multiplicity automata.
\newblock {\em J. ACM}, 47(3):506--530, 2000.

\bibitem[BBB{\etalchar{+}}08]{BBBEncyclopedia}
Amos Beimel, Francesco Bergadano, Nader~H Bshouty, Eyal Kushilevitz, and Stefano Varricchio.
\newblock {\em Learning Automata}, pages 425--428.
\newblock Encyclopedia of Algorithms. Springer US, Boston, MA, 2008.

\bibitem[Ber93]{Berggren93}
Ulf Berggren.
\newblock Linear time deterministic learning of \emph{k}-term {DNF}.
\newblock In {\em Proceedings of the Sixth Annual {ACM} Conference on Computational Learning Theory (COLT)}, pages 37--40, 1993.

\bibitem[BFH02]{BFH02}
Peter~L. Bartlett, Paul Fischer, and Klaus{-}Uwe H{\"{o}}ffgen.
\newblock Exploiting random walks for learning.
\newblock {\em Inf. Comput.}, 176(2):121--135, 2002.

\bibitem[BGHM96]{BGHM96}
Nader~H. Bshouty, Sally~A. Goldman, Thomas~R. Hancock, and Sleiman Matar.
\newblock Asking questions to minimize errors.
\newblock {\em J. Comput. Syst. Sci.}, 52(2):268--286, 1996.

\bibitem[BH98]{bshhel98}
Nader~H. Bshouty and Lisa Hellerstein.
\newblock Attribute-efficient learning in query and mistake-bound models.
\newblock {\em Journal of Computer and System Sciences}, 56(3):310--319, June 1998.

\bibitem[BHL95]{BHL:95}
A.~Blum, L.~Hellerstein, and N.~Littlestone.
\newblock Learning in the presence of finitely or infinitely many irrelevant attributes.
\newblock {\em Journal of Computer and System Sciences}, 50:32--40, 1995.

\bibitem[BJ99]{BshoutyJackson:99}
N.~Bshouty and J.~Jackson.
\newblock Learning {DNF} over the uniform distribution using a quantum example oracle.
\newblock {\em SIAM J. on Computing}, 28(3):1136--1153, 1999.

\bibitem[BJT99]{BJT:99}
N.~Bshouty, J.~Jackson, and C.~Tamon.
\newblock More efficient {PAC} learning of {DNF} with membership queries under the uniform distribution.
\newblock In {\em Proceedings of the Twelfth Annual Conference on Computational Learning Theory}, pages 286--295, 1999.

\bibitem[Bla09]{Blaisstoc09}
Eric Blais.
\newblock Testing juntas nearly optimally.
\newblock pages 151--158, 2009.

\bibitem[BLQT22]{BLQT22}
Guy Blanc, Jane Lange, Mingda Qiao, and Li{-}Yang Tan.
\newblock Properly learning decision trees in almost polynomial time.
\newblock {\em J. {ACM}}, 69(6):39:1--39:19, 2022.

\bibitem[BMOS05]{BMO+:05}
Nader~H. Bshouty, Elchanan Mossel, Ryan O’Donnell, and Rocco~A. Servedio.
\newblock {Learning {DNF} from Random Walks}.
\newblock {\em Journal of Computer \& System Sciences}, 71(3):250--265, 2005.

\bibitem[BR92]{BlumRudich92}
Avrim Blum and Steven Rudich.
\newblock Fast learning of $k$-term {DNF} formulas with queries.
\newblock In {\em Proceedings of the 24th Annual {ACM} Symposium on Theory of Computing (STOC)}, pages 382--389. {ACM}, 1992.

\bibitem[BR95]{BlumRudich:95}
A.~Blum and S.~Rudich.
\newblock Fast learning of $k$-term {DNF} formulas with queries.
\newblock {\em Journal of Computer and System Sciences}, 51(3):367--373, 1995.

\bibitem[Bsh95]{Bshouty:95}
Nader~H. Bshouty.
\newblock Exact learning via the monotone theory.
\newblock {\em Information and Computation}, 123(1):146--153, 1995.

\bibitem[Bsh96]{bsh96}
Nader~H. Bshouty.
\newblock {A Subexponential Exact Learning Algorithm for DNF Using Equivalence Queries}.
\newblock {\em Information Processing Letters}, 59(1):37--39, 1996.

\bibitem[Bsh97]{Bshouty:97}
Nader~H. Bshouty.
\newblock Simple learning algorithms using divide and conquer.
\newblock {\em Computational Complexity}, 6:174--194, 1997.

\bibitem[Bsh19]{bshouty2019almost}
Nader~H. Bshouty.
\newblock Almost optimal distribution-free junta testing.
\newblock {\em arXiv preprint arXiv:1901.00717}, 2019.

\bibitem[Che66]{Cheney:66}
Elliott Cheney.
\newblock {\em {Introduction to Approximation Theory}}.
\newblock McGraw-Hill, New York, New York, 1966.

\bibitem[DDS14]{de2014learning}
Anindya De, Ilias Diakonikolas, and Rocco~A Servedio.
\newblock Learning from satisfying assignments.
\newblock In {\em Proceedings of the twenty-sixth annual ACM-SIAM symposium on Discrete algorithms}, pages 478--497. SIAM, 2014.

\bibitem[DMN19]{DMN19}
Anindya De, Elchanan Mossel, and Joe Neeman.
\newblock Junta correlation is testable.
\newblock In David Zuckerman, editor, {\em 60th {IEEE} Annual Symposium on Foundations of Computer Science, {FOCS} 2019, Baltimore, Maryland, USA, November 9-12, 2019}, pages 1549--1563. {IEEE} Computer Society, 2019.

\bibitem[DMP99]{dommispit99}
Carlos Domingo, Nina Mishra, and Leonard Pitt.
\newblock {Efficient Read-Restricted Monotone CNF/DNF Dualization by Learning with Membership Queries}.
\newblock {\em Machine Learning}, 37(1):89--110, 1999.

\bibitem[ELR15]{ELR15}
Matthias Ernst, Maciej Liskiewicz, and R{\"{u}}diger Reischuk.
\newblock {Algorithmic Learning for Steganography: Proper Learning of k-term {DNF} Formulas from Positive Samples}.
\newblock In {\em Algorithms and Computation - 26th International Symposium (ISAAC)}, volume 9472 of {\em Lecture Notes in Computer Science}, pages 151--162, 2015.

\bibitem[Fel07]{Feldman07}
Vitaly Feldman.
\newblock Attribute-efficient and non-adaptive learning of parities and {DNF} expressions.
\newblock {\em J. Mach. Learn. Res.}, 8:1431--1460, 2007.

\bibitem[Fel12]{Feldman12}
Vitaly Feldman.
\newblock {Learning {DNF} Expressions from Fourier Spectrum}.
\newblock In {\em {COLT} 2012 - The 25th Annual Conference on Learning Theory}, volume~23 of {\em {JMLR} Proceedings}, pages 17.1--17.19. JMLR.org, 2012.

\bibitem[FH98]{FeigelsonHellerstein98}
Aaron Feigelson and Lisa Hellerstein.
\newblock Conjunctions of unate {DNF} formulas: Learning and structure.
\newblock {\em Inf. Comput.}, 140(2):203--228, 1998.

\bibitem[FKR{\etalchar{+}}04]{FKRSS03}
Eldar Fischer, Guy Kindler, Dana Ron, Shmuel Safra, and Alex Samorodnitsky.
\newblock Testing juntas.
\newblock {\em J. Computer \& System Sciences}, 68(4):753--787, 2004.

\bibitem[Gav03]{Gavinsky:03}
Dmitry Gavinsky.
\newblock Optimally-smooth adaptive boosting and application to agnostic learning.
\newblock {\em Journal of Machine Learning Research}, 4:101--117, 2003.

\bibitem[Han91]{Hancock:91}
Thomas Hancock.
\newblock {Learning $2\mu$ DNF formulas and $k \mu$ decision trees}.
\newblock In {\em Proceedings of the Fourth Annual Workshop on Computational Learning Theory}, pages 199--209, 1991.

\bibitem[HKSS12]{HKSS12}
Lisa Hellerstein, Devorah Kletenik, Linda Sellie, and Rocco~A. Servedio.
\newblock Tight bounds on proper equivalence query learning of {DNF}.
\newblock In {\em {COLT} 2012 - The 25th Annual Conference on Learning Theory}, volume~23 of {\em {JMLR} Proceedings}, pages 31.1--31.18, 2012.

\bibitem[HM91]{HancockMansour:91}
Thomas Hancock and Yishay Mansour.
\newblock Learning monotone $k$-$\mu$ {DNF} formulas on product distributions.
\newblock In {\em Proceedings of the Fourth Annual Conference on Computational Learning Theory}, pages 179--193, 1991.

\bibitem[HR05]{HellersteinRaghavan:05}
Lisa Hellerstein and Vijay Raghavan.
\newblock {Exact learning of DNF formulas using DNF hypotheses}.
\newblock {\em Journal of Computer \& System Sciences}, 70(4):435--470, 2005.

\bibitem[ITW21]{ITW21}
Vishnu Iyer, Avishay Tal, and Michael Whitmeyer.
\newblock Junta distance approximation with sub-exponential queries.
\newblock In Valentine Kabanets, editor, {\em 36th Computational Complexity Conference, {CCC} 2021, July 20-23, 2021, Toronto, Ontario, Canada (Virtual Conference)}, volume 200 of {\em LIPIcs}, pages 24:1--24:38. Schloss Dagstuhl - Leibniz-Zentrum f{\"{u}}r Informatik, 2021.

\bibitem[Jac97]{jac97}
Jeffrey~C. Jackson.
\newblock {An efficient membership-query algorithm for learning DNF with respect to the uniform distribution}.
\newblock {\em Journal of Computer and System Sciences}, 55(3):414--440, 1997.

\bibitem[JLSW11]{JLSW11:dam}
Jeffrey~C. Jackson, Homin~K. Lee, Rocco~A. Servedio, and Andrew Wan.
\newblock {Learning Random Monotone DNF}.
\newblock {\em Discrete Applied Mathematics}, 159(5):259--271, 2011.

\bibitem[JS06]{JacksonServedio:06long}
Jeffrey~C. Jackson and Rocco~A. Servedio.
\newblock On learning random {DNF} formulas under the uniform distribution.
\newblock {\em Theory of Computing}, 2(8):147--172, 2006.
\newblock (Preliminary version in RANDOM 2005).

\bibitem[JTY02]{JTY:02}
Jeffrey~C. Jackson, Christino Tamon, and Tomoyuki Yamakami.
\newblock Quantum {DNF} learnability revisited.
\newblock In {\em Proceedings of the Eighth Annual International Conference on Computing and Combinatorics (COCOON)}, pages 595--604, 2002.

\bibitem[Kha94]{Khardon:94}
Roni Khardon.
\newblock On using the {F}ourier transform to learn disjoint {DNF}.
\newblock {\em Information Processing Letters}, 49:219--222, 1994.

\bibitem[Kiv08]{Kivinen2008}
Jyrki Kivinen.
\newblock {\em Attribute-Efficient Learning}, pages 77--78.
\newblock Encyclopedia of Algorithms. Springer US, Boston, MA, 2008.

\bibitem[KMSP94]{KMP:94}
Ludek Ku{\v c}era, Alberto Marchetti-Spaccamela, and Marco Protassi.
\newblock On learning monotone {DNF} formulae under uniform distributions.
\newblock {\em Information and Computation}, 110:84--95, 1994.

\bibitem[KR93]{KushilevitzRoth:93}
Eyal Kushilevitz and Dan Roth.
\newblock On learning visual concepts and {DNF} formulae.
\newblock In {\em Proceedings of the Sixth Annual Conference on Computational Learning Theory}, pages 317--326, 1993.

\bibitem[KS03]{KlivansServedio:03ml}
Adam Klivans and Rocco~A. Servedio.
\newblock Boosting and hard-core sets.
\newblock {\em Machine Learning}, 53(3):217--238, 2003.

\bibitem[KS04]{KlivansServedio:04jcss}
Adam Klivans and Rocco~A. Servedio.
\newblock {Learning {DNF} in time {$2^{\tilde{O}(n^{1/3})}$}}.
\newblock {\em Journal of Computer \& System Sciences}, 68(2):303--318, 2004.

\bibitem[Kus97]{Kushilevitz:97}
Eyal Kushilevitz.
\newblock {A simple algorithm for learning $O(\log n)$-term {DNF}}.
\newblock {\em Information Processing Letters}, 61(6):289--292, 1997.

\bibitem[LCS{\etalchar{+}}18]{liu2018distribution}
Zhengyang Liu, Xi~Chen, Rocco~A Servedio, Ying Sheng, and Jinyu Xie.
\newblock Distribution-free junta testing.
\newblock {\em ACM Transactions on Algorithms (TALG)}, 15(1):1--23, 2018.

\bibitem[Lit87]{lit87}
Nick Littlestone.
\newblock Learning quickly when irrelevant attributes abound: A new linear-threshold algorithm.
\newblock {\em Machine Learning}, 2(4):285--318, 1987.

\bibitem[LLR19]{LLR19}
Maciej Liskiewicz, Matthias Lutter, and R{\"{u}}diger Reischuk.
\newblock {Proper learning of \emph{k}-term {DNF} formulas from satisfying assignments}.
\newblock {\em J. Comput. Syst. Sci.}, 106:129--144, 2019.

\bibitem[MP68]{MinskyPapert:68}
M.~Minsky and S.~Papert.
\newblock {\em {{P}erceptrons: An introduction to computational geometry}}.
\newblock MIT Press, Cambridge, MA, 1968.

\bibitem[NP24]{nadimpalli2024optimal}
Shivam Nadimpalli and Shyamal Patel.
\newblock Optimal non-adaptive tolerant junta testing via local estimators.
\newblock In {\em Proceedings of the 56th Annual ACM Symposium on Theory of Computing}, pages 1039--1050, 2024.

\bibitem[O'D14]{o2014analysis}
Ryan O'Donnell.
\newblock {\em Analysis of boolean functions}.
\newblock Cambridge University Press, 2014.

\bibitem[PH89]{PagalloHaussler89}
Giulia Pagallo and David Haussler.
\newblock {A greedy method for learning $\mu$-DNF functions under the uniform distribution}.
\newblock Technical report, University of California at Santa Cruz, UCSC-CRL-89-12, 1989.

\bibitem[PR95]{PR95}
Krishnan Pillaipakkamnatt and Vijay Raghavan.
\newblock Read-twice {DNF} formulas are properly learnable.
\newblock {\em Inf. Comput.}, 122(2):236--267, 1995.

\bibitem[PR96]{PR96}
Krishnan Pillaipakkamnatt and Vijay Raghavan.
\newblock On the limits of proper learnability of subclasses of {DNF} formulas.
\newblock {\em Mach. Learn.}, 25(2-3):237--263, 1996.

\bibitem[RS10]{RazborovSherstov10}
Alexander~A. Razborov and Alexander~A. Sherstov.
\newblock The sign-rank of {AC}\({}^{\mbox{0}}\).
\newblock {\em {SIAM} J. Comput.}, 39(5):1833--1855, 2010.

\bibitem[Sel08]{Sellie:08}
Linda Sellie.
\newblock {Learning Random Monotone DNF Under the Uniform Distribution}.
\newblock In {\em Proc. 21st Colt}, pages 181--192, 2008.

\bibitem[Sel09]{Sellie-2009}
Linda Sellie.
\newblock Exact learning of random {DNF} over the uniform distribution.
\newblock In {\em {Proceedings of the 41st Annual {ACM} Symposium on Theory of Computing (STOC)}}, pages 45--54, 2009.

\bibitem[Ser04]{Servedio:04iandc}
Rocco~A. Servedio.
\newblock {On learning monotone {DNF} under product distributions}.
\newblock {\em Information and Computation}, 193(1):57--74, 2004.

\bibitem[SM00]{SakaiMaruoka:00}
Yoshifumi Sakai and Akira Maruoka.
\newblock Learning monotone log-term {DNF} formulas under the uniform distribution.
\newblock {\em Theory of Computing Systems}, 33:17--33, 2000.

\bibitem[TT99]{TaruiTsukiji:99}
Jun Tarui and Tatsuie Tsukiji.
\newblock Learning {DNF} by approximating inclusion-exclusion formulae.
\newblock In {\em Proceedings of the Fourteenth Conference on Computational Complexity}, pages 215--220, 1999.

\bibitem[Val84]{Valiant:84}
Leslie Valiant.
\newblock A theory of the learnable.
\newblock {\em Communications of the ACM}, 27(11):1134--1142, 1984.

\bibitem[Ver90]{ver90}
Karsten~A. Verbeurgt.
\newblock Learning {DNF} under the uniform distribution in quasi-polynomial time.
\newblock In Mark~A. Fulk, editor, {\em Conference on Learning Theory}, pages 314--326. Morgan Kaufmann, 1990.

\bibitem[Ver98]{Verbeurgt:98}
Karsten~A. Verbeurgt.
\newblock Learning sub-classes of monotone {DNF} on the uniform distribution.
\newblock In {\em Proceedings of the Ninth Conference on Algorithmic Learning Theory}, pages 385--399, 1998.

\end{thebibliography}
\end{flushleft}

\end{document}